\documentclass[sigconf, review=false, authorversion=false, balance=false, nonacm]{acmart}
\usepackage{popets}
\usepackage{custompackages}
\usepackage{ulem}

\setcopyright{none}
\copyrightyear{Preprint. Under review}

\acmYear{}
\acmVolume{}
\acmNumber{d}
\acmDOI{}
\acmISBN{}
\acmConference{}
\settopmatter{printacmref=false,printccs=false,printfolios=true}


\begin{document}

\title{Private and Collaborative Kaplan-Meier Estimators}


\author{Shadi Rahimian}
\affiliation{%
 \institution{CISPA Helmholtz Center for Information Security}
 \country{}
 }
\email{shadi.rahimian@cispa.de}

\author{Raouf Kerkouche}
\affiliation{%
 \institution{CISPA Helmholtz Center for Information Security}
 \country{}
  }
\email{raouf.kerkouche@cispa.de}
  
\author{Ina Kurth}
\affiliation{%
 \institution{DKFZ German Cancer Research Center}
 \country{}
  }
\email{ina.kurth@dkfz-heidelberg.de}

\author{Mario Fritz}
\affiliation{%
 \institution{CISPA Helmholtz Center for Information Security}
 \country{}
  }
\email{fritz@cispa.de}


\begin{abstract}
Kaplan-Meier estimators are essential tools in survival analysis, capturing the survival behavior of a cohort. Their accuracy improves with large, diverse datasets, encouraging data holders to collaborate for more precise estimations. However, these datasets often contain sensitive individual information, necessitating stringent data protection measures that preclude naive data sharing.

In this work, we introduce two novel differentially private methods that offer flexibility in applying differential privacy to various functions of the data. Additionally, we propose a synthetic dataset generation technique that enables easy and rapid conversion between different data representations. Utilizing these methods, we propose various paths that allow a joint estimation of the Kaplan-Meier curves with strict privacy guarantees. Our contribution includes a taxonomy of methods for this task and an extensive experimental exploration and evaluation based on this structure. We demonstrate that our approach can construct a joint, global Kaplan-Meier estimator that adheres to strict privacy standards ($\varepsilon = 1$) while exhibiting no statistically significant deviation from the nonprivate centralized estimator.
\end{abstract}

\keywords{survival analysis, Kaplan-Meier estimators, differential privacy, collaborative learning}

\maketitle

\section{Introduction}
\label{sec:introcution}
Survival analysis, or time-to-event analysis~\cite{kleinbaum2012survival}, encompasses methods that provide statistics on the survival characteristics of a population. It is employed in various fields such as medical research to predict patient mortality~\cite{goldhirsch1989costs,brenner2002long}, in finance to model customer defaults or service unsubscriptions~\citep{baesens2005neural, dirick2017time, lu2002predicting}, and generally to study population behavior over time for specific events. A widely used statistic in this field is the Kaplan-Meier estimator~\cite{goel2010understanding}, a nonparametric tool that approximates the survival probability of a population directly from survival data. This estimator is particularly valuable in the medical field as it allows straightforward analysis of the effects of treatments or markers on survival outcomes without complex formulations.

The effectiveness and precision of Kaplan-Meier estimators in modeling the true survival probability can be maximized when they are constructed from large datasets. However, individual data centers and research institutes often lack access to such comprehensive datasets, prompting the need for collaborative efforts among multiple data holders to develop more robust estimators. Despite the benefits of collaboration, data protection regulations like the General Data Protection Regulation (GDPR)~\cite{GDPR} impose strict limitations~\cite{SUP,GDPR}, preventing the straightforward sharing of raw data and ensuring the privacy of data contributors, such as patients.

Attempts to address security concerns with the use of a collaborative Kaplan-Meier estimator have so far only utilized secure aggregation and encryption schemes~\cite{froelicher2021truly, vogelsang2020secure, von2021privacy, xi2022review}. However, these approaches are computationally intensive and time-consuming and do not scale effectively with an increasing number of collaborators. Additionally, they fail to provide privacy guarantees for the released global model. An adversary with access to aggregated statistics could still perform attacks such as reidentification~\cite{el2011systematic, rocher2019estimating} or inference~\cite{backes2016membership, homer2008resolving}, compromising the privacy of data contributors.

A theoretically robust solution to ensure individual privacy within a dataset is differential privacy~\citep{Dwork2014book} (DP). DP applies controlled randomization to data, functions of data, or aggregated statistics to safeguard individual privacy while allowing the extraction of summary statistics. An interesting property of differential privacy is that an adversary, with access to any auxiliary information, is not able to infer further information from any function applied on the output of a DP mechanism. This is known as the post-processing property of differential privacy.

Studies that attempt to combine differential privacy and Kaplan-Meier estimators have been very limited so far~\citep{gondara2020differentially} and focus on protecting the count numbers and do not propose any solution when we do \textit{not} have access to the count numbers at each time. The previous method also does not offer protection for the specific \textit{times} of events. To date, no work has suggested a differentially private framework that can facilitate collaboration for this problem. 

In this work, we first introduce two new differentially private methods that can be applied on different functions of survival data, and based on our methods, we suggest various paths that collaborators can take to privately build a joint global Kaplan-Meier estimator. Our paths offer great flexibility for the preferred shared information in a collaborative system and are easy to apply and fast to compute. In summary:  
\begin{itemize}
    \item We present the first approach to the problem of privacy-preserving joint survival estimation over an aggregate of clients and provide a systematic analysis of how to achieve this global model.
    \item We propose two differentially private methods that local clients can utilize for the privacy of their data. We then suggest multiple paths that these clients can propagate their private information through, in order to construct a final joint KM estimator. We are able to achieve good utility compared to the centralized setting at a high privacy level ($\varepsilon=1$).
    \item We are able to release client-level differentially private surrogate datasets which enable us to construct an accurate, private and joint Kaplan-Meier estimator by pooling these private datasets. 
\end{itemize}

\section{Background}\label{sec:background}


\subsection{Survival Analysis and Kaplan-Meier (KM) Estimators}
\label{sec:km-definition}
Survival analysis is the collection of statistical methods that aim to model and predict the time duration to an event of interest for a set of data points. As an example, the events of interest in medical survival analysis might be the time it takes for a patient to die from an initial point when the patient enters a study, the time to metastasis, time to relapse, etc.

The survival analysis dataset is in the form of $D=\{t^i, e^i\}_{i=1}^{N}$ where $t^i$ is the event time for the data point $i$ and $e^i\in \{0, 1\}$ is the corresponding type of event for the data point $i$. We say that a data point is \textit{right censored} when $e^i=0$. This happens when an individual is excluded from the study, usually for reasons other than the event of interest, or when the event of interest does not occur until the maximum study time $\tm$. When $e^i=1$, the event of interest occurs for the data point $i$.


Let $t^* \geq 0$ be a random variable. The survival function at time $t$ is defined as the probability of the event of interest $t^*$ happening after $t$:
\begin{eqnarray}
S(t) = \pr(t^* > t)
\label{eq:survival}
\end{eqnarray}

The survival function is a smooth non-increasing curve over time and its value is bound to $[0,1]$. However, in practice, to model the survival function based on a finite number of data points, we need to estimate the value of $S(t)$. The Kaplan-Meier (KM) estimator~\cite{kaplan1958nonparametric} $\s$ is a nonparametric step function of data, used to estimate the survival function:
\begin{eqnarray}
\s(t) = \prod_{t'\leq t}\frac{r_{t'} - d_{t'}}{r_{t'}}
\label{eq:kmestimator}
\end{eqnarray}
where $r_t$ is the number of datapoints at risk or more commonly known as the \textit{risk set} (those that have not experienced any type of event) at time $t$ and $d_t$ is the number of data points experiencing the event of interest (i.e. $e=1$) at time $t$. Here, we assume that there are $T$ distinct times of event in the whole dataset: $t'\in\{0=t_0, t_1, t_2, ...,t_{T-1}=\tm\}$. In practice, we can discretize the times of events with an equidistant grid with bin size $b$ and calculate the $\s(t)$ based on the  number of events that occur within each grid. 

\subsection{Event Probability Mass Function}
\label{sec:probs-definition}
As is evident from Equations~\ref{eq:survival} and~\ref{eq:kmestimator}, the Kaplan-Meier function estimates the probability of the event up to a certain point in time. This is a restrictive view, and instead we might want to measure the probability at each specific time or time interval. 
For this reason, we also consider the closely related concept of probability mass function:
\begin{eqnarray}
y(t) = \pr(t^* = t|x)
\end{eqnarray}
which represents the probability that a new data point $x$ will experience the event at time $t$. Throughout this paper, we use probability mass function and \textit{probability function} or simply \textit{probability}, interchangeably. The true probability for discretized times of events, $t\in\{0, t_1, ..., \tm\}$, and with the assumption that no event happens at time $t_0=0$, can be approximated by an estimator~\cite{lee2018deephit, kvamme2019time} $\yh$:
\begin{eqnarray}
\label{eq:s2y}
\yh(t_j) = \begin{cases}
0 & t_j=0\\
\s(t_{j-1}) - \s(t_j) & t_1\leq t_j \leq \tm \\
1 - \sum_{t'\leq  \tm} y(t') & t_j=\tm + 1
\end{cases}
\end{eqnarray}
\begin{eqnarray}
\label{eq:y2s}
\s(t_j) = 1 - \sum_{t'\leq t_j}\yh(t')
\end{eqnarray}
The probability mass function estimator $\yh$ shows the overall probability of incident during each time interval and it contains one element more than $\s(t)$. This extra element $\yh(\tm+1)$ is considered to capture the probability that the event will occur beyond the end time of the study~\cite{kvamme2021continuous}. With this extra element the sum of all the elements in the $\yh$ vector should be 1.0 (i.e. $\sum_{t_j=0}^{\tm+1} \yh(t_j) = 1$) as is expected from a probability mass function. As we can see from Equations~\ref{eq:s2y} and~\ref{eq:y2s}, the conversion between the estimator for probability mass function $\yh$, and the Kaplan-Meier estimator $\s$ is straightforward and fast. This gives us the opportunity to convert between the two when we need a different viewpoint on the survival status of the population.
\begin{figure}
    \centering
    \includegraphics[width=0.35\textwidth]{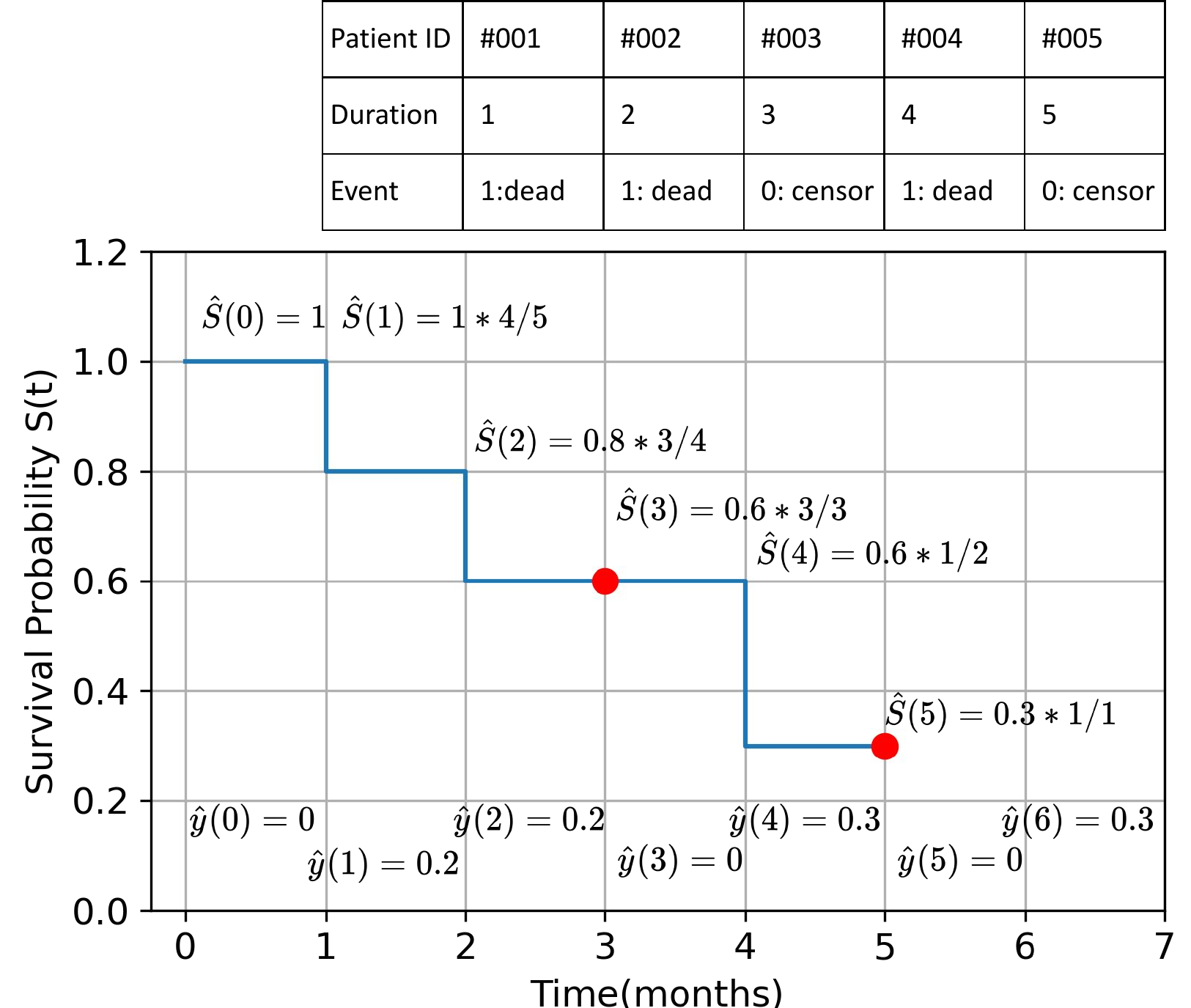}
    \caption{A simple illustrative example of Kaplan-Meier and probability estimators for a dataset of 5 individuals.}
    \label{fig:toy}
\end{figure}
To better demonstrate the relationship between these functions, we provide a toy example in Figure~\ref{fig:toy} for a small dataset of only 5 individuals. Here $t \in \{0, 1, 2, 3, 4, 5\}$ months and 2 individuals are censored at times $t=3$ and $t=5=\tm$ (shown with red circles on the survival plot).  At each step $t$ we have $\s(t) = \s(t-1)\times \frac{r_t-d_t}{r_t}$. Notice that the value of the survival function does not change when a point is censored; however, the individuals that are censored are not considered in the risk set of the next time step. 
Indeed, the probability mass function only models the probability of the events of interest happening at each time step. 

\subsection{Differential Privacy (DP)}
\label{sec:dp-definition}

The goal of this paper is to build a global survival model for the collection of data from multiple data owners. However, this collaboration now carries the risk of privacy leakage through these shared data or shared data statistics. Differential privacy~\cite{Dwork2014book} is the standard method for mathematically restricting the probability of information leakage from the data or a function of the data. DP adds calibrated randomness to the data or its functions such that the general statistics inferred from the dataset remain accurate but the sensitive information of individual data points is suppressed. There will always be a privacy-utility trade-off: the more randomness is added, the more private the algorithm and less accurate the statistics learned from the collection of the data, and vice versa. Thus, we always strive to find an operating point which offers the best privacy-utility trade-off.  

\begin{definition}[$\varepsilon$-Differential Privacy~\citep{Dwork2014book}]
\label{def:dp}
A randomized algorithm $\mathcal{A}$ is $\varepsilon$-differentially private, if for any two neighboring datasets $D$ and $D'$ and for any $S\subseteq \mathit{Range}(\mathcal{A})$ we have:
$$
\Pr(\mathcal{A}(D) \in S) \leq e^{\varepsilon} \Pr(\mathcal{A}(D') \in S)
$$
\end{definition}
Intuitively, this guarantees that an adversary, provided with the output of $\mathcal{A}$, can draw almost the same conclusion (up to ${\varepsilon}$) about whether dataset $D$ or $D'$ was used. That is, for any record owner, a privacy breach is unlikely to be due to its participation in the dataset.

In \textbf{\textit{bounded}} DP, $D'$ can be obtained from $D$ by changing the value of exactly one data point. And in \textbf{\textit{unbounded}} DP, $D'$ can be obtained from $D$ by adding or removing one data point. 

Throughout this paper, we choose to work only with \textit{bounded} differential privacy. Note that in the bounded setting, the neighboring datasets $D$ and $D'$ have the same fixed size.  

\subsubsection{Laplace Mechanism}
\label{sec:laplace}
As explained in Definition~\ref{def:dp}, a randomization process is necessary for differential privacy. There are many mechanisms that can be applied to data or functions of data to make these differentially private. Here we focus on the so-called \textit{Laplace mechanism}. But we first need to define the \textit{global sensitivity} of a function~\citep{Dwork2014book}:
\begin{definition}[Global $L_p$-sensitivity] 
\label{def:global-sens}
For any function $f:\mathcal{D} \rightarrow \mathbb{R}^ k$, and all possible neighboring datasets $D$ and $D'$, the $L_p$-sensitivity of $f$ is
$\Delta_p f = \max_{D, D'} || f(D)-f(D') ||_p$, where $||\cdot||_p$ denotes the $L_p$-norm.\vspace*{-0.15cm}
\end{definition}
The Laplace Mechanism~\citep{Dwork2014book} 
consists of adding Laplace noise to the true output of a function, in order to make the function differentially private. 
\begin{definition}[Laplace Mechanism~\citep{Dwork2014book}]
\label{def:laplace-mechanism}
For any function $f:\mathcal{D} \rightarrow \mathbb{R}^k$, the randomized function $\mathcal{A}$:
$$
\mathcal{A}(f(.), \varepsilon) = f + (\mathcal{L}_1, ...,\mathcal{L}_k)
$$
is differentially private. Where $\mathcal{L}_i$ are drawn independently and randomly from a Laplace distribution centered on 0, with $ 
\mathsf{pdf}_{\mathcal{L}(0, l)}(x) = \frac{1}{2l} \exp\left(-\frac{\lvert x \rvert}{l}\right) 
$ where the scale parameter $l$ depends on the sensitivity through $l=\frac{\Delta_1 f}{\varepsilon}$.
\end{definition}
Differential privacy is immune to postprocessing (closure under postprocessing); this means that an adversary cannot compute a function of the output of a differentially private mechanism $\mathcal{A}$ and make it less differentially private.
\begin{theorem}[Post-Processing Property~\citep{Dwork2014book}] 
\label{theorem:postprocess}
Let $\mathcal{A}$ be an $\varepsilon-$DP privacy mechanism which assigns a value $\mathit{Range}(\mathcal{A})$ to a dataset $D$. Let $\mathcal{B}$ be an arbitrary randomized mapping that takes as input $O \in \mathit{Range}(\mathcal{A})$ and returns $O' \in \mathit{Range}(\mathcal{B})$. Then $\mathcal{B} \circ \mathcal{A} $ is also $\varepsilon$-differentially private.

\end{theorem}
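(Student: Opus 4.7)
The plan is to reduce the statement to the defining $\varepsilon$-DP inequality for $\mathcal{A}$ via a preimage argument, first treating the case where $\mathcal{B}$ is deterministic and then lifting to the randomized case. Fix neighboring datasets $D, D'$ and a measurable set $S \subseteq \mathit{Range}(\mathcal{B})$. In the deterministic case, I would define the preimage $T = \{o \in \mathit{Range}(\mathcal{A}) : \mathcal{B}(o) \in S\}$. Then the events $\{\mathcal{B}(\mathcal{A}(D)) \in S\}$ and $\{\mathcal{A}(D) \in T\}$ coincide, so their probabilities are equal. Applying Definition~\ref{def:dp} to the mechanism $\mathcal{A}$ on the measurable set $T$ immediately gives $\Pr(\mathcal{A}(D) \in T) \leq e^{\varepsilon}\,\Pr(\mathcal{A}(D') \in T)$, which is exactly the $\varepsilon$-DP guarantee for $\mathcal{B}\circ\mathcal{A}$.

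To handle general randomized $\mathcal{B}$, I would represent $\mathcal{B}$ as a mixture of deterministic maps: introduce an auxiliary random seed $R$, drawn independently of $\mathcal{A}$ and of the dataset, such that $\mathcal{B}(o) = \tilde{\mathcal{B}}(o, R)$ where $\tilde{\mathcal{B}}(\cdot, r)$ is deterministic for each realization $r$. Conditioning on $R = r$ reduces each slice to the deterministic case, yielding $\Pr(\tilde{\mathcal{B}}(\mathcal{A}(D), r) \in S) \leq e^{\varepsilon}\,\Pr(\tilde{\mathcal{B}}(\mathcal{A}(D'), r) \in S)$ for every $r$. Since the $e^{\varepsilon}$ factor is a constant and $R$ is independent of the datasets, integrating both sides against the law of $R$ preserves the inequality and delivers $\varepsilon$-DP for the composition $\mathcal{B}\circ\mathcal{A}$.

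The only delicate point, and the place I would spend the most care in a formal write-up, is the lifting step: one must justify that $\mathcal{B}$'s internal randomness can be represented by a seed $R$ that is genuinely independent of $D$ and of $\mathcal{A}$'s randomness, and that the preimage $T$ is measurable with respect to the $\sigma$-algebra used in the DP definition of $\mathcal{A}$. Both conditions are standard in the usual probabilistic formulation of DP, but they are what makes the "post-processing is free" slogan rigorous. Once these are in place, the main inequality is essentially a one-line consequence of applying the DP guarantee to a cleverly chosen preimage event.
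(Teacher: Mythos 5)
Your proof is correct: it is the standard argument from Dwork and Roth, handling deterministic $\mathcal{B}$ by applying the $\varepsilon$-DP guarantee of $\mathcal{A}$ to the preimage $T=\mathcal{B}^{-1}(S)$ and then lifting to randomized $\mathcal{B}$ by writing it as a mixture of deterministic maps and integrating the per-seed inequality. The paper itself states Theorem~\ref{theorem:postprocess} as a cited result from~\citep{Dwork2014book} and gives no proof, so there is nothing to diverge from; your write-up, including the measurability and seed-independence caveats, is exactly the canonical treatment.
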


\section{Differentially Private Survival Statistics Estimators}
\label{sec:method}
In this section, we explain the methods that can be used to make a survival dataset or its functions differentially private. Based on these methods, we can later build paths that enable a collaborative learning system. We will first review a previously suggested method -which we call \dpmo- that perturbs the matrix of count numbers at each unique time of event. We then introduce our two novel methods, \dps and \dpy, which are more flexible and can be applied directly to the Kaplan-Meier function and the probability estimator, respectively. We lay out all 3 methods with the assumption of a bounded DP. 

\subsection{\dpm}
\label{sec:dpmdef}
The only available baseline method applies DP directly to the number counts $d_\ti$, number of censored points $c_\ti$ and the risk set $r_\ti$ in the dataset. In this method suggested by~\cite{gondara2020differentially}, first a partial matrix $M=[r_0, d_0, c_0, d_{t_1}, c_{t_1}, ...., d_{\tm}, c_{\tm}]$ of the number of events and the number of censoring at each unique incident time $d_\ti$ and the total number of individuals at the initial time $r_0$ is constructed, then Laplace noise with sensitivity of 2 is directly added to these numbers. The authors use this sensitivity, since adding or removing one data point from the data set will at most change the count number by 2 (simultaneously in the values $r_0$ and $d_\ti$ or $c_\ti$). To make this method comparable with our upcoming suggested DP methods, we formulate it in the bounded DP setting, where the size of neighboring datasets remains the same. This means that we keep the total number of points $r_0$ unperturbed and fixed. The sensitivity still remains 2 as the effect of changing one data point can change the norm $L_1$ by at most 2. So we can construct the DP partial matrix $M'$ as follows: 
\begin{alignat}{3}
\resizebox{0.9\hsize}{!}{$
M' = M + [0, \mathcal{L}_{d_0}, \mathcal{L}_{c_0},..., \mathcal{L}_{d_{\tm}}, \mathcal{L}_{c_{\tm}}] \qquad \text{where}\ \ \mathcal{L}_j \sim \mathcal{L}(0, 2/\varepsilon)
$}
\end{alignat}
After obtaining the noisy $d'_\ti$ and $c'_\ti$ values, the remaining number of at risk is calculated by:
\begin{eqnarray}
r'_{t_j} = r'_{t_{j-1}} - (d'_{t_{j-1}} + c'_{t_{j-1}})\qquad \forall t_j\in\{t_1,...,\tm\}
\label{eq:dpmpp}
\end{eqnarray}
We call this method \textit{differentially private matrix} or \dpmo for short.

A major point of concern when working with \dpmo is that it only perturbs the count numbers at distinct recorded times of events. This means that the algorithm would not perturb the times of incident; hence the times of the events are {\it not} protected by this method, and this still poses privacy concerns for the dataset. To correct this issue, we use a preprocessing step in which we discretize the times of events and work with the count numbers $d_\ti$ or $c_\ti$ accumulated per time bin. We call this improved version of the algorithm \dpm.

\subsection{\dps}
\label{sec:dpsurvdef}
In our first proposed method, we strive to offer more flexibility with respect to the function on which differential privacy is applied. This method, which we call \textit{ differentially private Kaplan-Meier estimator} or \dps for short, directly tweaks the Kaplan-Meier survival estimator to make it private and no access to the dataset is needed. We also address the issue of privacy of times of events (as discussed in Section~\ref{sec:dpmdef}) by sampling the KM estimator at equidistant time intervals. By doing so, the function no longer contains the sensitive distinct times of events, and applying differential privacy on this vector now also protects times.

Consider $\s(t)$ the vector of survival estimates that contains the sampled values of the continuous KM function at equidistant time intervals. The $L_1$ and $L_2$ sensitivities of this function are as follows: 
\begin{theorem}
\label{theorem:ssensitivity}
Let's denote the total number of data points in the dataset by $N$, the total number of censored points by $C$ and the total number of time bins in the equidistant grid over $t=0$ till $t=\tm$ by $T$, then when $C=0$: 
{\small
\begin{alignat}{3}
&&&\Delta_1\s = \frac{T-1}{N},\quad \Delta_2\s = \frac{\sqrt{T-1}}{N}\nonumber
\end{alignat}
}%
\end{theorem}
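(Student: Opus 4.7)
The plan is to exploit the fact that, at $C=0$, the Kaplan-Meier estimator reduces to the empirical survival function, and then to bound its pointwise change when a single record is swapped. First I would unfold the product in Equation~\ref{eq:kmestimator}: because no point is censored, the only way an individual leaves the risk set is by experiencing the event, so $r_{k+1} = r_k - d_k$ for every bin $k$. Plugging this recurrence in collapses the product telescopically to $\s(t_j) = r_{j+1}/N$, i.e., the fraction of individuals whose event time strictly exceeds $t_j$.

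Next I would bound the difference vector $\s - \s'$ coordinate by coordinate. Under bounded DP the neighboring datasets $D$ and $D'$ differ in exactly one record, so the integer counts $r_{j+1}$ and $r'_{j+1}$ differ by at most one; this immediately yields $|\s(t_j) - \s'(t_j)| \le 1/N$ at every sampled time. At the terminal time $t=\tm$, the hypothesis $C=0$ forces every individual to have experienced the event by $\tm$ in both datasets, so $\s(\tm) = \s'(\tm) = 0$ and that coordinate contributes zero to the difference. Hence the difference vector is supported on at most $T-1$ coordinates, each of magnitude at most $1/N$.

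Summing these per-coordinate bounds gives $\Delta_1 \s \le (T-1)/N$, and the elementary inequality $\|v\|_2 \le \sqrt{k}\,\|v\|_\infty$ applied to a vector supported on at most $k = T-1$ coordinates with entries bounded by $1/N$ gives $\Delta_2 \s \le \sqrt{T-1}/N$. Tightness is witnessed by relocating a single event from the first bin to the last bin, which forces a constant shift of $1/N$ to appear across all $T-1$ preceding grid indices simultaneously, saturating both bounds. The only step that genuinely uses the $C=0$ hypothesis is the telescoping reduction, since the identity $r_{k+1} = r_k - d_k$ breaks as soon as a single censoring event intervenes; once $\s$ is rewritten as a counting ratio, the remaining argument is elementary norm bookkeeping. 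The main thing to verify carefully is that no subtle off-by-one slips in around the terminal bin: the fact that $\s(\tm)$ is identically zero under $C=0$ is exactly what prevents the sensitivity from picking up an extra $1/N$ term and matches the claimed bound.
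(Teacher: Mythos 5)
Your proof is correct and follows essentially the same route as the paper's: with $C=0$ the product telescopes to $\s(t_j)=r_{j+1}/N$ (the paper establishes the same closed form by induction), each coordinate then moves by at most $1/N$ under a single record swap, the terminal coordinate vanishes identically, and the bound is saturated by relocating one event between the first and last bins. If anything, your version is slightly tighter in exposition, since you bound the per-coordinate difference uniformly over all bounded-DP neighbors before exhibiting the extremal pair, whereas the paper asserts up front which neighboring pair is worst and computes only that case.
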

\begin{proof}
We provide the proofs for the sensitivity $L_1$ and $L_2$ of the Kaplan-Meier estimator $\s(t)$, in Appendix~\ref{sec:appendix-proof-s}. 
\end{proof}

Calculating sensitivities for when censored points are present in the dataset requires more care and is not as straightforward for the $\s$ function. We include our derivations for both cases of $C=0$ and $C\neq0$ in Appendix~\ref{sec:appendix-proof-s} and discuss why the latter is no longer DP. We defer solutions for a general case to future work.  

Now, let us inspect the sensitivity in the absence of censored points. Here, we see that the presence of $T$ in the numerator of $\Delta_1\s$, means that applying a simple Laplace mechanism (Definition~\ref{def:laplace-mechanism}) is not useful. This is because only for the special case of $T\ll N$, the sensitivity is reasonably small such that the DP noise would not destroy the utility. Inspired by the works of~\cite{rastogi2010differentially, kerkouche2021compression}, we take advantage of the equidistant sampling of the KM curve to first transform it to the discrete cosine transform (DCT) space~\cite{makhoul1980fast} (a refresher on DCT is provided in Appendix~\ref{sec:appendix-dct}). We can then add the DP noise to only the first $k$ coefficients of the $DCT(\s(t))$ vector, masking the remaining $T-k$ components with zero. The first coefficients of DCT capture the large-scale structure and the most condensed statistics of the signal, such as the mean value. The fine details contained in the remaining coefficients are protected from DP noise by setting them to zero.  
\begin{theorem}
\label{theorem:dctws}
Denote ${D}^k ={DCT}^k(\s(t))$ the first $k$ coefficients of the discrete cosine transform of $\s(t)$, then $\Delta_1{D}^k \leq \sqrt{k} \Delta_2\s(t)$. 
\end{theorem}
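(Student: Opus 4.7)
The plan is to exploit the fact that the truncated DCT is a linear map whose matrix has orthonormal rows, and then to chain two standard norm inequalities. Concretely, let $U \in \mathbb{R}^{T \times T}$ be the orthonormal DCT matrix, so that $DCT(x) = Ux$ and $U^\top U = U U^\top = I$. Taking the first $k$ coefficients corresponds to left-multiplication by the $k \times T$ submatrix $U_k$ consisting of the top $k$ rows of $U$. Because those rows are orthonormal, $U_k U_k^\top = I_k$, which gives the non-expansiveness $\|U_k v\|_2 \leq \|v\|_2$ for every $v \in \mathbb{R}^T$.

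The key steps, in order, are as follows. First, I invoke linearity of the DCT to write, for any pair of neighboring datasets $D, D'$,
\begin{equation*}
D^k(D) - D^k(D') \;=\; U_k\bigl(\s(D) - \s(D')\bigr).
\end{equation*}
Second, I pass from the $L_1$ to the $L_2$ norm in the $k$-dimensional output space via the elementary Cauchy--Schwarz inequality $\|w\|_1 \leq \sqrt{k}\,\|w\|_2$ valid for all $w \in \mathbb{R}^k$. Third, I apply the non-expansiveness of $U_k$ to bound $\|U_k(\s(D)-\s(D'))\|_2$ by $\|\s(D)-\s(D')\|_2$. Chaining these three facts yields
\begin{equation*}
\|D^k(D) - D^k(D')\|_1 \;\leq\; \sqrt{k}\,\|U_k(\s(D)-\s(D'))\|_2 \;\leq\; \sqrt{k}\,\|\s(D)-\s(D')\|_2.
\end{equation*}
Taking the maximum over neighboring datasets on both sides then produces $\Delta_1 D^k \leq \sqrt{k}\,\Delta_2 \s(t)$, which is exactly the claim.

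The only genuine subtlety, and where I would be most careful, is the choice of DCT normalization. The inequality as stated presupposes the orthonormal variant of DCT-II, because that is what makes the rows of $U$ into unit vectors and guarantees the operator-norm-$1$ property used in step three. If instead a non-normalized DCT convention is used (for example, DCT-II with the leading $\sqrt{2/T}$ factor absorbed elsewhere), then the bound would pick up the corresponding constant, or the statement would need to be read with respect to the scaled basis. Assuming Appendix~\ref{sec:appendix-dct} fixes the orthonormal convention, no other obstacle arises, and the argument is essentially a one-line application of linear algebra. No structural property of $\s(t)$ beyond its living in $\mathbb{R}^T$ is needed at this stage, which is what makes the bound general enough to compose with either of the sensitivity computations from Theorem~\ref{theorem:ssensitivity}.
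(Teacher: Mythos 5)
Your proof is correct and follows essentially the same route as the paper's: orthonormality of the (properly normalized) DCT gives non-expansiveness of the truncation in the $L_2$ norm, and the $\|w\|_1 \leq \sqrt{k}\|w\|_2$ inequality in $\mathbb{R}^k$ finishes the argument. You merely make explicit the linearity step and the matrix formulation that the paper leaves implicit, and your caveat about the normalization convention matches the paper's own remark that the correct normalization factors must be used.
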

\begin{proof}
DCT is a transformation into orthogonal bases~\cite{strang1999discrete} and when the correct normalization factors are used, the bases form an orthonormal basis~\cite{hernandez1996first}. Any projection on orthonormal bases preserves the $L_2$ norm of vectors, so when taking only the first $k$ coefficients, we have $\Delta_2{D}^k\leq \Delta_2\s(t)$. Using the inequality between $L_1$ and $L_2$ norms, we have $\Delta_1{D}^k \leq \sqrt{k}\Delta_2{D}^k$.
\end{proof}
So, this method adds Laplace noise with a sensitivity of $\sqrt{k}\Delta_2\s$ to $DCT(\s(t))$, with $k$ being a publicly available hyperparameter. Note that the use of an equidistant-time grid is necessary for a meaningful discrete cosine transformation of the survival estimator vector, otherwise the coefficients would not represent details at gradually growing scales, correctly. 

We outline our \dps method in Algorithm~\ref{algo:dpsurv}, for the case of $C=0$. By definition, the Kaplan-Meier estimator should be a non-increasing function. To achieve this, after adding Laplace noise, masking with zeros and transforming back to the real-time space, we apply an isotonic regression-based clipping~\cite{chakravarti1989isotonic, de1977correctness} to $\s'(t)$ which has been shown to be a more effective post-processing technique compared to naive clipping~\cite{hay2009boosting}. This step is shown in line 5 of the algorithm. 

\begin{algorithm}[h]
 \caption{\small\texttt{\\DP-Surv}\\ Kaplan-Meier estimator values $\s(t)$ sampled at equidistant times $t=\{0, ..., \tm\}$ for total of $T$ time bins, total number of points $N$, $k$ number of first coefficients of $DCT(\s(t))$, privacy parameter $\varepsilon$.}
 \label{algo:dpsurv}%
\begin{algorithmic}[1]
 \small
    \State $\Delta_2\s \leftarrow \frac{\sqrt{T-1}}{N}$ \Comment{calculate $L_2$ sensitivity}
   \State $DCT'(\s(t)) \leftarrow DCT(\s(t)) + \mathcal{L}(0, \sqrt{k}\Delta_2\s/\varepsilon)$
   \State $DCT'(\s(t))\leftarrow DCT'(\s(t))[k+1:\tm]=0$ \Comment{choose the first $k$ coefficients of $DCT'(\s(t))$ and set the rest to zeros} 
   \State $\s'(t)\leftarrow DCT^{-1}(DCT'(\s(t)))$ \Comment{apply inverse DCT}
   \State $\s'(t) \leftarrow IRP(\s'(t))$ \Comment{isotonic regression projection}
   \State\Return{$\s'(t)$}
\end{algorithmic}
\end{algorithm}

\subsection{\dpy}
\label{sec:dpprobdef}
The next privacy-preserving method that we propose adds DP randomness to the probability mass function estimator $\yh(t)$. We call this method \textit{differentially private probability estimator} or \dpy for short. Here again, we have the advantage that no direct access to data is required and the probability estimator is modified directly. We also assume an equidistant-time grid when sampling the values of the probability estimator function, to address the issue of privacy of times of events. 

Consider $\yh(t)$ the vector of probability estimates that contains the sampled values of the continuous probability estimator function in equidistant time intervals. The sensitivities $L_1$ and $L_2$ of this function are as follows: 

\begin{theorem}
\label{theorem:ysensitivity}
Let's denote the total number of data points in the dataset by $N$ and the total number of censored points by $C$, then when $C=0$: 
{\small
\begin{alignat}{3}
&&&\Delta_1 \yh = \frac{2}{N}, \quad \Delta_2 \yh = \frac{\sqrt{2}}{N}\nonumber
\end{alignat}
}%
\end{theorem}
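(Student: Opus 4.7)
The plan is to exploit the fact that, under $C=0$, the Kaplan--Meier estimator collapses to the empirical survival function, and therefore $\yh$ collapses to a normalized histogram of event times; once this reduction is in hand, both sensitivities follow from inspecting the effect of a single record change on a histogram.

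First, I would rewrite $\s(t_j)$ in closed form when $C=0$. Since no individual is censored, the risk set evolves as $r_{t_j}=r_{t_{j-1}}-d_{t_{j-1}}$, so the Kaplan--Meier product telescopes:
\begin{equation*}
\s(t_j)=\prod_{t'\leq t_j}\frac{r_{t'}-d_{t'}}{r_{t'}}=\frac{r_{t_j}-d_{t_j}}{N},
\end{equation*}
which is precisely the fraction of individuals whose event time is strictly greater than $t_j$. Substituting this into Equation~\ref{eq:s2y} shows that $\yh(t_j)$ is exactly the fraction of individuals whose event time falls in the bin indexed by $j$; the trailing coordinate $\yh(\tm+1)$ is identically $0$ on both neighbors because $C=0$ forces every event time into $(0,\tm]$, and bounded DP keeps $N$ and the censoring status constraint fixed.

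Next, I would analyze how $\yh$ changes between two neighbors $D$ and $D'$ that differ in the value of a single record. Changing that record moves its event time from some bin $a$ to some bin $b$. If $a=b$ the vector $\yh(D)-\yh(D')$ is zero; otherwise it has exactly two nonzero entries, equal to $+1/N$ and $-1/N$. Computing norms then gives
\begin{equation*}
\|\yh(D)-\yh(D')\|_1 = \frac{2}{N},\qquad \|\yh(D)-\yh(D')\|_2 = \frac{\sqrt{2}}{N},
\end{equation*}
and taking the maximum over neighboring pairs (which is achieved whenever $a\neq b$) yields $\Delta_1\yh=2/N$ and $\Delta_2\yh=\sqrt{2}/N$.

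The only subtle step is the telescoping reduction and justifying that the extra coordinate $\yh(\tm+1)$ contributes nothing: one must check that under the bounded-DP, $C=0$ regime the neighbor $D'$ also has no censored points, so every perturbed event time still lies in $(0,\tm]$ and the last coordinate remains $0$ in both datasets. Once that bookkeeping is done, the sensitivity bounds are immediate from the two-spike difference vector. I would defer the $C\neq 0$ case to the appendix discussion already promised for $\s$, since it does not arise in this statement.
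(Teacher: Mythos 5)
Your proposal is correct and matches the paper's own argument in essence: the paper likewise reduces to $\yh_{t}=d_{t}/N$ when $C=0$ (via the telescoped form $\s_k=(N-d_1-\cdots-d_k)/N$ established in its $\s$-sensitivity appendix) and then observes that a single record change produces a difference vector with exactly two nonzero entries of $\pm 1/N$, giving $\Delta_1\yh=2/N$ and $\Delta_2\yh=\sqrt{2}/N$. The only cosmetic difference is that you treat an arbitrary bin move $a\to b$ while the paper fixes the specific worst-case neighbor (moving an event from $\tm$ to $t_1$); both yield the same two-spike computation.
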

\begin{proof}
We provide the proofs for $L_1$ and $L_2$ sensitivity of the probability mass function estimator $\yh(t)$, in Appendix~\ref{sec:appendix-proof-yh}. 
\end{proof}

We show in Appendix~\ref{sec:appendix-proof-yh}, that the sensitivity calculation for when $C\neq 0$ is complicated and involves terms that make it non-DP. We again defer the investigation of a plausible sensitivity for this case to future work. 

In the absence of censored data and for a large enough dataset, $\Delta_1\yh$ is reasonably small such that we can apply the Laplace mechanism (Definition~\ref{def:laplace-mechanism}) directly to the vector $\yh(t)$. Our \dpy method is described in Algorithm~\ref{algo:dpprob}, for the cases of $C=0$. Again, we need to impose some properties on the noisy probability vector $\yh'$, after applying the DP mechanism. The probability estimator function, in general, should some up to 1 and the individual values $\yh(t)$ should be between 0 and 1, and this should also hold for $\yh'(t)$. For this purpose, we first clip the noisy values to a minimum value of 0 and then scale the whole vector by diving by the sum of all components. These are demonstrated in lines 3 and 4 of our algorithm.

\begin{algorithm}[h]
 \caption{\small\texttt{\\DP-Prob}\\ The vector of probability estimates $\yh(t)$ sampled at equidistant times $t=\{0, ..., \tm\}$ for total of $T$ time bins, total number of points $N$, privacy parameters $\varepsilon$}
 \label{algo:dpprob}%
\begin{algorithmic}[1]
 \small
    \State $\Delta_1\yh \leftarrow \frac{\sqrt{2}}{N}$ \Comment{calculate $L_1$ sensitivity}
   \State $\yh'(t) \leftarrow \yh(t) + \mathcal{L}(0, \Delta_1\yh/\varepsilon)$
   \State $\yh'(t) \leftarrow \texttt{clip}(\yh'(t), 0)$ \Comment{clip to min=0}
   \State $\yh'(t) \leftarrow \yh'(t)/\sum_{t=0}^{\tm+1} \yh'(t)$ \Comment{re-scale to make it a probability function}
   \State\Return{$\yh'(t)$}
\end{algorithmic}
\end{algorithm}


\section{Private Kaplan-Meier Estimator Across Multiple Sites}
\label{sec:paths}
In this section, we address the challenge of constructing a reliable and privacy-preserving KM curve over a dataset that is distributed across multiple sites.  

We have summarized the overall scheme of our solutions to this problem in the form of a graph in Figure~\ref{fig:overall-graph}. 
In what follows, we expand on this figure and clarify the possible routes that can be taken to arrive at the final goal of a global and private KM estimator $\s'(t)$. 

\begin{figure*}[!ht]
\centering
    \includegraphics[width=0.8\textwidth]{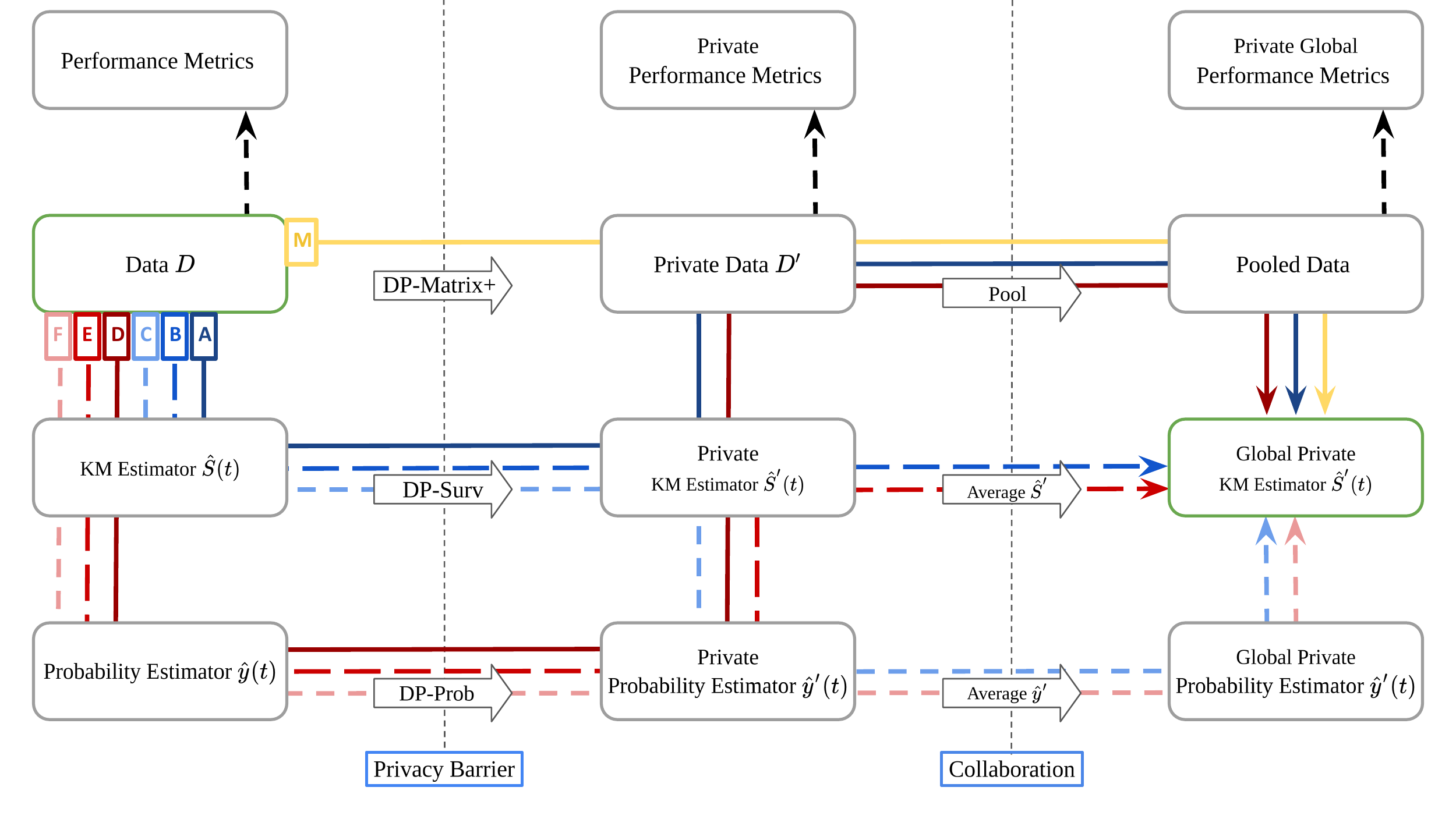}
    \caption{Overall scheme of paths that are possible to construct a collaborative private KM estimator over the union of datasets.}
    \label{fig:overall-graph}
\end{figure*}

\subsection{Vertical Movement in the Graph: Representation Conversion}
\label{sec:vertical}
A vertical movement at any of the ``stops'' in our graph will change the representation of the data. There are many reasons why one might want to move along the representations. We take advantage of different representations to add DP noise to different functions of the data, in order to assess the utility of our DP methods for a fixed level of privacy guarantee. In the end of this section we will also explain why a conversion back to a dataset is necessary to calculate the performance metrics for our methods. In the following, we will walk through conversion methods between these representations. 

\myparagraph{Data to Kaplan-Meier estimator} If a survival dataset is available (the second row from top), the KM estimator can easily be constructed using Equation~\ref{eq:kmestimator}. 

\myparagraph{Kaplan-Meier estimator to probability estimator and vice versa} If a KM estimator over a survival dataset is available (the third row from top), using Equation~\ref{eq:s2y}, the estimator for the probability mass function can easily be constructed. The conversion in the other way, from probability estimator (the lowest row) is also seamless by utilizing Equation~\ref{eq:y2s}. It is worth mentioning that no information is lost when converting between KM estimator and its counterpart, probability estimator, and we can easily convert between these two rows.

\myparagraph{Estimators to Data: Surrogate Dataset}
\label{sec:surrogate}
Kaplan-Meier estimator (and equally the probability estimator) summarizes the survival information contained in the dataset into one non-parametric curve over time~\cite{kaplan1958nonparametric, lee2018deephit, kvamme2019time}. For meta-analysis or computing more complicated metrics over the population, access to only KM/probability estimator functions is not sufficient, and we need the dataset. In our study, one might end up with access to only KM/probability functions (lower two rows) for two reasons: a) when only the KM/probability estimator is shared with other sites and b) when DP is used to modify these two functions. This motivates us to attempt to reconstruct a dataset based on the values of the probability estimator:
\begin{alignat}{3}
&&&\mathcal{R}_{\surr}: [0,1]^{\tm+1} \rightarrow \mathcal{D}
\end{alignat}
where $\mathcal{R}_{\surr}$ is a reconstruction function that takes probability values as input and outputs a surrogate dataset $D_{\surr}\subset\mathcal{D}$.

The problem of converting survival values to the corresponding dataset has been explored in, for example,~\cite{guyot2012enhanced, wei2017reconstructing}. Here, the most accurate version of the algorithm requires access to various parameters, such as number at risk at regular intervals during the time frame of the study and  total number of events $(\sum_{t_j=0}^{\tm} d_\ti)$ during the study period. When only the probability estimator or the KM estimator are provided, we do not have access to these two parameters, rather the overall probability of experiencing the event at each interval or the probability of survival at each interval, respectively. Previous work~\cite{guyot2012enhanced} states that when neither the total number nor multiple values are provided for the number at risk, we can assume that there are no censored observations. They mention that although this is a strong assumption, any other assumption about the data without further information would be just as strong.

Inspired by these arguments, we propose a simple, yet effective algorithm to construct a surrogate dataset with access to only probability mass function estimator. Algorithm~\ref{algo:surrogate} outlines the procedure. We assume that during the time frame of the study no censoring happens and the probabilities directly reflect the number of datapoints experiencing the event of interest at each time interval. As explained in Section~\ref{sec:probs-definition}, the extra element of the probability vector, $\yh({\tm+1})$, represents the probability that the event will occur after the maximum study time $\tm$. So we convert this value to censored data points at time $\tm$, as formulated in lines 9-12 of our algorithm. 

Since the conversion between the Kaplan-Meier estimator and the probability estimator is straightforward and lossless (see Section~\ref{sec:probs-definition}), when access to only KM values is granted, we can first convert to probability values and then apply Algorithm~\ref{algo:surrogate} to construct the surrogate dataset.  

\begin{algorithm}[h]
 \caption{\small\texttt{\\Surrogate Dataset Construction}    $\mathcal{R}_{\surr}$\\ Vector of survival probabilities $\yh = \{\yh_\ti\}_{t_j=0}^{\tm+1}$, tuple of datapoints $d^i=\{t^i, e^i\}$, number of data points to consider $n$, function to round to the nearest integer $round()$.}
 \label{algo:surrogate}%
\begin{algorithmic}[1]
 \small
 \State initialize empty surrogate dataset $D_{\surr}=[]$
    \For{$t_j = 0,...,\tm$} 
    \State $ num_\ti\leftarrow round(\yh_\ti*n)$
    \For{$i = 1,..., num_\ti$}
    \State $d^i=\{t_j, 1\}$
    \State$D_{\surr}$.append($d^i$)
    \EndFor
    \EndFor
    \State $num_{\tm+1} \leftarrow round(\yh_{\tm+1}*n)$
    \For{$i = 1,...,num_{\tm+1}$}
    \State $d^i=\{\tm, 0\}$
    \State $D_{\surr}$.append($d^i$)
    \EndFor
    \State\Return{$D_{\surr}$}

\end{algorithmic}
\end{algorithm}


\myparagraph{Data to Performance Metrics} As explained, when we have access to a real or surrogate dataset (the second row from top) we are much more flexible to run more complex metrics on the population to measure their survival properties. The measurement metrics will be explained in detail in Section~\ref{sec:exp-metrics} just before starting our experiments.

\subsection{Crossing the Privacy Barrier: DP methods}
\label{sec:horizontal-local}
In this section, we explain the horizontal movement in our graph across the \textit{Privacy Barrier} line (left column to the middle column). This is the stage in which each site uses differential privacy, locally, to construct a private dataset or a private function of their dataset. We look at 3 different DP methods to apply privacy, locally, as explained in Section~\ref{sec:method}. In the centralized setting, the adversary can be any external entity that has a view on any information that is released from the client's database. The adversary is \emph{passive} (i.e., honest-but-curious).

\myparagraph{\dpm}
When access to the full dataset and the times of events is provided, we can apply \dpm (see Section~\ref{sec:dpmdef}) directly to the count numbers at each event time.  

\myparagraph{\dps}
With more flexibility compared to the \dpm method, when only access to (discretized) survival functions is provided, each site can directly apply \dps (see Section~\ref{sec:dpsurvdef}) to their survival function. If the raw dataset is available, the KM curve is first constructed and then this DP method is applied to the KM function. If only a probability estimator function is provided, first the corresponding KM estimator is calculated, and then DP is applied. 

\myparagraph{\dpy}
When only access to the probability mass function is granted, DP is applied according to Section~\ref{sec:dpprobdef} to this function. If only the dataset is available, the probability function is first constructed, and then the DP noise is added to this vector. Again, when only the KM curve is provided, the conversion to the probability estimator function is easily possible through Equation~\ref{eq:s2y}.  

\subsection{Crossing the Collaboration Line}
\label{sec:horizontal-collaboration}
We will now continue in the horizontal direction, from client-level DP functions (middle column) to global, privacy-preserving survival statistics on all data (right column). The goal is to collect the private statistics from all sites and construct a KM estimator that uses the information contained in the union of the datasets of all the collaborating sites. A central server is responsible for collecting the survival statistics from all sites and aggregating them.  In the collaborative setting, the adversary can be the central server, other participants, or any other entity that has a view on any information released from the clients' side. The adversary is \emph{passive} (i.e., honest-but-curious), that is, it follows the protocol faithfully.

As explained in Section~\ref{sec:vertical}, we always have the option to convert to other representations along the vertical line. So regardless of which method we choose from Section~\ref{sec:horizontal-local}, the local sites can share one of the 3 different representations of the differentially private information with the central server for a joint calculation of the global model: 

\myparagraph{Private Data}
After applying DP to the data directly via \dpm we will be left with a differentially private dataset. However, applying \dps or \dpy only changes the KM function and the probability function, respectively. To go back to the space of datasets, we can deploy our surrogate data generation method from Section~\ref{sec:surrogate} to construct a differentially private dataset. Note that since the input of Algorithm~\ref{algo:surrogate} is the private probability estimator function, the constructed surrogate dataset will also be private due to the post-processing property of DP (Theorem~\ref{theorem:postprocess}). For when we choose the \dps method, we can convert the DP KM vector to a DP probability vector first using the Equation~\ref{eq:s2y} and then use the surrogate generation method. 
Each site can then share its private data set with the central server, and the metrics over the \textit{pooled data} can be used to construct a global and private KM estimator. 

\myparagraph{Private KM Estimators}
If we choose to convert our DP representations to a private local KM estimator, we can share this estimator with the central server. By inspecting Equation~\ref{eq:kmestimator}, for $K$ collaborating sites, we have:
\begin{eqnarray}
\s_{avg}(t) = \prod_{t'\leq t}\left(1 - \frac{\sum_{k=1}^K d_{t', k}}{\sum_{k=1}^K r_{t', k}}\right)
\end{eqnarray}
for $t \in \{0, ..., T_{\max, k}\}$ distinct times of events in the whole global dataset, $d_{t,k}$ being the number of data points experiencing the event of $e=1$ at distinct time $t$ for local site $k$ and $r_{t,k}$ being the risk set at distinct time $t$ for site $k$. We can see that it is mathematically not possible to calculate this average function solely based on the values of the local $\s'(t)$ that are shared, because we need access to the risk sets and number of events at a global level. We propose to estimate this by simply averaging the local KM estimators: 
\begin{eqnarray}
\s'_{avg}(t) =\frac{1}{N}\sum_{k=1}^Kn_k\s'_k(t)
\label{eq:avgsurv}
\end{eqnarray}
where $n_k$ is the dataset size for site $k$ and $N=\sum_{k=1}^Kn_k$ is the total number of points among all sites. Since we choose to work in the bounded differential privacy framework, as explained in Section~\ref{sec:dp-definition}, the size of the dataset can be shared publicly and we use this to make a weighted averaging over all sites. The final constructed private, global KM estimator can then be used, directly, or converted to its corresponding private surrogate dataset on the server side, to calculate the global metrics.

\myparagraph{Private Probability Estimators}
The last possible option is to use \dpy directly or to convert our private dataset or private KM estimator - which are obtained by \dpm or \dps, respectively - to a DP probability estimator $\yh'$. Unlike $\s(t)$, each of the local and private $\yh'(t)$'s is in the form of a probability mass function. In absence of auxiliary information, an effective method to combine probability mass functions is to take the average~\cite{hill2011combine} over all K sites: 
\begin{eqnarray}
\yh'_{avg}(t) = \frac{1}{N}\sum_{k=1}^K n_k\yh'_k(t)
\label{eq:avgy}
\end{eqnarray}
where again $n_k$ is the dataset size for site $k$ and $N$ is the total number of data points over all sites. The private global probability mass function can then be converted to its corresponding private surrogate dataset or the KM estimator to calculate the global metrics. 

\section{Experiments}
\label{sec:exp}
In this section we demonstrate the efficiency of our differentially private methods in a collaborative setting on real-world medical
datasets. We initially run the DP methods in a centralized setting in Section~\ref{sec:exp-centralized}. Then in Sections~\ref{sec:exp-collab} and~\ref{sec:exp-collab-uneven}, we progress to the main goal of our work, showing that our methods and suggested paths according to our workflow (as described in Figure~\ref{fig:overall-graph} of Section~\ref{sec:paths}) help us to accurately generate a joint and private Kaplan-Meier estimator over multiple clients.

\myparagraph{Datasets and Data Usage}
For our experiments, we chose 3 established publicly available survival medical datasets (details about characteristics and preprocessing can be found in Appendix~\ref{sec:app-dataset}). 

Since our proposed \dps (Section~\ref{sec:dpsurvdef}) and \dpy (Section~\ref{sec:dpprobdef}) methods are defined for datasets with no censored data, we only use the uncensored part of these 3 datasets for all our experiments. We include a detailed discussion of why this is a reasonable assumption and the shortcomings in Section~\ref{sec:discussion-censoring}. 
\subsection{Metrics}
\label{sec:exp-metrics}
\myparagraph{Logrank Test}
\label{sec:logrank}
In our experiments, we need to compare the quality of DP-generated KM curves and seek the KM distribution closest to the one generated from the original dataset. To compare KM distributions of two samples, hypothesis testing is usually used. The logrank test is the most common of such tests. 

The logrank test~\cite{mantel1966evaluation} is a nonparametric hypothesis test used to compare the survival distribution of two populations. \textbf{The null hypothesis states that the two populations have the same survival distribution}. So, a \pv smaller than the desired significance level leads to rejecting the null hypothesis, indicating a difference between the populations. A \pv greater than the significance level indicates no conclusive evidence to reject the null hypothesis. Common practice sets $p < 0.05$ as the threshold, ; however, much debate surrounds the topic, with many arguing that a much smaller value is needed~\cite{10.1001/jama.2019.4582, di2020statistical}. For details of the formulation of test statistics, refer to Appendix~\ref{sec:app-metrics}. 

\myparagraph{Median Survival Time and Survival Percentage} The logrank test has limitations in comparing survival curves for large \pv. This test also assumes noncrossing curves, so it may not accurately show similarities or differences for complex survival functions that intersect at any time point~\cite{bouliotis2011crossing}.
Given these constraints, other works~\cite[e.g.][]{guyot2012enhanced, wei2017reconstructing} recommend reporting median survival time and survival percentage at specific time points for a more comprehensive understanding. The median, which is the time at which the survival function reaches the value of $\s = 0.5$, is a robust measure and gives a general idea about the survival property of the dataset. Another important concept in survival studies is the behavior of the population at the beginning, middle and end of the study. Therefore, we choose to report the survival probability at three different time points $\{0.25\tm, 0.5\tm, 0.75\tm\}$, with $\tm$ being the maximum time in the study. The confidence intervals for these metrics are calculated directly from Kaplan-Meier (KM) curves using Greenwood's exponential \textit{log-log} formula~\cite{sawyer2003greenwood}. For more details on the confidence intervals, see Appendix~\ref{sec:app-metrics}. 
\myparagraph{Privacy Guarantees}
According to the postprocessing property of DP (Theorem~\ref{theorem:postprocess}), any function of a differentially private function is also differentially private. 

In \dpm, we directly add noise to the counts in the original dataset, thus any function of these noisy data, including the constructed $\s, \yh$, as well as the surrogate dataset that helps to calculate logrank test statistics and confidence intervals, is differentially private. 

For our \dps and \dpy methods, any function of these two functions is differentially private, and this includes the surrogate dataset used to calculate the test statistics and confidence intervals.

\subsection{Preliminary Experiments}
\label{sec:exp-prem}
We initially performed a series of preliminary experiments to check the soundness of our surrogate dataset generation algorithm and also set the necessary hyperparameters. These experiments can be found in Appendix~\ref{sec:app-exp-surrogates} and~\ref{sec:app-exp-hyperparameter} in detail. In summary, we found that a relatively small ($\sim 0.1\% - 1\%$ of the duration of the study) discretization binning size $b$ returns the most favorable operating point for the privacy/utility trade-off of our DP algorithms in a centralized setting. For \dps, we chose $b=\{1, 6, 2\}$ for GBSG, METABRIC, and SUPPORT, respectively. For \dpy, we chose $b=\{2, 4, 6\}$ for GBSG, METABRIC, and SUPPORT, respectively, and finally, for \dpm, we chose $b=\{2, 6, 6\}$ for GBSG, METABRIC, and SUPPORT, respectively. We also found that for \dps a value of $k=10\%$ for the first coefficients selected of the DCT works best. We will generalize these optimal values of the parameters to the decentralized experiments later. We also observed that our surrogate dataset generation method is robust with respect to the number $n$, we choose to populate the probability distribution with. We chose $n=\Bar{N}$ for all of our subsequent experiments, where $\Bar{N}$ is the number of uncensored data points in each dataset. 

\subsection{Centralized Performance of DP Algorithms}
\label{sec:exp-centralized}
We start with evaluation of our methods in the centralized setting, where all the data is available, centrally. The goal is to match the performance of a nonprivate KM curve with the best privacy guarantee (lower $\varepsilon$ values).

\myparagraph{Setup} To inspect the stability of Differential Privacy (DP) algorithms, which rely on random noise generation, we conducted 100 independent runs for each DP method. Given the unknown distribution of metrics after adding DP noise, we applied a bootstrapping~\cite{efron1993introduction, bootstrap} algorithm to determine the $95\%$ confidence interval for mean of the metrics. These metrics include the \pv, median, and survival percentage at three time points ($t = \{0.25\tm, 0.5\tm, \\ 0.75\tm\}$), representing the beginning, middle, and near the end of a study.

We run all our algorithms in the theoretically tight privacy regime~\cite{ponomareva2023dp} of $\varepsilon=\{0.5, 1\}$. Here, we analyze the results for the more stringent privacy value of $\varepsilon=0.5$, as the true capacity of our method (specifically \dps) becomes clear in this case. However, we include the complete results for $\varepsilon=1$ in Appendix~\ref{sec:app-centralizedDP}.

Table~\ref{tab:centralized-all-e05} shows the results of applying \dpm, \dps and \dpy to all datasets. For the non-DP baseline, we show the confidence intervals in parentheses. For DP methods, in parentheses we report the mean and its $95\%$ confidence interval. We also demonstrate these results for one random run of the DP algorithms and for two datasets in Figure~\ref{fig:centralized-e05}, where the blue line is the survival curve for the non-private dataset and the shaded blue region is its corresponding confidence area. 

\begin{table}[!ht]
\centering
\caption{Performance of the DP methods in the centralized setting for event $e=1$ and privacy budget $\varepsilon=0.5$.}
    \hspace*{-0.70cm}\scalebox{0.73}
    {
    \begin{tabular}{c|c|c|c|c|c|c}
    \toprule
         & & $p$- value& median& $25\%$ $T_{\text{max}}$ & $50\%$ $T_{\text{max}}$ & $75\%$ $T_{\text{max}}$ \\
         \midrule
          \multirow{4}{*}{\rotatebox[origin=c]{90}{\small{GBSG}}}&non-DP &-& $24 (22; 25)$ & $0.58 (0.55;0.60)$&$0.24 (0.22;0.26) $&$0.08 (0.07;0.11) $\\
         \cline{2-7}
         
          &\dps& $0.34(0.33, 0.35)$ & $24(24, 24)$ & $0.58(0.58, 0.58)$ & $0.25(0.24, 0.25)$& $0.08(0.08, 0.08)$\\
          
         \cline{2-7}
         
          &\dpy& $0.21(0.16, 0.27)$ & $25(25, 25)$ & $0.58(0.57, 0.58)$&$ 0.26(0.26, 0.26)$&$0.09(0.08, 0.09) $\\
         \cline{2-7}
         
          &\dpm& $0.30(0.23, 0.36)$ & $25(24, 25)$ & $0.57(0.57, 0.58)$&$0.24(0.23, 0.24)$&$0.04(0.04, 0.05)$\\
         
          \bottomrule
          \toprule
          
          \multirow{4}{*}{\rotatebox[origin=c]{90}{\small{METABRIC}}}& non-DP &-& $86 (81; 90)$ & $0.49 (0.46;0.51)$&$0.16 (0.14;0.18) $&$0.02 (0.01;0.03) $\\
         \cline{2-7}

          &\dps& $0.25(0.20, 0.30)$&$86(85, 86)$&$0.49(0.49, 0.49)$ & $0.18(0.17, 0.18)$& $0.02(0.02, 0.02)$\\
          
         \cline{2-7}
         
          &\dpy& $0.02(0.00, 0.04)$ & $91(91, 92)$&$0.51(0.50, 0.51)$&$ 0.19(0.19, 0.19)$&$0.05(0.05, 0.05) $\\

         \cline{2-7}
         
          &\dpm& $0.16(0.11, 0.21)$ & $87(86, 88)$ & $0.50(0.50, 0.51)$&$0.13(0.12, 0.13)$&$0.01(0.01, 0.02)$\\
          \bottomrule
          \toprule
          
          \multirow{4}{*}{\rotatebox[origin=c]{90}{\small{SUPPORT}}}&non-DP &-& $57 (53; 61)$ & $0.14 (0.13;0.15)$&$0.05 (0.04;0.05) $&$0.01 (0.01;0.01) $\\
         \cline{2-7}

          &\dps& $0.26(0.21, 0.32)$ & $59 (57, 61)$ & $0.14(0.14, 0.15)$ & $0.05(0.05, 0.05)$& $0.01(0.01, 0.01)$\\
          
         \cline{2-7}
         
          &\dpy& $0.00(0.00, 0.00)$ & $66(66, 67)$ & $0.17(0.17, 0.18)$&$ 0.08(0.08, 0.08)$&$0.03(0.03, 0.03) $\\
         \cline{2-7}
          
          &\dpm& $0.11(0.08, 0.15)$ & $60(60, 60)$ & $0.13(0.12, 0.13)$&$0.01(0.00, 0.01)$&$0.00(0.00, 0.00)$\\
          \bottomrule
    \end{tabular}
    }
    \label{tab:centralized-all-e05}
\end{table}

\begin{figure}[]
\hspace*{-0.5cm}\begin{minipage}[c]{0.56\columnwidth}
        \centering
        \includegraphics[width=\linewidth]{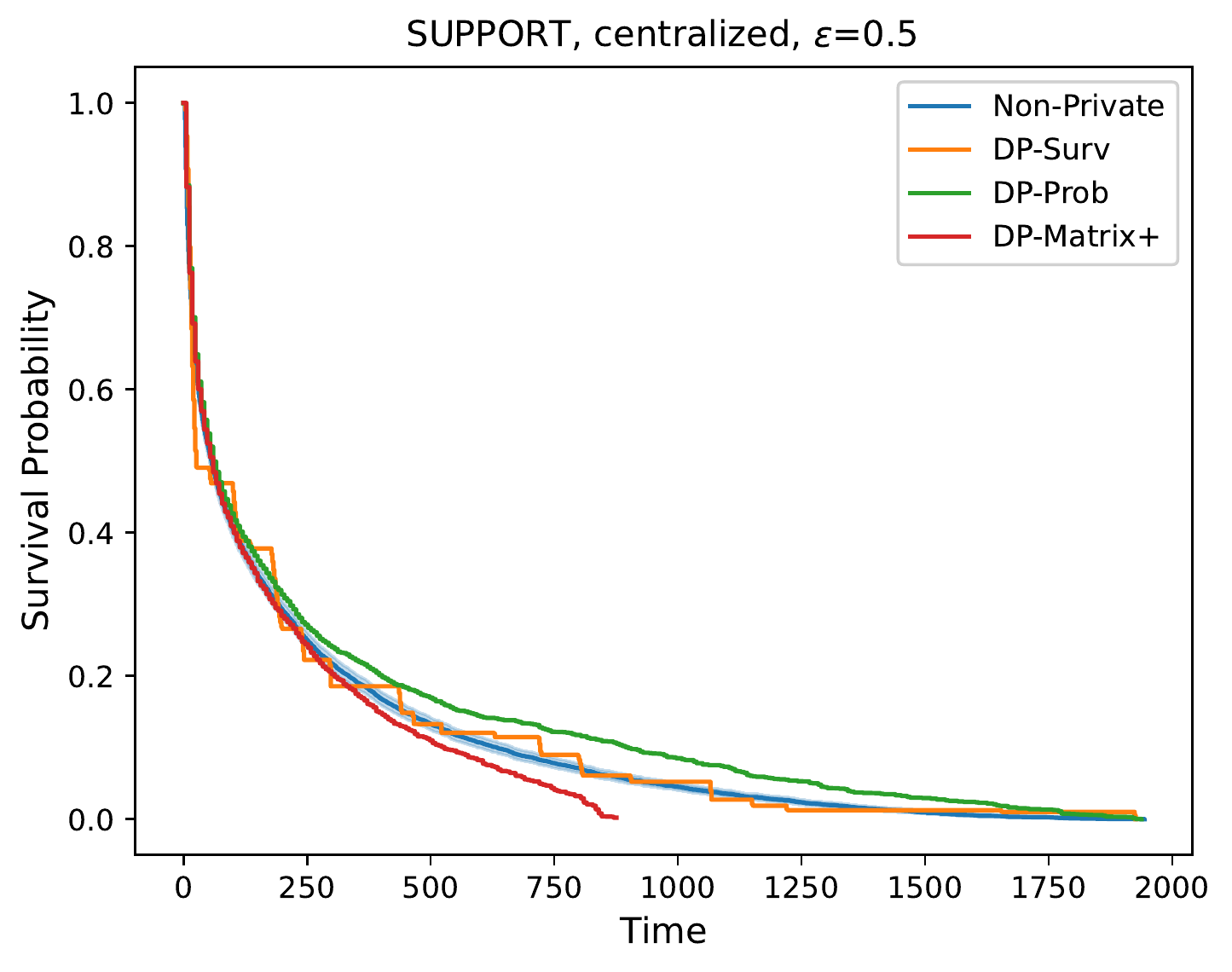}
\end{minipage}%
\hspace*{-0.1cm}
\begin{minipage}[c]{0.56\columnwidth}
        \centering
        \includegraphics[width=\linewidth]{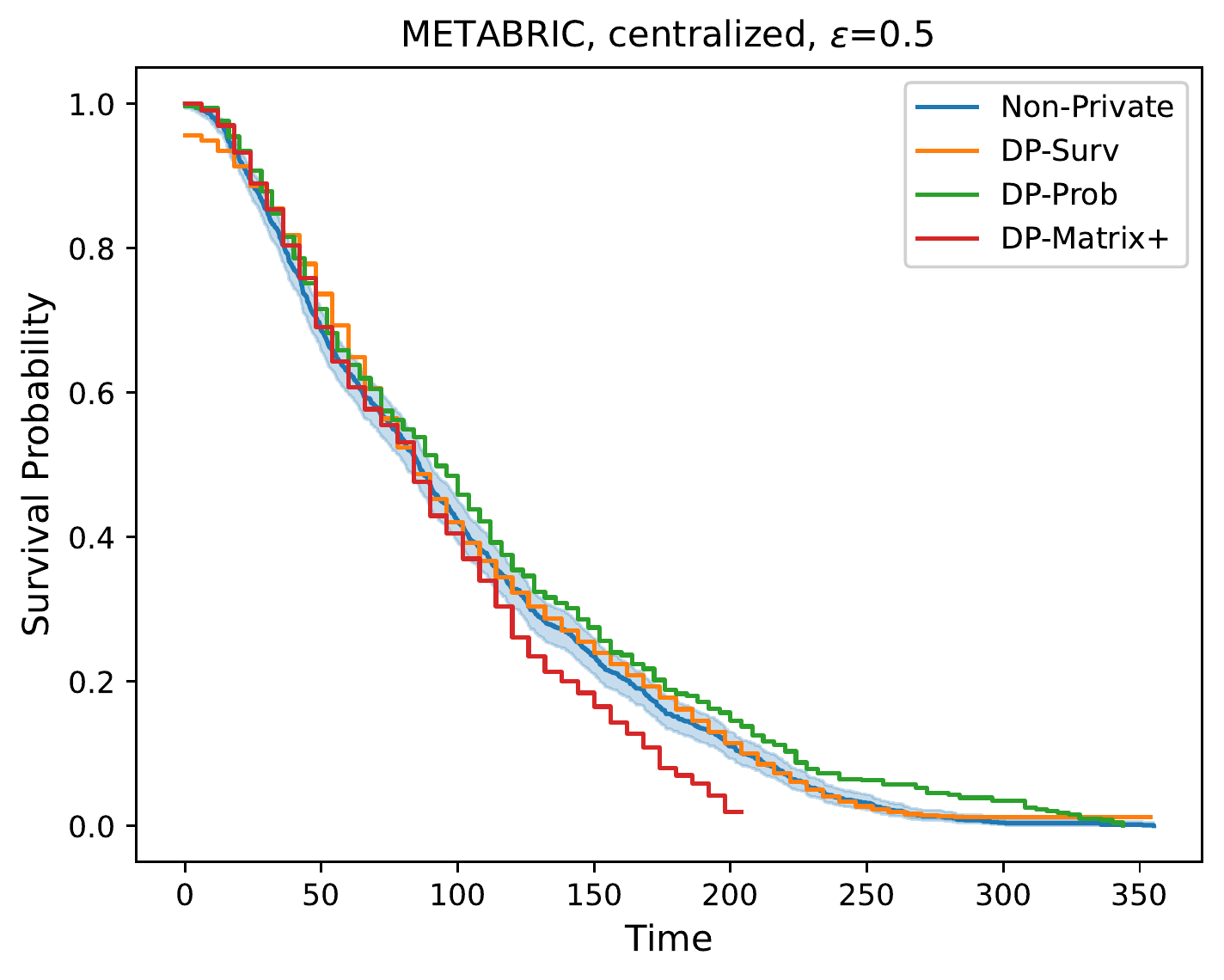}
\end{minipage}
\caption{Comparison of all the DP methods in a centralized setting, for one random run of the DP algorithms. The blue shaded region shows the confidence area of the non-private dataset.}
\label{fig:centralized-e05}
\end{figure}

\myparagraph{\dps Performance} Our \dps method shows consistent performance across multiple runs of the DP algorithm for all datasets with very tight $95\%$ confidence interval. Its $p-$value is significantly above the commonly-acceptable statistical difference of $0.05$, and the samples' confidence interval of the median as well as survival percentages fall within the confidence interval for all the 3 non-private datasets in all cases. 

\myparagraph{\dpy Performance} Our second method also shows stability in confidence intervals for metric means. However, it underperforms with the METABRIC and SUPPORT datasets, particularly for the SUPPORT dataset, which has an early sharp decline in survival rate (this can be seen better in Figure~\ref{fig:centralized-e05}). This results in poor performance in accurately matching metrics, especially since the median is in a very sensitive area where many events occur simultaneously. This issue seems to affect \dpy more negatively compared to \dps which shows a good approximation even for the median. It is also observed that \dpy tends to overestimate survival percentages. This phenomenon arises because across all datasets (most notably in SUPPORT), there are several time bins at the onset of the study with very high probability estimates (indicative of high event rates), and numerous bins subsequently exhibiting near zero probability values. So when the Laplace noise is added to the probability estimator vector, numerous negative values are manifested in the latter part of the study. The postprocessing step of line 3 in Algorithm~\ref{algo:dpprob}, eliminates these negative values, followed by the rescaling of the entire vector using an overestimated sum value (line 4 of Algorithm~\ref{algo:dpprob}). This procedure results in the positive noisy event probabilities being smaller than anticipated, thereby leading to overesitmated survival rates for these periods.  

\myparagraph{\dpm Performance} \dpm also struggles to match the metrics for METABRIC and SUPPORT. We especially see a degradation in the performance of \dpm towards the endpoint of the study at $0.75\% T_{\max}$ in all datasets. The reason is probably the postprocessing step as described in Equation~\ref{eq:dpmpp} (corresponding to line 5 of Algorithm 1 in the original paper~\cite{gondara2020differentially}), where the noisy at-risk group is calculated based on the previous step. If we restrict the noisy death numbers to be positive, the noise causes the risk set to drop faster than the original dataset and deplete quickly towards the end of the study.


\subsection{Collaboration}
\label{sec:exp-collab}
We next evaluate the performance of our DP methods in constructing a private KM curve using data from multiple collaborating sites. Our objective is to achieve a KM curve that closely approximates the one derived from aggregated data (centralized setup) while maintaining an acceptable level of privacy.

According to our overall workflow as shown in Figure~\ref{fig:overall-graph}, to utilize \dps, \dpy or \dpm, we can take one of the 7 possible routes for this privacy-preserving collaboration:
\begin{itemize}
    \item \dps pooled (path \textbf{A}): \dps is applied locally by each client, then private surrogate datasets are generated and shared with the central server. The pooled collection of all private datasets is used to construct the final KM curve.
    \item \dps averaged $\s'$ (path \textbf{B}): \dps is applied locally, local private KM curves $\s'$ are shared, an average is taken over them and then a global private surrogate dataset is generated to calculate the metrics.
    \item \dps averaged $\yh'$ (path \textbf{C}): \dps is applied locally, local private probability function is calculated and shared, an average is taken over these local $\yh'$s and a global private surrogate dataset is generated using the average. The final private and global KM estimator is calculated using this surrogate dataset. 
    
    \item \dpy pooled (path \textbf{D}): \dpy is applied locally by each client, private local surrogate datasets are generated and shared with the central server. The pooled collection of all private datasets is used to construct the final KM curve and metrics.
    \item \dpy averaged $\s'$ (path \textbf{E}): \dpy is applied locally, local private KM curves $\s'$ are calculated as a function of local $\yh'$ by clients and shared, an average is taken over them and then a global private surrogate dataset is generated to calculate the metrics and construct the final global and private KM curve. 
    \item \dpy averaged $\yh'$ (path \textbf{F}): \dpy is applied locally, local private probability function is calculated and shared, an average is taken over these local $\yh'$s and a global private surrogate dataset is generated using the average. This dataset is then used to calculate the metrics for the global, private KM estimator. 
    \item \dpm pooled (path \textbf{M}): \dpm is applied locally by clients, based on the noisy count numbers private local datasets are generated and shared with the central server. The pooled private datasets are used to construct the final KM curve.  
\end{itemize}

\myparagraph{Setup} We consider the case where 10 clients collaborate to jointly build a KM estimator over the collection of their datasets. Each dataset is shuffled and split into 10 parts with equal number of datapoints for each client. To assess the stability of our DP methods, we conduct 100 independent runs and report the $95\%$ confidence intervals for the mean of the \pv, median, and survival percentages at $t=\{0.25T_{\max}, 0.5T_{\max}, 0.75T_{\max}\}$, using bootstrapping. Given the increased complexity of collaborative learning compared to centralized learning, we perform all experiments with $\varepsilon=\{1, 3, 5\}$. Here, we analyze the results for $\varepsilon=1$, with the complete results and analysis for $\varepsilon=\{3, 5\}$ included in Appendix~\ref{sec:app-collaboration}. 

\myparagraph{Sensitivity calculation and privacy} A key challenge when employing differential privacy across multiple sites is determining the appropriate sensitivity and privacy budget. To address this, we standardize all hyperparameters, including the dataset's maximum time, discretization bin size $b$ and the fraction of DCT coefficients $k$ across all collaborating clients. Assuming bounded differential privacy, the local size of the uncensored datasets $\nb_k$ is considered public, allowing us to use this number to calculate the sensitivity of the locally added noise.

Table~\ref{tab:even-e1}, shows the results for the privacy budget $\varepsilon=1$ and the 7 possible paths. For DP methods, we report the $95\%$ confidence interval of the means of the metrics in parentheses. For the non-private centralized dataset we also report the confidence intervals in parentheses. 
\begin{table*}[!ht]
    \centering
        \caption{Collaboration with even data split for $e=1$}
    \scalebox{0.80}{
    \begin{tabular}{c|c|c|c|c|c|c|c}
    \toprule
         & & &$p$- value& median survival time& $25\%$ $T_{\text{max}}$ & $50\%$ $T_{\text{max}}$ & $75\%$ $T_{\text{max}}$ \\
         \midrule
          \multirow{7}{*}{\rotatebox[origin=c]{90}{GBSG}}& centralized, non-private  & &-& $24 (22; 25)$ & $0.58 (0.55;0.60)$&$0.24 (0.22;0.26) $&$0.08 (0.07;0.10) $\\
         \cline{2-8}
         
         && pooled&$0.17(0.12, 0.22)$ & $24(24, 24)$ & $0.57(0.57, 0.58)$ & $0.25(0.24, 0.25)$ &$0.09(0.08, 0.09)$ \\
         &\dps ($\varepsilon=1$)&averaged $\s'$& $0.22(0.16, 0.27)$& $24(24, 25)$&$0.58(0.58, 0.59)$&$0.25(0.24, 0.25)$&$0.09(0.08, 0.09)$\\
         &&averaged $\yh'$&$0.17(0.12, 0.21)$&$24(24, 24)$&$0.58(0.57, 0.58)$&$0.25(0.25, 0.26)$&$0.09(0.09, 0.09)$\\
         \cline{2-8}
          
         && pooled&$0(0.00, 0.00)$ & $30(29, 30)$ & $0.65(0.64, 0.65)$ & $0.36(0.35, 0.36)$ &$0.16(0.16, 0.16)$ \\
         &\dpy ($\varepsilon=1$)&pveraged $\s'$&$0(0.00, 0.00)$&$30(29, 30)$&$0.65(0.64, 0.65)$&$0.35(0.35, 0.36)$&$0.16(0.16, 0.17)$\\
         &&pveraged $\yh'$&$0(0.00, 0.00)$&$30(30, 30)$&$0.65(0.65, 0.65)$&$0.36(0.35, 0.36)$&$0.16(0.16, 0.16)$\\
         \cline{2-8}
         
         &\dpm ($\varepsilon=1$)&pooled&$0(0.00, 0.00)$&$20(20, 21)$&$0.50(0.49, 0.50)$&$0.06(0.05, 0.06)$&$0.02(0.02, 0.02)$\\

    \bottomrule
    \toprule
         \multirow{7}{*}{\rotatebox[origin=c]{90}{METABRIC}}&centralized, non-private  &&-& $86 (81; 90)$ & $0.49 (0.46;0.51)$&$0.16 (0.14;0.18) $&$0.02 (0.01;0.03) $\\
         \cline{2-8}
          
         && pooled&$0.11(0.07, 0.14)$ & $85(84, 86)$ & $0.49(0.48, 0.49)$ & $0.18(0.18, 0.18)$ &$0.04(0.04, 0.04)$ \\
         &\dps ($\varepsilon=1$)&averaged $\s'$&$0.07(0.03, 0.10)$&$85(84, 86)$&$0.49(0.48, 0.49)$&$0.18(0.18, 0.19)$&$0.04(0.04, 0.04)$\\
         &&averaged $\yh'$&$0.07(0.04, 0.10)$&$85(84, 86)$&$0.49(0.48, 0.49)$&$0.18(0.18, 0.18)$&$0.04(0.04, 0.04)$\\
         \cline{2-8}
          
         && pooled&$0(0.00, 0.00)$ & $110(109, 110)$ & $0.60(0.60, 0.60)$ & $0.29(0.29, 0.30)$ &$0.12(0.11, 0.12)$ \\
         &\dpy ($\varepsilon=1$)&averaged $\s'$&$0(0.00, 0.00)$&$110(109, 111)$&$0.60(0.60, 0.60)$&$0.30(0.30, 0.30)$&$0.12(0.12, 0.12)$\\
         &&averaged $\yh'$&$0(0.00, 0.00)$&$110(109, 110)$&$0.60(0.60, 0.60)$&$0.29(0.29, 0.30)$&$0.12(0.11, 0.12)$\\
         \cline{2-8}
         
         &\dpm ($\varepsilon=1$)&pooled&$0(0.00, 0.01)$&$79(78, 80)$&$0.45(0.44, 0.46)$&$0.05(0.04, 0.05)$&$0.02(0.02, 0.03)$\\

    \bottomrule
    \toprule
          \multirow{7}{*}{\rotatebox[origin=c]{90}{SUPPORT}}&centralized, non-private  &&-& $57 (53; 61)$ & $0.14 (0.13;0.15)$&$0.05 (0.04;0.05) $&$0.01 (0.01;0.01) $\\
         \cline{2-8}
         
         && pooled&$0.05(0.02, 0.08)$ & $66(64, 69)$ & $0.15(0.15, 0.15)$ & $0.06(0.05, 0.06)$ &$0.02(0.02, 0.02)$ \\
         &\dps ($\varepsilon=1$)&averaged $\s'$&$0.05(0.02, 0.08)$&$66(64, 69)$&$0.15(0.14, 0.15)$&$0.06(0.05, 0.06)$&$0.02(0.02, 0.02)$\\
         &&averaged $\yh'$&$0.09(0.04, 0.14)$&$68(65, 70)$&$0.15(0.14, 0.15)$&$0.06(0.05, 0.06)$&$0.02(0.02, 0.02)$\\
         \cline{2-8}
         
         && pooled&$0(0.00, 0.00)$ & $551(548, 554)$ & $0.53(0.53, 0.53)$ & $0.34(0.34, 0.34)$ &$0.17(0.17, 0.17)$ \\
         &\dpy ($\varepsilon=1$)&averaged $\s'$&$0(0.00, 0.00)$&$575(572, 578)$&$0.54(0.54, 0.54)$&$0.35(0.35, 0.35)$&$0.17(0.17, 0.17)$\\
         &&averaged $\yh'$&$0(0.00, 0.00)$&$576(574, 579)$&$0.54(0.54, 0.54)$&$0.35(0.35, 0.35)$&$0.17(0.17, 0.17)$\\
         \cline{2-8}
         
         &\dpm ($\varepsilon=1$)&pooled&$0(0.00, 0.00)$&$54(53, 55)$&$0.01(0.01, 0.01)$&$0.01(0.00, 0.01)$&$0.01(0.00, 0.01)$\\
         
          \bottomrule
    \end{tabular}
    }
    \label{tab:even-e1}
\end{table*}

\myparagraph{Performance of \dps-Based Methods} At first glance, we observe that for all datasets, the DP-based methods consistently achieve an acceptable mean \pv, adhering to the common significance level of $0.05$. For the GBSG and METABRIC datasets, the estimated private median times and their confidence intervals lie within the confidence intervals of the non-DP centralized estimates. Similarly, the estimated private survival percentages for GBSG fall within the non-DP confidence interval. For METABRIC and SUPPORT, the confidence intervals of the estimated survival percentages deviate by at most $1\%$ from those of the non-private dataset. The private median estimation for SUPPORT deviates by at most 7 units of time from the non-private confidence interval. This deviation is attributed to the median of this rapidly declining survival dataset being in a sensitive region with a steep slope of change over time.

An interesting observation when comparing paths A, B, and C is their comparable performance. For the mean survival percentages, the difference between these three paths does not exceed $1\%$ in any dataset. Similarly, the difference in the estimated mean median is at most 2 units of time across all datasets. This indicates that averaging the private survival functions or the private probability functions is a viable solution for collaboration schemes, closely approximating the outcome of sharing private datasets. This significant finding provides clients with the flexibility to choose their preferred path for jointly building a model.


Similar to the centralized application of \dps, we observe very stable results between multiple runs of the algorithm. The $95\%$ confidence intervals for the mean survival percentages show a maximum difference of $1\%$ from the mean value. The median's confidence interval is at most 3 units of time away from the estimated mean. Overall, we witness robust and stable performance across the three \dps-based paths.  

\myparagraph{Performance of \dpy-Based Methods} For this privacy regime, the \dpy-based paths do not perform as well as the \dps method according to \pv. This indicates that the \dpy method is more sensitive to the amount of DP-noise added for a specific $\varepsilon$ level, compared to \dps. 

Compared to \dps-based methods, we see more deviation between different runs of the DP algorithm for paths D, E, and F, particularly noticeable in the median estimate of SUPPORT. This is again due to the more sensitive response of \dpy to noise. Additionally, we observe the same issue of overestimating survival percentages, as we described in the centralized experiments in Section~\ref{sec:exp-centralized}.

\myparagraph{Performance of \dpm-Based Methods} We observe that \dpm fails in \pv for all dataset in this stringent privacy regime. It underestimates the mean of the median for all datasets. It also suffers from the same problem of under estimating the survival percentages, especially towards the end of the study, as explained in Section~\ref{sec:exp-centralized}. This issue is especially noticeable in SUPPORT where all the estimated mean survival percentages fall to 0.01 from $25\%\tm$ time point.

\subsection{Collaboration: A Broader View}
\label{sec:exp-collab-uneven}

\begin{table*}[ht]
    \centering
        \caption{Collaboration with uneven data split with one site receiving either $50\%$ or $5\%$ of all of the data, for $e=1$ and $\varepsilon=1$}
    \scalebox{0.80}{
    \begin{tabular}{c|c|c|c|c|c|c|c}
    \toprule
         & & &$p$- value& median survival time& $25\%$ $T_{\text{max}}$ & $50\%$ $T_{\text{max}}$ & $75\%$ $T_{\text{max}}$ \\
         \hline
          \multirow{7}{*}{\rotatebox[origin=c]{90}{GBSG}}& centralized, non-private  & &-& $24 (22; 25)$ & $0.58 (0.55;0.60)$&$0.24 (0.22;0.26) $&$0.08 (0.07;0.10) $\\
         \cline{2-8}
         
         && pooled&$0.19(0.14, 0.24)$ & $24(24, 25)$ & $0.58(0.58, 0.58)$ & $0.25(0.25, 0.26)$ &$0.09(0.09, 0.09)$ \\
         &\dps ($\varepsilon=1$)&averaged $\s'$& $0.17(0.12, 0.21)$& $24(24, 25)$&$0.58(0.57, 0.58)$&$0.25(0.25, 0.26)$&$0.09(0.09, 0.09)$\\
         &minority has $50\%$&averaged $\yh'$&$0.19(0.14, 0.23)$&$24(24, 25)$&$0.58(0.57, 0.58)$&$0.25(0.24, 0.25)$&$0.09(0.09, 0.09)$\\
         \cline{2-8}
         && pooled&$0.17(0.12, 0.22)$ & $24(24, 25)$ & $0.58(0.57, 0.58)$ & $0.25(0.25, 0.25)$ &$0.09(0.09, 0.09)$ \\
         &\dps ($\varepsilon=1$)&averaged $\s'$& $0.13(0.09, 0.16)$& $24(24, 24)$&$0.58(0.58, 0.59)$&$0.25(0.25, 0.25)$&$0.09(0.09, 0.10)$\\
         &minority has $5\%$&averaged $\yh'$&$0.18(0.13, 0.22)$&$24(24, 25)$&$0.58(0.58, 0.59)$&$0.25(0.24, 0.25)$&$0.09(0.08, 0.09)$\\

    \bottomrule
    \toprule
         \multirow{7}{*}{\rotatebox[origin=c]{90}{METABRIC}}&centralized, non-private  &&-& $86 (81; 90)$ & $0.49 (0.46;0.51)$&$0.16 (0.14;0.18) $&$0.02 (0.01;0.03) $\\
         \cline{2-8}
          
         && pooled&$0.08(0.04, 0.11)$ & $85(84, 86)$ & $0.49(0.48, 0.49)$ & $0.18(0.18, 0.19)$ &$0.04(0.04, 0.04)$ \\
         &\dps ($\varepsilon=1$)&averaged $\s'$&$0.07(0.03, 0.09)$&$85(84, 85)$&$0.49(0.48, 0.49)$&$0.19(0.18, 0.19)$&$0.04(0.04, 0.04)$\\
         &minority has $50\%$&averaged $\yh'$&$0.08(0.05, 0.11)$&$85(84, 85)$&$0.49(0.48, 0.49)$&$0.18(0.18, 0.19)$&$0.04(0.03, 0.04)$\\
         \cline{2-8}
         && pooled&$0.06(0.03, 0.08)$ & $86(85, 87)$ & $0.49(0.49, 0.50)$ & $0.18(0.18, 0.19)$ &$0.04(0.04, 0.04)$ \\
         &\dps ($\varepsilon=1$)&averaged $\s'$&$0.07(0.04, 0.10)$&$85(84, 86)$&$0.49(0.48, 0.49)$&$0.18(0.18, 0.19)$&$0.04(0.04, 0.04)$\\
         &minority has $5\%$&averaged $\yh'$&$0.08(0.05, 0.11)$&$85(85, 86)$&$0.49(0.49, 0.50)$&$0.18(0.18, 0.18)$&$0.04(0.04, 0.04)$\\

    \bottomrule
    \toprule
          \multirow{7}{*}{\rotatebox[origin=c]{90}{SUPPORT}}&centralized, non-private  &&-& $57 (53; 61)$ & $0.14 (0.13;0.15)$&$0.05 (0.04;0.05) $&$0.01 (0.01;0.01) $\\
         \cline{2-8}
         
         && pooled&$0.05(0.01, 0.07)$ & $68(66, 70)$ & $0.15(0.14, 0.15)$ & $0.06(0.05, 0.06)$ &$0.02(0.02, 0.02)$ \\
         &\dps ($\varepsilon=1$)&averaged $\s'$&$0.04(0.02, 0.05)$&$71(68, 74)$&$0.15(0.14, 0.15)$&$0.05(0.05, 0.06)$&$0.02(0.02, 0.02)$\\
         &minority has $50\%$&averaged $\yh'$&$0.05(0.01, 0.09)$&$68(66, 71)$&$0.15(0.15, 0.15)$&$0.06(0.05, 0.06)$&$0.02(0.02, 0.02)$\\
         \cline{2-8}
         && pooled&$0.04(0.01, 0.05)$ & $67(66, 71)$ & $0.15(0.14, 0.15)$ & $0.05(0.05, 0.06)$ &$0.02(0.02, 0.02)$ \\
         &\dps ($\varepsilon=1$)&averaged $\s'$&$0.07(0.03, 0.10)$&$65(62, 68)$&$0.15(0.14, 0.15)$&$0.06(0.05, 0.06)$&$0.02(0.02, 0.02)$\\
         &minority has $5\%$&averaged $\yh'$&$0.05(0.02, 0.07)$&$65(63, 68)$&$0.15(0.15, 0.15)$&$0.06(0.05, 0.06)$&$0.02(0.02, 0.02)$\\
         
          \bottomrule
    \end{tabular}
    }
    \label{tab:uneven-e1}
\end{table*}
So far we only studied the case of even split of data, where each client has the same number of data points as the others. We now would like to explore more realistic and challenging scenarios, including uneven data distribution (similar to previous works on distributed medical data~\cite[e.g.][]{duan2018odal}). For this reason, we examine two cases: a) one client has $50\%$ of all data, and b) one client has only $5\%$ of the data. 

Our results in the previous section show that the \dps-based paths (A, B, and C) work best for collaboration with an even split of data and \dpy and \dpm underperform in comparison. For this reason, in this section we only inspect the \dps-based paths. 

\myparagraph{Setup} We again consider that we have 10 collaborating clients. Data is first shuffled and split between these clients. One \textit{minority} client receives either $5\%$ or $50\%$ of the total amount of data and the rest is evenly shared between the 9 remaining participants. To ensure the stability of our DP method, we perform 100 random runs for each setting of our algorithms and report the mean of the metrics along with the $95\%$ confidence interval of the mean, determined through bootstrapping of the samples. In a similar fashion to the even split of the data, we explore the privacy regime $\varepsilon=\{1, 3, 5\}$. Here, we provide the results for $\varepsilon=1$, but all the results and analysis for $\varepsilon=\{3, 5\}$ are provided in Appendix~\ref{sec:app-collaboration}. 

\myparagraph{Results} Table~\ref{tab:uneven-e1} presents the results for all \dps-based paths under both data splits: minority client receiving either $5\%$ or $50\%$. Our methods demonstrate stability across multiple algorithm runs, with confidence intervals for survival percentages differing by at most $1\%$ from the mean. The confidence intervals for medians show a maximum deviation of 4 units of time from the mean, with the largest interval observed for the SUPPORT dataset. An intriguing observation is the consistent performance of our \dps-based pipeline under different data splits. The private mean survival percentages exhibit differences of at most $1\%$ among paths A, B, and C across all three datasets. Similar results are noted for the estimated private mean of the median in GBSG and METABRIC. The median estimate for SUPPORT proves more challenging due to its location in a high-slope region of the curve. A significant finding is that the estimated mean survival percentages deviate by at most $1\%$ from the confidence interval of the non-private centralized dataset. Additionally, the estimated median falls completely within the confidence interval of the non-DP dataset for GBSG and METABRIC, while for SUPPORT, it deviates by at most 10 units of time. \textbf{This is an important finding and shows that without prior knowledge about the accuracy of the local estimator, there is always an incentive for individual data holders to collaborate for a better estimation of the KM curves. Given that we are applying tight privacy guarantees, the privacy of the datasets of these individual collaborators will not be compromised.} 

\myparagraph{Summary of Our Findings}
Through our experiments we found out that:
\begin{enumerate}
    \item Our surrogate dataset generation method is a reliable way to generate a surrogate dataset that match the performance of the real dataset, with access to only the probability mass function of the data.
    \item Our \dps method shows a near perfect performance in a centralized setting and for very low privacy budgets. 
    \item Our \dps-based collaboration paths consistently demonstrate comparability, stability, and accuracy in estimating Kaplan-Meier curves, closely aligning with the non-DP centralized setting where all data are assumed to be stored on a central server.
    \item \dps-based paths can successfully be used for uneven data splits and offer a strong incentive for collaboration among multiple data centers.  
\end{enumerate}

\section{Related Work}
\label{sec:related-work}

The power of Kaplan-Meier estimators, especially for medical applications, lies in the fact that they are non parametic models and can directly be constructed from the data and readily used to draw conclusions. Therefore, these are widely used in the medical domain for treatment assessment~\cite[e.g.][]{castro1995results, rossi2012carboplatin}, gene expression affect on survival~\cite[e.g.][]{mihaly2013meta, glinsky2004gene}, etc. 

Survival datasets are usually distributed among multiple data collectors such as hospitals or banks. To construct more accurate Kaplan-Meier estimators access to more data and thus a collaboration between these centers is necessary.  
In many applications, and especially the medical survival analysis, these data contain sensitive information about the individuals and protection of privacy of these individuals is a matter of utmost concern. Naturally, there are many privacy regulations~\cite[e.g.][]{politou2018forgetting, voigt2017eu, beverage1976privacy, hodge1999legal} that prohibit the sharing of raw data with other centers. Attempts to overcome this issue and to construct a KM estimator in collaboration with multiple centers have mostly focused on secure multi party computation (SMPC)\cite{froelicher2021truly, vogelsang2020secure, von2021privacy} of KM curves based on secure calculation of statistics needed to construct the estimator. However, there are many issues with this approach. Firstly, SMPC schemes do not scale well to larger settings: the cost of computation and communication usually grows very fast. Second, even after using a secure scheme, there is still privacy risks for the dataset when the summary statistics are shared publicly. An outside adversary can still perform attacks such as re-identification~\cite{el2011systematic, rocher2019estimating} or inference~\cite{backes2016membership, homer2008resolving} on summary statistics. 

A practical and strong method to guarantee the privacy of the dataset is using differential privacy~\cite{Dwork2014book}. In contrast to SMPC, differential privacy by definition has the power to neutralize adversarial attacks. DP can be applied either directly on the dataset or functions of the dataset. One way to incorporate DP in the Kaplan-Meier estimation is to add Laplace noise to the number counts in the survival dataset~\cite{gondara2020differentially}. This method is restrictive, because always an access to the number counts at specific times of events is required. It also does not offer privacy for the times of events and these will still be published with no DP randomness applied to them. 

In our paper, we take advantage of the probability density estimator~\cite{kvamme2019time, lee2018deephit}, which is an alternative statistic, closely related to the Kaplan-Meier function, to construct surrogate datasets solely based on KM function or probability function. This allows us to offer DP methods that are directly applicable on these two functions and readily converting between summary statistics and (surrogate) dataset. Our first DP method which is inspired by~\cite{kerkouche2021compression, rastogi2010differentially}, tweaks the KM function in its discrete cosine space. Our second DP method tweaks the probability function. By sampling these two functions in their time dimension, we are able to offer privacy on the times of events. Our methods show improvement in the privacy budget ($\varepsilon$) spending for the same utility compared to the previously-suggested method~\cite{goel2010understanding} and this allows us to expand our methods to a collaborative setting. 
 



\section{Conclusion and Future Work}
\label{sec:conclusion}




Kaplan-Meier (KM) curves are valuable tools, especially in the medical domain, but achieving higher accuracy often requires larger datasets. Collaborative learning combined with differential privacy (DP) is a promising approach to balance privacy concerns while effectively utilizing diverse data sources for the calculation of the KM curve.   

In this work, we take a broad view on different representations of survival statistics and leverage these different functions to apply differential privacy in different stages of survival data processing. We also present a synthetic data generation technique that facilitates conversion between these different representations. This helps us to apply differential privacy in an effective and straightforward way to survival information with no need to have access to the dataset.

With this broader point of view on different representations, we are able to suggest multiple different routes that a system of collaborating clients can utilize to achieve a global private KM estimation. We show that our methods are robust against different distributions of data among dataholders and how this can motivate small as well as big data centers to join our private, collaborative scheme.

\subsection{On Censored Data and our DP Methods}
\label{sec:discussion-censoring}
Censoring occurs when a data point exits the study without experiencing the event of interest by the end of the observation period. Despite their incomplete status, these points are included in survival analysis in hopes of gleaning insights from the fact that they did not experience the event up to the point of censoring, particularly in the absence of extensive datasets.

However, their inclusion can introduce biases in survival curves due to assumptions about dropout reasons that may not always be accurate~\cite{ranganathan2012censoring, leung1997censoring, coemans2022bias, piovani2021pitfalls}. Moreover, uncertainty surrounds when or if these individuals will eventually experience the event, whether days or years later. For a visual comparison of the impact of censored points on KM curves across our datasets, see Appendix~\ref{sec:app-smooth}.

Our proposed DP algorithms, \dps and \dpy, offer collaborators flexibility by directly modifying $\s$ or $\yh$ functions, rather than manipulating raw number counts as in \dpmo. Many widely used survival analysis packages (e.g., lifelines\footnote{\url{https://lifelines.readthedocs.io/}}, pycox \footnote{\url{https://github.com/havakv/pycox}}, scikit-survival\footnote{\url{https://github.com/sebp/scikit-survival}}) require data in the form of $D=\{t^i, e^i\}_{i=1}^N$,  rather than counts. Therefore, using \dpmo requires frequent conversions between survival data and counts, complicating the process. Furthermore, it is customary for medical centers to only publish survival curves across the entire population as opposed to number counts at each distinct time. This means that the privacy provider might not even have access to the dataset at all. For these reasons, it is crucial to explore DP methods that can directly and efficiently handle KM and probability estimators.

Moreover, our \dps shows superior performance in the collaborative setting for uncensored data, outperforming both \dpy and \dpmo in all datasets and for all metrics (Table~\ref{tab:even-e1}). By offering these privacy-preserving solutions, we encourage data owners to learn a more explainable and bias-free estimator through collaboration and to solve the issue of limited data access. We showed that averaging private KM estimators works really well: for uneven data splits, the mean median is at most 10 units of time different from the non DP baseline and the mean survival percentages are at most $1\%$ out of the confidence interval of the non DP baseline (Table~\ref{tab:uneven-e1}). So, if the duration of these studies were later extended, the points that later experience the event of interest (censored in the first study) could be used to calculate a new KM estimator and then shared with the central server for an updated average. 

However, if including censored points is absolutely necessary, our improved and more private algorithm of \dpm (compared to the original method of \dpmo) can be utilized following our recommended paths and collaboration strategies, as demonstrated in Figure~\ref{fig:overall-graph}. 

Since our \dps-based methods perform really well for noncensoring datasets, we think that improving its sensitivity for when there are censored points, or using the general ideas from \dps to build a more stable and versatile private estimator is an important future research direction.

\bibliographystyle{ACM-Reference-Format}
\bibliography{references}


\begin{thebibliography}{67}


\ifx \showCODEN    \undefined \def \showCODEN     #1{\unskip}     \fi
\ifx \showDOI      \undefined \def \showDOI       #1{#1}\fi
\ifx \showISBNx    \undefined \def \showISBNx     #1{\unskip}     \fi
\ifx \showISBNxiii \undefined \def \showISBNxiii  #1{\unskip}     \fi
\ifx \showISSN     \undefined \def \showISSN      #1{\unskip}     \fi
\ifx \showLCCN     \undefined \def \showLCCN      #1{\unskip}     \fi
\ifx \shownote     \undefined \def \shownote      #1{#1}          \fi
\ifx \showarticletitle \undefined \def \showarticletitle #1{#1}   \fi
\ifx \showURL      \undefined \def \showURL       {\relax}        \fi
\providecommand\bibfield[2]{#2}
\providecommand\bibinfo[2]{#2}
\providecommand\natexlab[1]{#1}
\providecommand\showeprint[2][]{arXiv:#2}

\bibitem[GDP(2016)]%
        {GDPR}
 \bibinfo{year}{2016}\natexlab{}.
\newblock \bibinfo{title}{General Data Protection Regulation}.
\newblock \bibinfo{howpublished}{\url{https://eur-lex.europa.eu/eli/reg/2016/679/oj}}.
\newblock


\bibitem[SUP(2022)]%
        {SUP}
 \bibinfo{year}{2022}\natexlab{}.
\newblock \bibinfo{title}{Health data in the workplace}.
\newblock \bibinfo{howpublished}{\url{https://edps.europa.eu/data-protection/data-protection/reference-library/health-data-workplace_en}}.
\newblock


\bibitem[Abadi et~al\mbox{.}(2016)]%
        {abadi2016deep}
\bibfield{author}{\bibinfo{person}{Martin Abadi}, \bibinfo{person}{Andy Chu}, \bibinfo{person}{Ian Goodfellow}, \bibinfo{person}{H~Brendan McMahan}, \bibinfo{person}{Ilya Mironov}, \bibinfo{person}{Kunal Talwar}, {and} \bibinfo{person}{Li Zhang}.} \bibinfo{year}{2016}\natexlab{}.
\newblock \showarticletitle{Deep learning with differential privacy}. In \bibinfo{booktitle}{\emph{Proceedings of the 2016 ACM SIGSAC conference on computer and communications security}}. \bibinfo{pages}{308--318}.
\newblock


\bibitem[Backes et~al\mbox{.}(2016)]%
        {backes2016membership}
\bibfield{author}{\bibinfo{person}{Michael Backes}, \bibinfo{person}{Pascal Berrang}, \bibinfo{person}{Mathias Humbert}, {and} \bibinfo{person}{Praveen Manoharan}.} \bibinfo{year}{2016}\natexlab{}.
\newblock \showarticletitle{Membership privacy in MicroRNA-based studies}. In \bibinfo{booktitle}{\emph{Proceedings of the 2016 ACM SIGSAC Conference on Computer and Communications Security}}. \bibinfo{pages}{319--330}.
\newblock


\bibitem[Baesens et~al\mbox{.}(2005)]%
        {baesens2005neural}
\bibfield{author}{\bibinfo{person}{Bart Baesens}, \bibinfo{person}{Tony Van~Gestel}, \bibinfo{person}{Maria Stepanova}, \bibinfo{person}{Dirk Van~den Poel}, {and} \bibinfo{person}{Jan Vanthienen}.} \bibinfo{year}{2005}\natexlab{}.
\newblock \showarticletitle{Neural network survival analysis for personal loan data}.
\newblock \bibinfo{journal}{\emph{Journal of the Operational Research Society}} \bibinfo{volume}{56}, \bibinfo{number}{9} (\bibinfo{year}{2005}), \bibinfo{pages}{1089--1098}.
\newblock


\bibitem[Beverage(1976)]%
        {beverage1976privacy}
\bibfield{author}{\bibinfo{person}{James Beverage}.} \bibinfo{year}{1976}\natexlab{}.
\newblock \showarticletitle{The Privacy Act of 1974: an overview}.
\newblock \bibinfo{journal}{\emph{Duke law journal}} \bibinfo{volume}{1976}, \bibinfo{number}{2} (\bibinfo{year}{1976}), \bibinfo{pages}{301--329}.
\newblock


\bibitem[Bouliotis and Billingham(2011)]%
        {bouliotis2011crossing}
\bibfield{author}{\bibinfo{person}{George Bouliotis} {and} \bibinfo{person}{Lucinda Billingham}.} \bibinfo{year}{2011}\natexlab{}.
\newblock \showarticletitle{Crossing survival curves: alternatives to the log-rank test}.
\newblock \bibinfo{journal}{\emph{Trials}} \bibinfo{volume}{12}, \bibinfo{number}{1} (\bibinfo{year}{2011}), \bibinfo{pages}{1--1}.
\newblock


\bibitem[Brenner(2002)]%
        {brenner2002long}
\bibfield{author}{\bibinfo{person}{Hermann Brenner}.} \bibinfo{year}{2002}\natexlab{}.
\newblock \showarticletitle{Long-term survival rates of cancer patients achieved by the end of the 20th century: a period analysis}.
\newblock \bibinfo{journal}{\emph{The Lancet}} \bibinfo{volume}{360}, \bibinfo{number}{9340} (\bibinfo{year}{2002}), \bibinfo{pages}{1131--1135}.
\newblock


\bibitem[Castro(1995)]%
        {castro1995results}
\bibfield{author}{\bibinfo{person}{Joseph~R Castro}.} \bibinfo{year}{1995}\natexlab{}.
\newblock \showarticletitle{Results of heavy ion radiotherapy}.
\newblock \bibinfo{journal}{\emph{Radiation and environmental biophysics}} \bibinfo{volume}{34}, \bibinfo{number}{1} (\bibinfo{year}{1995}), \bibinfo{pages}{45--48}.
\newblock


\bibitem[Chakravarti(1989)]%
        {chakravarti1989isotonic}
\bibfield{author}{\bibinfo{person}{Nilotpal Chakravarti}.} \bibinfo{year}{1989}\natexlab{}.
\newblock \showarticletitle{Isotonic median regression: a linear programming approach}.
\newblock \bibinfo{journal}{\emph{Mathematics of operations research}} \bibinfo{volume}{14}, \bibinfo{number}{2} (\bibinfo{year}{1989}), \bibinfo{pages}{303--308}.
\newblock


\bibitem[Coemans et~al\mbox{.}(2022)]%
        {coemans2022bias}
\bibfield{author}{\bibinfo{person}{Maarten Coemans}, \bibinfo{person}{Geert Verbeke}, \bibinfo{person}{Bernd D{\"o}hler}, \bibinfo{person}{Caner S{\"u}sal}, {and} \bibinfo{person}{Maarten Naesens}.} \bibinfo{year}{2022}\natexlab{}.
\newblock \showarticletitle{Bias by censoring for competing events in survival analysis}.
\newblock \bibinfo{journal}{\emph{bmj}}  \bibinfo{volume}{378} (\bibinfo{year}{2022}).
\newblock


\bibitem[Curtis et~al\mbox{.}(2012)]%
        {curtis2012genomic}
\bibfield{author}{\bibinfo{person}{Christina Curtis}, \bibinfo{person}{Sohrab~P Shah}, \bibinfo{person}{Suet-Feung Chin}, \bibinfo{person}{Gulisa Turashvili}, \bibinfo{person}{Oscar~M Rueda}, \bibinfo{person}{Mark~J Dunning}, \bibinfo{person}{Doug Speed}, \bibinfo{person}{Andy~G Lynch}, \bibinfo{person}{Shamith Samarajiwa}, \bibinfo{person}{Yinyin Yuan}, {et~al\mbox{.}}} \bibinfo{year}{2012}\natexlab{}.
\newblock \showarticletitle{The genomic and transcriptomic architecture of 2,000 breast tumours reveals novel subgroups}.
\newblock \bibinfo{journal}{\emph{Nature}} \bibinfo{volume}{486}, \bibinfo{number}{7403} (\bibinfo{year}{2012}), \bibinfo{pages}{346--352}.
\newblock


\bibitem[Damien~Desfontaines(2021)]%
        {desfontainesblog20211001}
\bibfield{author}{\bibinfo{person}{Daniel Simmons-Marengo Damien~Desfontaines}.} \bibinfo{year}{2021}\natexlab{}.
\newblock \bibinfo{title}{A list of real-world uses of differential privacy}.
\newblock \bibinfo{howpublished}{\url{https://desfontain.es/privacy/real-world-differential-privacy.html}}.
\newblock
\newblock
\shownote{Ted is writing things (personal blog)}.


\bibitem[De~Leeuw(1977)]%
        {de1977correctness}
\bibfield{author}{\bibinfo{person}{Jan De~Leeuw}.} \bibinfo{year}{1977}\natexlab{}.
\newblock \showarticletitle{Correctness of Kruskal's algorithms for monotone regression with ties}.
\newblock \bibinfo{journal}{\emph{Psychometrika}}  \bibinfo{volume}{42} (\bibinfo{year}{1977}), \bibinfo{pages}{141--144}.
\newblock


\bibitem[Desfontaines and Simmons-Marengo(2021)]%
        {desfontainesblog20210616}
\bibfield{author}{\bibinfo{person}{Damien Desfontaines} {and} \bibinfo{person}{Daniel Simmons-Marengo}.} \bibinfo{year}{2021}\natexlab{}.
\newblock \bibinfo{title}{Getting more useful results with differential privacy}.
\newblock \bibinfo{howpublished}{\url{https://desfontain.es/privacy/more-useful-results-dp.html}}.
\newblock
\newblock
\shownote{Ted is writing things (personal blog)}.


\bibitem[Di~Leo and Sardanelli(2020)]%
        {di2020statistical}
\bibfield{author}{\bibinfo{person}{Giovanni Di~Leo} {and} \bibinfo{person}{Francesco Sardanelli}.} \bibinfo{year}{2020}\natexlab{}.
\newblock \showarticletitle{Statistical significance: p value, 0.05 threshold, and applications to radiomics—reasons for a conservative approach}.
\newblock \bibinfo{journal}{\emph{European radiology experimental}} \bibinfo{volume}{4}, \bibinfo{number}{1} (\bibinfo{year}{2020}), \bibinfo{pages}{1--8}.
\newblock


\bibitem[Dirick et~al\mbox{.}(2017)]%
        {dirick2017time}
\bibfield{author}{\bibinfo{person}{Lore Dirick}, \bibinfo{person}{Gerda Claeskens}, {and} \bibinfo{person}{Bart Baesens}.} \bibinfo{year}{2017}\natexlab{}.
\newblock \showarticletitle{Time to default in credit scoring using survival analysis: a benchmark study}.
\newblock \bibinfo{journal}{\emph{Journal of the Operational Research Society}} \bibinfo{volume}{68}, \bibinfo{number}{6} (\bibinfo{year}{2017}), \bibinfo{pages}{652--665}.
\newblock


\bibitem[Duan et~al\mbox{.}(2018)]%
        {duan2018odal}
\bibfield{author}{\bibinfo{person}{Rui Duan}, \bibinfo{person}{Mary~Regina Boland}, \bibinfo{person}{Jason~H Moore}, {and} \bibinfo{person}{Yong Chen}.} \bibinfo{year}{2018}\natexlab{}.
\newblock \showarticletitle{ODAL: A one-shot distributed algorithm to perform logistic regressions on electronic health records data from multiple clinical sites}. In \bibinfo{booktitle}{\emph{BIOCOMPUTING 2019: Proceedings of the Pacific Symposium}}. World Scientific, \bibinfo{pages}{30--41}.
\newblock


\bibitem[Dwork and Roth(2014)]%
        {Dwork2014book}
\bibfield{author}{\bibinfo{person}{Cynthia Dwork} {and} \bibinfo{person}{Aaron Roth}.} \bibinfo{year}{2014}\natexlab{}.
\newblock \showarticletitle{{The Algorithmic Foundations of Differential Privacy}}.
\newblock \bibinfo{journal}{\emph{Foundations and Trends in Theoretical Computer Science}} \bibinfo{volume}{9}, \bibinfo{number}{3--4} (\bibinfo{year}{2014}).
\newblock


\bibitem[Efron and RJ(1993)]%
        {efron1993introduction}
\bibfield{author}{\bibinfo{person}{B Efron} {and} \bibinfo{person}{Tibshirani RJ}.} \bibinfo{year}{1993}\natexlab{}.
\newblock \showarticletitle{An introduction to the bootstrap. Chapman and Hall, New York, NY}.
\newblock \bibinfo{journal}{\emph{Farrell, J., Johnston, M. and Twynam, D.(1998),‘‘Volunteer motivation, satisfaction, and management at an elite sporting competition’’, Journal of Sport Management}}  \bibinfo{volume}{12} (\bibinfo{year}{1993}), \bibinfo{pages}{288--300}.
\newblock


\bibitem[El~Emam et~al\mbox{.}(2011)]%
        {el2011systematic}
\bibfield{author}{\bibinfo{person}{Khaled El~Emam}, \bibinfo{person}{Elizabeth Jonker}, \bibinfo{person}{Luk Arbuckle}, {and} \bibinfo{person}{Bradley Malin}.} \bibinfo{year}{2011}\natexlab{}.
\newblock \showarticletitle{A systematic review of re-identification attacks on health data}.
\newblock \bibinfo{journal}{\emph{PloS one}} \bibinfo{volume}{6}, \bibinfo{number}{12} (\bibinfo{year}{2011}), \bibinfo{pages}{e28071}.
\newblock


\bibitem[Foekens et~al\mbox{.}(2000)]%
        {foekens2000urokinase}
\bibfield{author}{\bibinfo{person}{John~A Foekens}, \bibinfo{person}{Harry~A Peters}, \bibinfo{person}{Maxime~P Look}, \bibinfo{person}{Henk Portengen}, \bibinfo{person}{Manfred Schmitt}, \bibinfo{person}{Michael~D Kramer}, \bibinfo{person}{Nils Br{\"u}nner}, \bibinfo{person}{Fritz J{\"a}nicke}, \bibinfo{person}{Marion~E Meijer-van Gelder}, \bibinfo{person}{Sonja~C Henzen-Logmans}, {et~al\mbox{.}}} \bibinfo{year}{2000}\natexlab{}.
\newblock \showarticletitle{The urokinase system of plasminogen activation and prognosis in 2780 breast cancer patients}.
\newblock \bibinfo{journal}{\emph{Cancer research}} \bibinfo{volume}{60}, \bibinfo{number}{3} (\bibinfo{year}{2000}), \bibinfo{pages}{636--643}.
\newblock


\bibitem[Froelicher et~al\mbox{.}(2021)]%
        {froelicher2021truly}
\bibfield{author}{\bibinfo{person}{David Froelicher}, \bibinfo{person}{Juan~R Troncoso-Pastoriza}, \bibinfo{person}{Jean~Louis Raisaro}, \bibinfo{person}{Michel~A Cuendet}, \bibinfo{person}{Joao~Sa Sousa}, \bibinfo{person}{Hyunghoon Cho}, \bibinfo{person}{Bonnie Berger}, \bibinfo{person}{Jacques Fellay}, {and} \bibinfo{person}{Jean-Pierre Hubaux}.} \bibinfo{year}{2021}\natexlab{}.
\newblock \showarticletitle{Truly privacy-preserving federated analytics for precision medicine with multiparty homomorphic encryption}.
\newblock \bibinfo{journal}{\emph{Nature communications}} \bibinfo{volume}{12}, \bibinfo{number}{1} (\bibinfo{year}{2021}), \bibinfo{pages}{1--10}.
\newblock


\bibitem[Glinsky et~al\mbox{.}(2004)]%
        {glinsky2004gene}
\bibfield{author}{\bibinfo{person}{Gennadi~V Glinsky}, \bibinfo{person}{Anna~B Glinskii}, \bibinfo{person}{Andrew~J Stephenson}, \bibinfo{person}{Robert~M Hoffman}, \bibinfo{person}{William~L Gerald}, {et~al\mbox{.}}} \bibinfo{year}{2004}\natexlab{}.
\newblock \showarticletitle{Gene expression profiling predicts clinical outcome of prostate cancer}.
\newblock \bibinfo{journal}{\emph{The Journal of clinical investigation}} \bibinfo{volume}{113}, \bibinfo{number}{6} (\bibinfo{year}{2004}), \bibinfo{pages}{913--923}.
\newblock


\bibitem[Goel et~al\mbox{.}(2010)]%
        {goel2010understanding}
\bibfield{author}{\bibinfo{person}{Manish~Kumar Goel}, \bibinfo{person}{Pardeep Khanna}, {and} \bibinfo{person}{Jugal Kishore}.} \bibinfo{year}{2010}\natexlab{}.
\newblock \showarticletitle{Understanding survival analysis: Kaplan-Meier estimate}.
\newblock \bibinfo{journal}{\emph{International journal of Ayurveda research}} \bibinfo{volume}{1}, \bibinfo{number}{4} (\bibinfo{year}{2010}), \bibinfo{pages}{274}.
\newblock


\bibitem[Goldhirsch et~al\mbox{.}(1989)]%
        {goldhirsch1989costs}
\bibfield{author}{\bibinfo{person}{ARRPA Goldhirsch}, \bibinfo{person}{Richard~D Gelber}, \bibinfo{person}{R~John Simes}, \bibinfo{person}{Paul Glasziou}, {and} \bibinfo{person}{Alan~S Coates}.} \bibinfo{year}{1989}\natexlab{}.
\newblock \showarticletitle{Costs and benefits of adjuvant therapy in breast cancer: a quality-adjusted survival analysis.}
\newblock \bibinfo{journal}{\emph{Journal of Clinical Oncology}} \bibinfo{volume}{7}, \bibinfo{number}{1} (\bibinfo{year}{1989}), \bibinfo{pages}{36--44}.
\newblock


\bibitem[Gondara and Wang(2020)]%
        {gondara2020differentially}
\bibfield{author}{\bibinfo{person}{Lovedeep Gondara} {and} \bibinfo{person}{Ke Wang}.} \bibinfo{year}{2020}\natexlab{}.
\newblock \showarticletitle{Differentially Private Survival Function Estimation}. In \bibinfo{booktitle}{\emph{Machine Learning for Healthcare Conference}}. PMLR, \bibinfo{pages}{271--291}.
\newblock


\bibitem[Greenwood et~al\mbox{.}(1926)]%
        {greenwood1926report}
\bibfield{author}{\bibinfo{person}{Major Greenwood} {et~al\mbox{.}}} \bibinfo{year}{1926}\natexlab{}.
\newblock \showarticletitle{A report on the natural duration of cancer.}
\newblock \bibinfo{journal}{\emph{A Report on the Natural Duration of Cancer.}} \bibinfo{number}{33} (\bibinfo{year}{1926}).
\newblock


\bibitem[Guyot et~al\mbox{.}(2012)]%
        {guyot2012enhanced}
\bibfield{author}{\bibinfo{person}{Patricia Guyot}, \bibinfo{person}{AE Ades}, \bibinfo{person}{Mario~JNM Ouwens}, {and} \bibinfo{person}{Nicky~J Welton}.} \bibinfo{year}{2012}\natexlab{}.
\newblock \showarticletitle{Enhanced secondary analysis of survival data: reconstructing the data from published Kaplan-Meier survival curves}.
\newblock \bibinfo{journal}{\emph{BMC medical research methodology}} \bibinfo{volume}{12}, \bibinfo{number}{1} (\bibinfo{year}{2012}), \bibinfo{pages}{1--13}.
\newblock


\bibitem[Hay et~al\mbox{.}(2009)]%
        {hay2009boosting}
\bibfield{author}{\bibinfo{person}{Michael Hay}, \bibinfo{person}{Vibhor Rastogi}, \bibinfo{person}{Gerome Miklau}, {and} \bibinfo{person}{Dan Suciu}.} \bibinfo{year}{2009}\natexlab{}.
\newblock \showarticletitle{Boosting the accuracy of differentially-private histograms through consistency}.
\newblock \bibinfo{journal}{\emph{arXiv preprint arXiv:0904.0942}} (\bibinfo{year}{2009}).
\newblock


\bibitem[Helwig(2017)]%
        {bootstrap}
\bibfield{author}{\bibinfo{person}{Nathaniel~E. Helwig}.} \bibinfo{year}{2017}\natexlab{}.
\newblock \bibinfo{title}{Bootstrap Confidence Intervals}.
\newblock
\newblock
\urldef\tempurl%
\url{http://users.stat.umn.edu/~helwig/notes/bootci-Notes.pdf}
\showURL{%
\tempurl}


\bibitem[Hern{\'a}ndez and Weiss(1996)]%
        {hernandez1996first}
\bibfield{author}{\bibinfo{person}{Eugenio Hern{\'a}ndez} {and} \bibinfo{person}{Guido Weiss}.} \bibinfo{year}{1996}\natexlab{}.
\newblock \bibinfo{booktitle}{\emph{A first course on wavelets}}.
\newblock \bibinfo{publisher}{CRC press}.
\newblock


\bibitem[Hill and Miller(2011)]%
        {hill2011combine}
\bibfield{author}{\bibinfo{person}{Theodore~P Hill} {and} \bibinfo{person}{Jack Miller}.} \bibinfo{year}{2011}\natexlab{}.
\newblock \showarticletitle{How to combine independent data sets for the same quantity}.
\newblock \bibinfo{journal}{\emph{Chaos: An Interdisciplinary Journal of Nonlinear Science}} \bibinfo{volume}{21}, \bibinfo{number}{3} (\bibinfo{year}{2011}), \bibinfo{pages}{033102}.
\newblock


\bibitem[Hodge~Jr et~al\mbox{.}(1999)]%
        {hodge1999legal}
\bibfield{author}{\bibinfo{person}{James~G Hodge~Jr}, \bibinfo{person}{Lawrence~O Gostin}, {and} \bibinfo{person}{Peter~D Jacobson}.} \bibinfo{year}{1999}\natexlab{}.
\newblock \showarticletitle{Legal issues concerning electronic health information: privacy, quality, and liability}.
\newblock \bibinfo{journal}{\emph{Jama}} \bibinfo{volume}{282}, \bibinfo{number}{15} (\bibinfo{year}{1999}), \bibinfo{pages}{1466--1471}.
\newblock


\bibitem[Homer et~al\mbox{.}(2008)]%
        {homer2008resolving}
\bibfield{author}{\bibinfo{person}{Nils Homer}, \bibinfo{person}{Szabolcs Szelinger}, \bibinfo{person}{Margot Redman}, \bibinfo{person}{David Duggan}, \bibinfo{person}{Waibhav Tembe}, \bibinfo{person}{Jill Muehling}, \bibinfo{person}{John~V Pearson}, \bibinfo{person}{Dietrich~A Stephan}, \bibinfo{person}{Stanley~F Nelson}, {and} \bibinfo{person}{David~W Craig}.} \bibinfo{year}{2008}\natexlab{}.
\newblock \showarticletitle{Resolving individuals contributing trace amounts of DNA to highly complex mixtures using high-density SNP genotyping microarrays}.
\newblock \bibinfo{journal}{\emph{PLoS genetics}} \bibinfo{volume}{4}, \bibinfo{number}{8} (\bibinfo{year}{2008}), \bibinfo{pages}{e1000167}.
\newblock


\bibitem[Hsu et~al\mbox{.}(2014)]%
        {hsu2014differential}
\bibfield{author}{\bibinfo{person}{Justin Hsu}, \bibinfo{person}{Marco Gaboardi}, \bibinfo{person}{Andreas Haeberlen}, \bibinfo{person}{Sanjeev Khanna}, \bibinfo{person}{Arjun Narayan}, \bibinfo{person}{Benjamin~C Pierce}, {and} \bibinfo{person}{Aaron Roth}.} \bibinfo{year}{2014}\natexlab{}.
\newblock \showarticletitle{Differential privacy: An economic method for choosing epsilon}. In \bibinfo{booktitle}{\emph{2014 IEEE 27th Computer Security Foundations Symposium}}. IEEE, \bibinfo{pages}{398--410}.
\newblock


\bibitem[Ioannidis(2019)]%
        {10.1001/jama.2019.4582}
\bibfield{author}{\bibinfo{person}{John P.~A. Ioannidis}.} \bibinfo{year}{2019}\natexlab{}.
\newblock \showarticletitle{{The Importance of Predefined Rules and Prespecified Statistical Analyses: Do Not Abandon Significance}}.
\newblock \bibinfo{journal}{\emph{JAMA}} \bibinfo{volume}{321}, \bibinfo{number}{21} (\bibinfo{date}{06} \bibinfo{year}{2019}), \bibinfo{pages}{2067--2068}.
\newblock
\showISSN{0098-7484}
\urldef\tempurl%
\url{https://doi.org/10.1001/jama.2019.4582}
\showDOI{\tempurl}


\bibitem[Kaplan and Meier(1958)]%
        {kaplan1958nonparametric}
\bibfield{author}{\bibinfo{person}{Edward~L Kaplan} {and} \bibinfo{person}{Paul Meier}.} \bibinfo{year}{1958}\natexlab{}.
\newblock \showarticletitle{Nonparametric estimation from incomplete observations}.
\newblock \bibinfo{journal}{\emph{Journal of the American statistical association}} \bibinfo{volume}{53}, \bibinfo{number}{282} (\bibinfo{year}{1958}), \bibinfo{pages}{457--481}.
\newblock


\bibitem[Katzman et~al\mbox{.}(2018)]%
        {katzman2018deepsurv}
\bibfield{author}{\bibinfo{person}{Jared~L Katzman}, \bibinfo{person}{Uri Shaham}, \bibinfo{person}{Alexander Cloninger}, \bibinfo{person}{Jonathan Bates}, \bibinfo{person}{Tingting Jiang}, {and} \bibinfo{person}{Yuval Kluger}.} \bibinfo{year}{2018}\natexlab{}.
\newblock \showarticletitle{DeepSurv: personalized treatment recommender system using a Cox proportional hazards deep neural network}.
\newblock \bibinfo{journal}{\emph{BMC medical research methodology}} \bibinfo{volume}{18}, \bibinfo{number}{1} (\bibinfo{year}{2018}), \bibinfo{pages}{1--12}.
\newblock


\bibitem[Kerkouche et~al\mbox{.}(2021)]%
        {kerkouche2021compression}
\bibfield{author}{\bibinfo{person}{Raouf Kerkouche}, \bibinfo{person}{Gergely {\'A}cs}, \bibinfo{person}{Claude Castelluccia}, {and} \bibinfo{person}{Pierre Genev{\`e}s}.} \bibinfo{year}{2021}\natexlab{}.
\newblock \showarticletitle{Compression boosts differentially private federated learning}. In \bibinfo{booktitle}{\emph{2021 IEEE European Symposium on Security and Privacy (EuroS\&P)}}. IEEE, \bibinfo{pages}{304--318}.
\newblock


\bibitem[Kleinbaum et~al\mbox{.}(2012)]%
        {kleinbaum2012survival}
\bibfield{author}{\bibinfo{person}{David~G Kleinbaum}, \bibinfo{person}{Mitchel Klein}, {et~al\mbox{.}}} \bibinfo{year}{2012}\natexlab{}.
\newblock \bibinfo{booktitle}{\emph{Survival analysis: a self-learning text}}. Vol.~\bibinfo{volume}{3}.
\newblock \bibinfo{publisher}{Springer}.
\newblock


\bibitem[Knaus et~al\mbox{.}(1995)]%
        {knaus1995support}
\bibfield{author}{\bibinfo{person}{William~A Knaus}, \bibinfo{person}{Frank~E Harrell}, \bibinfo{person}{Joanne Lynn}, \bibinfo{person}{Lee Goldman}, \bibinfo{person}{Russell~S Phillips}, \bibinfo{person}{Alfred~F Connors}, \bibinfo{person}{Neal~V Dawson}, \bibinfo{person}{William~J Fulkerson}, \bibinfo{person}{Robert~M Califf}, \bibinfo{person}{Norman Desbiens}, {et~al\mbox{.}}} \bibinfo{year}{1995}\natexlab{}.
\newblock \showarticletitle{The SUPPORT prognostic model: Objective estimates of survival for seriously ill hospitalized adults}.
\newblock \bibinfo{journal}{\emph{Annals of internal medicine}} \bibinfo{volume}{122}, \bibinfo{number}{3} (\bibinfo{year}{1995}), \bibinfo{pages}{191--203}.
\newblock


\bibitem[Kvamme and Borgan(2021)]%
        {kvamme2021continuous}
\bibfield{author}{\bibinfo{person}{H{\aa}vard Kvamme} {and} \bibinfo{person}{{\O}rnulf Borgan}.} \bibinfo{year}{2021}\natexlab{}.
\newblock \showarticletitle{Continuous and discrete-time survival prediction with neural networks}.
\newblock \bibinfo{journal}{\emph{Lifetime Data Analysis}} \bibinfo{volume}{27}, \bibinfo{number}{4} (\bibinfo{year}{2021}), \bibinfo{pages}{710--736}.
\newblock


\bibitem[Kvamme et~al\mbox{.}(2019)]%
        {kvamme2019time}
\bibfield{author}{\bibinfo{person}{H{\aa}vard Kvamme}, \bibinfo{person}{{\O}rnulf Borgan}, {and} \bibinfo{person}{Ida Scheel}.} \bibinfo{year}{2019}\natexlab{}.
\newblock \showarticletitle{Time-to-event prediction with neural networks and Cox regression}.
\newblock \bibinfo{journal}{\emph{arXiv preprint arXiv:1907.00825}} (\bibinfo{year}{2019}).
\newblock


\bibitem[Lee et~al\mbox{.}(2018)]%
        {lee2018deephit}
\bibfield{author}{\bibinfo{person}{Changhee Lee}, \bibinfo{person}{William Zame}, \bibinfo{person}{Jinsung Yoon}, {and} \bibinfo{person}{Mihaela van~der Schaar}.} \bibinfo{year}{2018}\natexlab{}.
\newblock \showarticletitle{Deephit: A deep learning approach to survival analysis with competing risks}. In \bibinfo{booktitle}{\emph{Proceedings of the AAAI Conference on Artificial Intelligence}}, Vol.~\bibinfo{volume}{32}.
\newblock


\bibitem[Leung et~al\mbox{.}(1997)]%
        {leung1997censoring}
\bibfield{author}{\bibinfo{person}{Kwan-Moon Leung}, \bibinfo{person}{Robert~M Elashoff}, {and} \bibinfo{person}{Abdelmonem~A Afifi}.} \bibinfo{year}{1997}\natexlab{}.
\newblock \showarticletitle{Censoring issues in survival analysis}.
\newblock \bibinfo{journal}{\emph{Annual review of public health}} \bibinfo{volume}{18}, \bibinfo{number}{1} (\bibinfo{year}{1997}), \bibinfo{pages}{83--104}.
\newblock


\bibitem[Lu(2002)]%
        {lu2002predicting}
\bibfield{author}{\bibinfo{person}{Junxiang Lu}.} \bibinfo{year}{2002}\natexlab{}.
\newblock \showarticletitle{Predicting customer churn in the telecommunications industry----An application of survival analysis modeling using SAS}. In \bibinfo{booktitle}{\emph{SAS User Group International (SUGI27) Online Proceedings}}, Vol.~\bibinfo{volume}{114}.
\newblock


\bibitem[Makhoul(1980)]%
        {makhoul1980fast}
\bibfield{author}{\bibinfo{person}{John Makhoul}.} \bibinfo{year}{1980}\natexlab{}.
\newblock \showarticletitle{A fast cosine transform in one and two dimensions}.
\newblock \bibinfo{journal}{\emph{IEEE Transactions on Acoustics, Speech, and Signal Processing}} \bibinfo{volume}{28}, \bibinfo{number}{1} (\bibinfo{year}{1980}), \bibinfo{pages}{27--34}.
\newblock


\bibitem[Mantel et~al\mbox{.}(1966)]%
        {mantel1966evaluation}
\bibfield{author}{\bibinfo{person}{Nathan Mantel} {et~al\mbox{.}}} \bibinfo{year}{1966}\natexlab{}.
\newblock \showarticletitle{Evaluation of survival data and two new rank order statistics arising in its consideration}.
\newblock \bibinfo{journal}{\emph{Cancer Chemother Rep}} \bibinfo{volume}{50}, \bibinfo{number}{3} (\bibinfo{year}{1966}), \bibinfo{pages}{163--170}.
\newblock


\bibitem[Mih{\'a}ly et~al\mbox{.}(2013)]%
        {mihaly2013meta}
\bibfield{author}{\bibinfo{person}{Zsuzsanna Mih{\'a}ly}, \bibinfo{person}{M{\'a}t{\'e} Kormos}, \bibinfo{person}{Andr{\'a}s L{\'a}nczky}, \bibinfo{person}{Magdolna Dank}, \bibinfo{person}{Jan Budczies}, \bibinfo{person}{Marcell~A Sz{\'a}sz}, {and} \bibinfo{person}{Bal{\'a}zs Gy{\H{o}}rffy}.} \bibinfo{year}{2013}\natexlab{}.
\newblock \showarticletitle{A meta-analysis of gene expression-based biomarkers predicting outcome after tamoxifen treatment in breast cancer}.
\newblock \bibinfo{journal}{\emph{Breast cancer research and treatment}} \bibinfo{volume}{140}, \bibinfo{number}{2} (\bibinfo{year}{2013}), \bibinfo{pages}{219--232}.
\newblock


\bibitem[Nissim et~al\mbox{.}(2007)]%
        {nissim2007smooth}
\bibfield{author}{\bibinfo{person}{Kobbi Nissim}, \bibinfo{person}{Sofya Raskhodnikova}, {and} \bibinfo{person}{Adam Smith}.} \bibinfo{year}{2007}\natexlab{}.
\newblock \showarticletitle{Smooth sensitivity and sampling in private data analysis}. In \bibinfo{booktitle}{\emph{Proceedings of the thirty-ninth annual ACM symposium on Theory of computing}}. \bibinfo{pages}{75--84}.
\newblock


\bibitem[Piovani et~al\mbox{.}(2021)]%
        {piovani2021pitfalls}
\bibfield{author}{\bibinfo{person}{Daniele Piovani}, \bibinfo{person}{Georgios K~Nikolopoulos}, {and} \bibinfo{person}{Stefanos Bonovas}.} \bibinfo{year}{2021}\natexlab{}.
\newblock \showarticletitle{Pitfalls and perils of survival analysis under incorrect assumptions: the case of COVID-19 data}.
\newblock \bibinfo{journal}{\emph{Biomedica}}  \bibinfo{volume}{41} (\bibinfo{year}{2021}), \bibinfo{pages}{21--28}.
\newblock


\bibitem[Politou et~al\mbox{.}(2018)]%
        {politou2018forgetting}
\bibfield{author}{\bibinfo{person}{Eugenia Politou}, \bibinfo{person}{Efthimios Alepis}, {and} \bibinfo{person}{Constantinos Patsakis}.} \bibinfo{year}{2018}\natexlab{}.
\newblock \showarticletitle{Forgetting personal data and revoking consent under the GDPR: Challenges and proposed solutions}.
\newblock \bibinfo{journal}{\emph{Journal of cybersecurity}} \bibinfo{volume}{4}, \bibinfo{number}{1} (\bibinfo{year}{2018}), \bibinfo{pages}{tyy001}.
\newblock


\bibitem[Ponomareva et~al\mbox{.}(2023)]%
        {ponomareva2023dp}
\bibfield{author}{\bibinfo{person}{Natalia Ponomareva}, \bibinfo{person}{Hussein Hazimeh}, \bibinfo{person}{Alex Kurakin}, \bibinfo{person}{Zheng Xu}, \bibinfo{person}{Carson Denison}, \bibinfo{person}{H~Brendan McMahan}, \bibinfo{person}{Sergei Vassilvitskii}, \bibinfo{person}{Steve Chien}, {and} \bibinfo{person}{Abhradeep Thakurta}.} \bibinfo{year}{2023}\natexlab{}.
\newblock \showarticletitle{How to dp-fy ml: A practical guide to machine learning with differential privacy}.
\newblock \bibinfo{journal}{\emph{arXiv preprint arXiv:2303.00654}} (\bibinfo{year}{2023}).
\newblock


\bibitem[Ranganathan et~al\mbox{.}(2012)]%
        {ranganathan2012censoring}
\bibfield{author}{\bibinfo{person}{Priya Ranganathan}, \bibinfo{person}{CS Pramesh}, {et~al\mbox{.}}} \bibinfo{year}{2012}\natexlab{}.
\newblock \showarticletitle{Censoring in survival analysis: potential for bias}.
\newblock \bibinfo{journal}{\emph{Perspectives in clinical research}} \bibinfo{volume}{3}, \bibinfo{number}{1} (\bibinfo{year}{2012}), \bibinfo{pages}{40}.
\newblock


\bibitem[Rastogi and Nath(2010)]%
        {rastogi2010differentially}
\bibfield{author}{\bibinfo{person}{Vibhor Rastogi} {and} \bibinfo{person}{Suman Nath}.} \bibinfo{year}{2010}\natexlab{}.
\newblock \showarticletitle{Differentially private aggregation of distributed time-series with transformation and encryption}. In \bibinfo{booktitle}{\emph{Proceedings of the 2010 ACM SIGMOD International Conference on Management of data}}. \bibinfo{pages}{735--746}.
\newblock


\bibitem[Rocher et~al\mbox{.}(2019)]%
        {rocher2019estimating}
\bibfield{author}{\bibinfo{person}{Luc Rocher}, \bibinfo{person}{Julien~M Hendrickx}, {and} \bibinfo{person}{Yves-Alexandre De~Montjoye}.} \bibinfo{year}{2019}\natexlab{}.
\newblock \showarticletitle{Estimating the success of re-identifications in incomplete datasets using generative models}.
\newblock \bibinfo{journal}{\emph{Nature communications}} \bibinfo{volume}{10}, \bibinfo{number}{1} (\bibinfo{year}{2019}), \bibinfo{pages}{1--9}.
\newblock


\bibitem[Rossi et~al\mbox{.}(2012)]%
        {rossi2012carboplatin}
\bibfield{author}{\bibinfo{person}{Antonio Rossi}, \bibinfo{person}{Massimo Di~Maio}, \bibinfo{person}{Paolo Chiodini}, \bibinfo{person}{Robin~Michael Rudd}, \bibinfo{person}{Hiroaki Okamoto}, \bibinfo{person}{Dimosthenis~Vasilios Skarlos}, \bibinfo{person}{M Fruh}, \bibinfo{person}{Wendi Qian}, \bibinfo{person}{Tomohide Tamura}, \bibinfo{person}{Epaminondas Samantas}, {et~al\mbox{.}}} \bibinfo{year}{2012}\natexlab{}.
\newblock \showarticletitle{Carboplatin-or cisplatin-based chemotherapy in first-line treatment of small-cell lung cancer: the COCIS meta-analysis of individual patient data}.
\newblock \bibinfo{journal}{\emph{Database of Abstracts of Reviews of Effects (DARE): Quality-assessed Reviews [Internet]}} (\bibinfo{year}{2012}).
\newblock


\bibitem[Roth(2022)]%
        {roth2022private}
\bibfield{author}{\bibinfo{person}{Edo Roth}.} \bibinfo{year}{2022}\natexlab{}.
\newblock \showarticletitle{Private Federated Analytics At Scale}.
\newblock  (\bibinfo{year}{2022}).
\newblock


\bibitem[Sawyer(2003)]%
        {sawyer2003greenwood}
\bibfield{author}{\bibinfo{person}{S Sawyer}.} \bibinfo{year}{2003}\natexlab{}.
\newblock \showarticletitle{The greenwood and exponential greenwood confidence intervals in survival analysis}.
\newblock \bibinfo{journal}{\emph{Applied survival analysis: regression modeling of time to event data}} (\bibinfo{year}{2003}), \bibinfo{pages}{1--14}.
\newblock


\bibitem[Schumacher et~al\mbox{.}(1994)]%
        {schumacher1994randomized}
\bibfield{author}{\bibinfo{person}{M Schumacher}, \bibinfo{person}{G Bastert}, \bibinfo{person}{H Bojar}, \bibinfo{person}{K H{\"u}bner}, \bibinfo{person}{M Olschewski}, \bibinfo{person}{W Sauerbrei}, \bibinfo{person}{C Schmoor}, \bibinfo{person}{C Beyerle}, \bibinfo{person}{RL Neumann}, {and} \bibinfo{person}{HF Rauschecker}.} \bibinfo{year}{1994}\natexlab{}.
\newblock \showarticletitle{Randomized 2 x 2 trial evaluating hormonal treatment and the duration of chemotherapy in node-positive breast cancer patients. German Breast Cancer Study Group.}
\newblock \bibinfo{journal}{\emph{Journal of Clinical Oncology}} \bibinfo{volume}{12}, \bibinfo{number}{10} (\bibinfo{year}{1994}), \bibinfo{pages}{2086--2093}.
\newblock


\bibitem[Strang(1999)]%
        {strang1999discrete}
\bibfield{author}{\bibinfo{person}{Gilbert Strang}.} \bibinfo{year}{1999}\natexlab{}.
\newblock \showarticletitle{The discrete cosine transform}.
\newblock \bibinfo{journal}{\emph{SIAM review}} \bibinfo{volume}{41}, \bibinfo{number}{1} (\bibinfo{year}{1999}), \bibinfo{pages}{135--147}.
\newblock


\bibitem[Vogelsang et~al\mbox{.}(2020)]%
        {vogelsang2020secure}
\bibfield{author}{\bibinfo{person}{Lennart Vogelsang}, \bibinfo{person}{Moritz Lehne}, \bibinfo{person}{Phillipp Schoppmann}, \bibinfo{person}{Fabian Prasser}, \bibinfo{person}{Sylvia Thun}, \bibinfo{person}{Bj{\"o}rn Scheuermann}, {and} \bibinfo{person}{Josef Schepers}.} \bibinfo{year}{2020}\natexlab{}.
\newblock \showarticletitle{A Secure Multi-Party Computation Protocol for Time-To-Event Analyses}.
\newblock In \bibinfo{booktitle}{\emph{Digital Personalized Health and Medicine}}. \bibinfo{publisher}{IOS Press}, \bibinfo{pages}{8--12}.
\newblock


\bibitem[Voigt and Von~dem Bussche(2017)]%
        {voigt2017eu}
\bibfield{author}{\bibinfo{person}{Paul Voigt} {and} \bibinfo{person}{Axel Von~dem Bussche}.} \bibinfo{year}{2017}\natexlab{}.
\newblock \showarticletitle{The eu general data protection regulation (gdpr)}.
\newblock \bibinfo{journal}{\emph{A Practical Guide, 1st Ed., Cham: Springer International Publishing}} \bibinfo{volume}{10}, \bibinfo{number}{3152676} (\bibinfo{year}{2017}), \bibinfo{pages}{10--5555}.
\newblock


\bibitem[von Maltitz et~al\mbox{.}(2021)]%
        {von2021privacy}
\bibfield{author}{\bibinfo{person}{Marcel von Maltitz}, \bibinfo{person}{Hendrik Ballhausen}, \bibinfo{person}{David Kaul}, \bibinfo{person}{Daniel~F Fleischmann}, \bibinfo{person}{Maximilian Niyazi}, \bibinfo{person}{Claus Belka}, {and} \bibinfo{person}{Georg Carle}.} \bibinfo{year}{2021}\natexlab{}.
\newblock \showarticletitle{A Privacy-Preserving Log-Rank Test for the Kaplan-Meier Estimator With Secure Multiparty Computation: Algorithm Development and Validation}.
\newblock \bibinfo{journal}{\emph{JMIR medical informatics}} \bibinfo{volume}{9}, \bibinfo{number}{1} (\bibinfo{year}{2021}), \bibinfo{pages}{e22158}.
\newblock


\bibitem[Wei and Royston(2017)]%
        {wei2017reconstructing}
\bibfield{author}{\bibinfo{person}{Yinghui Wei} {and} \bibinfo{person}{Patrick Royston}.} \bibinfo{year}{2017}\natexlab{}.
\newblock \showarticletitle{Reconstructing time-to-event data from published Kaplan--Meier curves}.
\newblock \bibinfo{journal}{\emph{The Stata Journal}} \bibinfo{volume}{17}, \bibinfo{number}{4} (\bibinfo{year}{2017}), \bibinfo{pages}{786--802}.
\newblock


\bibitem[Xi et~al\mbox{.}(2022)]%
        {xi2022review}
\bibfield{author}{\bibinfo{person}{Peng Xi}, \bibinfo{person}{Xinglong Zhang}, \bibinfo{person}{Lian Wang}, \bibinfo{person}{Wenjuan Liu}, {and} \bibinfo{person}{Shaoliang Peng}.} \bibinfo{year}{2022}\natexlab{}.
\newblock \showarticletitle{A review of Blockchain-based secure sharing of healthcare data}.
\newblock \bibinfo{journal}{\emph{Applied Sciences}} \bibinfo{volume}{12}, \bibinfo{number}{15} (\bibinfo{year}{2022}), \bibinfo{pages}{7912}.
\newblock


\end{thebibliography}

%
\appendix
\section{Supplementary Information Appendix}

\subsection{Discrete Cosine Transform (DCT)}
\label{sec:appendix-dct}
Discrete cosine transform (DCT) is a change of basis for a finite-dimensional signal of the form $\mathbf{X} = [x_0, ...,x_{N-1}]$ and can be described by a linear and invertible function $f: \mathbb{R}^N \rightarrow \mathbb{R}^N$. The new "bases" for this transformation are in the form of cosine functions with different oscillating frequencies. So, by DCT, we transform a vector of $\mathbf{X} = [x_0, ...,x_{N-1}]$ to another vector $\mathbf{Y} = [y_0, ...,y_{N-1}]$ with the same number of components. Formally, DCT is defined as:
\begin{alignat}{3}
&&& y_k = c_{k, N}\sum_{n=0}^{N-1} x_n \cos{(\frac{k\pi(2n+1)}{2N})}\label{eq:dct-def}
\end{alignat}
with $c_{0, N} = \sqrt{\frac{1}{N}}$ and $c_{k,N}=\sqrt{\frac{2}{N}} \ \forall k>0$. The new elements $y_k$ are the projection of the original vector $\mathbf{X}$ onto the cosine bases, so these can be seen as the "coefficients" of the vector $\mathbf{X}$ in the space spanned by the new bases. It can be proved that these bases are orthonormal~\cite{strang1999discrete, hernandez1996first}, that is, their $l_2$ norm is 1 and that they are all mutually orthogonal. A general property of changing bases with orthonormal transformations is that the $l_2$ norm is invariant under these transformations. So, for a properly scaled DCT transformation, the $l_2$ norm is preserved. The inverse cosine transform can be obtained through:
\begin{alignat}{3}
\label{eq:idct-def}
    &&& x_n = c_{k,N}\sum_{k=0}^{N-1} y_k \cos(\frac{k\pi(2n+1)}{2N})
\end{alignat}
This is the original vector $\mathbf{X}$, now decomposed onto the new orthonormal cosine basis. 
\subsection{Sensitivity of $\s(t)$}
\label{sec:appendix-proof-s}

To apply differential privacy on the Kaplan-Meier estimator $\s(T)$, we consider the notion of neighboring datasets where one dataset is obtained by \textit{changing} one data point in the other dataset (bounded differential privacy). So, both our neighboring datasets have the same number of total data points $N$. Here, we measure the value of the KM estimator in equidistant time intervals $t_i \in \{t_0 = 0, t_1 = b, ..., T_{\max} \}$ with a fixed distance of $b$ to simultaneously guarantee the privacy of times of events. In this case, the most significant effect that one data point might have on $\s(t)$ is obtained by changing the event time for a point that experiences the event $e=1$ at time $T_{\max}$ to $t_1$. We assume that this is true since, as described, the effect of censored data on the calculated value of the Kaplan-Meier curve is minimal (only appearing in the risk set of Equation~\ref{eq:kmestimator}) compared to points that experience the event of interest. And also since the effect of an event happening at time $t$ demonstrates itself in all the calculated KM values of the following time-steps $\geq t$ (see Equation~\ref{eq:kmestimator}). Note that neighboring datasets will have the same number of data points $N$ and also the same number of events $\sum_{i=1}^{\tm} d_i$ and the same number of censored points $\sum_{i=1}^{\tm}c_i$.

\myparagraph{Sensitivity with no censoring in dataset} To develop an initial intuition, we start with the case where no censoring data is present in the dataset, as this will be the easiest to bound for sensitivity. Assuming that $\s(t)$ is measured on a dataset with a total of $N$ datapoints with \textbf{no} censored data, and $S'(t)$ is measured for the neighboring dataset obtained by moving the time of the event of a point to $t_1$, we have:

\begin{alignat}{3}
\s_1& =&& \frac{N - d_1}{N},\qquad \s'_1 = \frac{N-d_1-1}{N}\nonumber\\ &\Rightarrow&& \s_1 - \s'_1 = \frac{1}{N}\\
\s_2& =&& \s_1 . \frac{r_2 - d_2}{r_2} = \frac{N- d_1}{N} . \frac{N - d_1 - d_2}{N - d_1} \nonumber  \\
&=&&\frac{N-d_1 - d_2}{N}\\
\s'_2& =&& \s'_1 .\frac{r'_2 - d_2}{r'_2} = \frac{N - d_1 -1 }{N} . \frac{N - d_1 -1 - d_2}{N - d_1 -1}\nonumber \\
&=&& \frac{N - d_1 -1 - d_2}{N}\\
&\Rightarrow&& \s_2 - \s'_2 = \frac{1}{N}
\end{alignat}
Where $\s_i, d_i$ and $r_i$ denote the KM estimator, number of events and risk set measured at time $t_i$, respectively. The third line is derived from the definition of $\displaystyle{\s(t) = \s(t-1)\times \frac{r_t - d_t}{r_t}}$. Now we hypothesize that for the $k$-th term of the survival function we have: 
\begin{alignat}{3}
\s_k& =&& \frac{N - d_1 - d_2 - ... - d_k}{N}\\
\s'_k& =&& \frac{N - d_1 -1 - d_2 - ... - d_k}{N}
\end{alignat}
and show that for the $k+1$-th term we have:
\begin{alignat}{3}
\s_{k+1}& =&& \s_k \times \frac{r_{k+1} - d_{k+1}}{r_{k+1}} \nonumber\\
&=&& \frac{N - d_1 - ... - d_k}{N}. \frac{N-d_1 -... - d_{k+1}}{N- d_1 - ... - d_k}\nonumber\\
&=&&\frac{N -d_1 - ... - d_{k+1}}{N}\label{eq:app-st}\\
\s'_{k+1} &=&& \s'_k\times \frac{r'_{k+1} - d_{k+1}}{r'_{k+1}} \nonumber\\
&=&& \frac{N-d_1 - 1 -... - d_k}{N}. \frac{N-d_1-1 - ... - d_{k+1}}{N-d_1 -1 - ...- d_k}\nonumber\\
&=&& \frac{N - d_1 - 1- ... - d_{k+1}}{N}\label{eq:app-stp}
\end{alignat}
Therefore, we prove by induction that our hypothesis is correct. Now we can calculate the difference of any term $k$ between $S$ and $S'$:
\begin{alignat}{3}
\s_k - \s'_k &=&& \frac{N - d_1 - ... - d_k}{N} - \frac{N - d_1 - 1 - ...- d_k}{N} \nonumber\\
&=&& \frac{1}{N}
\end{alignat}
For the last term, there is also a difference between $\s_{\tm}$ and $\s'_{\tm}$, since now for $\s'$ a datapoint experiencing the event is missing:
\begin{alignat}{3}
&&&\s_{\tm} - \s'_{\tm} = \frac{N-d_1-...d_{\tm}}{N} -\nonumber\\ &&&\frac{N-d_1-1-...-(d_{\tm}-1)}{N} = \frac{0}{N}\nonumber
\end{alignat}
So over the time-span of the study the total difference of $\s$ and $\s'$ would be:
\begin{alignat}{3}
&&&\Delta_1\s_{\text{no censor}} = \|\s - \s'\|_1 = \frac{T-1}{N} \label{eq:app-sl1-nocensor-proof}\\
&&&\Delta_2\s_{\text{no censor}} =\|\s - \s'\|_2 = \frac{\sqrt{T-1}}{N}\label{eq:app-sl2-nocensor-proof}
\end{alignat}
Where $T$ is the number of time bins, i.e. $T = T_{\max}/b$. 

\myparagraph{Sensitivity with censoring in dataset} Now we move to the more general case of datasets with censored points. We first calculate the sensitivity for when neighboring datasets are obtained by changing a point $x = \{t=\tm, e=1\}$ in $D$ to $x' =\{t=1, e=1\}$ in $D'$. Later, we also calculate the sensitivities for the cases of neighboring datasets being obtained by changing $x=\{t=1, e=1\}$ to $x'=\{t=1, e=0\}$ as well as $x=\{t=\tm, e=1\}$ to $x'=\{t=1, e=0\}$ and show that our assumed neighboring case of changing $x = \{t=\tm, e=1\}$ to $x' =\{t=1, e=1\}$ results in the largest sensitivity. 

\begin{alignat}{3}
\s_t=& \frac{N - d_1}{N} \frac{N-d_1-c_1-d_2}{N - d_1 - c_1}\times ...&& \nonumber\\
&\times\frac{N - d_1 -...d_{t-1} - c_1 -....c_{t-1} - d_t}{N-d_1 -...d_{t-1} -c_1 ...-c_{t-1}}&&\nonumber\\
&\label{eq:def-st}&&\\
\s'_t=& \frac{N - d_1 -1}{N} \frac{N-1-d_1-c_1-d_2}{N -1- d_1 - c_1}\times...&& \nonumber\\
&\times\frac{N - 1 -d_1 -...d_{t-1} - c_1 -....c_{t-1} - d_t}{N-1-d_1 -...d_{t-1} -c_1 ...-c_{t-1}}&&\nonumber\\
&\label{eq:def-sprimet}&&
\end{alignat}

\begin{lemma} $\forall A, B, c\in \mathbb{N}$ if ($B, B-c>0 \land A, c\geq 0 \land B\geq A$) :
\label{eq:app-upper-bound}
\begin{alignat}{3}
&\frac{A - c}{B-c}\leq \frac{A}{B}&&\nonumber
\end{alignat}
\end{lemma}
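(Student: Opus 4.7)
The plan is to reduce the claim to a transparent nonnegativity statement by cross-multiplication. Since the hypothesis guarantees $B > 0$ and $B - c > 0$, both denominators are strictly positive, so the inequality $\frac{A-c}{B-c} \leq \frac{A}{B}$ is equivalent to $(A-c)B \leq A(B-c)$ after multiplying through by the positive quantity $B(B-c)$ without flipping the sign.

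Expanding both sides yields $AB - cB \leq AB - Ac$, which simplifies to $Ac \leq cB$, i.e.\ $c(B-A) \geq 0$. At this point the hypotheses $c \geq 0$ and $B \geq A$ give the result immediately, since the product of two nonnegative numbers is nonnegative.

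I do not anticipate any real obstacle: the lemma is a routine algebraic manipulation, and the positivity hypotheses on $B$ and $B-c$ are exactly what is needed to justify the cross-multiplication step (otherwise the direction of the inequality could flip). The only thing worth being careful about is to argue in the ``equivalent to'' direction rather than the ``implies'' direction, so that the chain of simplifications legitimately yields the original inequality; writing the argument as a chain of equivalences, culminating in $c(B-A) \geq 0$, handles this cleanly.
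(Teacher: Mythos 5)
Your proposal is correct and follows essentially the same route as the paper: cross-multiplying by the positive denominators and reducing to $c(B-A)\geq 0$, which is exactly the chain $AB - Bc \leq AB - Ac \Leftrightarrow Bc \geq Ac \Leftrightarrow B \geq A$ given in the paper's one-line proof. Your version is slightly more careful in making explicit that $B>0$ and $B-c>0$ justify the cross-multiplication, but the argument is the same.
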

\begin{proof}
$AB - Bc \leq AB - Ac \Leftrightarrow Bc \geq Ac \Leftrightarrow B\geq A$\nonumber
\end{proof}
\begin{lemma}
\label{eq:app-lower-bound}
$\forall A, B, c\in \mathbb{N}$ if $B, B-c>0 \land A, c\geq 0 $:
\begin{alignat}{3}
\frac{A}{B} \leq \frac{A}{B-c}\nonumber
\end{alignat}
\end{lemma}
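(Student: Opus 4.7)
The plan is to reduce the claimed inequality to the trivial fact $Ac \geq 0$. Since both denominators $B$ and $B-c$ are strictly positive by hypothesis, their product $B(B-c)$ is positive, so I can multiply both sides of $\tfrac{A}{B} \leq \tfrac{A}{B-c}$ by $B(B-c)$ without flipping the inequality. This reduces the claim to $A(B-c) \leq AB$, which upon expanding becomes $-Ac \leq 0$, equivalently $Ac \geq 0$. The latter holds immediately from the assumptions $A \geq 0$ and $c \geq 0$, completing the argument.

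An equally short alternative route, which I might prefer for parallelism with the proof of the preceding lemma, is to first note that $c \geq 0$ gives $B - c \leq B$, and since both sides are positive, reciprocation reverses the inequality to yield $\tfrac{1}{B} \leq \tfrac{1}{B-c}$. Multiplying through by the nonnegative scalar $A$ preserves the inequality and delivers the desired conclusion. Since the proof is a one-line algebraic manipulation, I do not anticipate any real obstacle; the only point requiring attention is to explicitly invoke the positivity hypotheses on $B$ and $B-c$ so that either the cross-multiplication step or the reciprocation step is sign-preserving. The integrality assumption $A, B, c \in \mathbb{N}$ plays no role beyond the sign information already captured by the stated inequalities and can be dropped in the actual write-up.
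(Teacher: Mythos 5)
Your proposal is correct and matches the paper's proof, which likewise cross-multiplies by the positive product of denominators and reduces the claim to $AB - Ac \leq AB \Leftrightarrow Ac \geq 0$. The alternative reciprocation route you mention is a harmless variant of the same one-line argument.
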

\begin{proof}
$AB - Ac \leq AB \Leftrightarrow Ac\geq 0 $\nonumber
\end{proof}

So we can upper and lower bound each individual $\s_t$ and $\s'_t$ by either adding $\sum_{i=1}^{t'-1} c_i$ to both numerator and denominator of each fraction which represents the new term at time $t'$ (for the upper bound) or adding $c_{t'-1}$ to only the denominator of each fraction at time $t'$ in the Equation \ref{eq:def-st} and \ref{eq:def-sprimet} (for lower bound), that is:
\begin{alignat}{3}
&&&\frac{N-d_1}{N}\frac{N-d_1-c_1-d_2}{N-d_1-c_1+c_1}\times...\nonumber\\
&&&\frac{N-d_1-...d_{t-1}-c_1-...c_{t-1}-d_t}{N-d_1-...d_{t-1}-c_1-...c_{t-1}+c_{t-1}}\leq \s_t\leq\nonumber\\
&&&\frac{N-d_1}{N}\frac{N-d_1-c_1+c_1-d_2}{N-d_1-c_1+c_1}\times...\nonumber\\
&&&\frac{N-d_1-...d_{t-1}-c_1-...c_{t-1}+c_1+...+c_{t-1}-d_t}{N-d_1-...d_{t-1}-c_1-...c_{t-1}+c_1+...c_{t-1}}\nonumber
\end{alignat}
and we can also bound $\s'_t$ in the same way. So finally, after cancelling out the consecutive numerators and denominators we are left with: 
\begin{alignat}{3}
&&&\frac{N - d_1...-d_t - c_1 ... -c_{t-1}}{N} \leq \s_t\leq \frac{N-d_1...-d_t}{N}
\label{eq:app-bounds-st-case1}
\end{alignat}
\begin{alignat}{3}
&&&\frac{N - 1 - d_1 ...- d_t - c_1 ... -c_{t-1}}{N} \leq \s'_t \leq \frac{N-1-d_1...-d_t}{N}
\label{eq:app-bounds-spt-case1}
\end{alignat}
We know that $\s_t \geq \s'_t$ for all times, because $\s'$ has experienced one extra event $e=1$ in the first time step, so the value of $\s_t - \s'_t \geq 0$ at all times. For an upper bound we can subtract the lower bound of $\s'_t$ from the upper bound of $\s_t$:
\begin{alignat}{3}
&&\s_t - \s'_t \leq& \frac{N-d_1...-d_{t}}{N} - \frac{N-1-d_1 ...-d_t - c_1...-c_{t-1}}{N}\nonumber\\
&&\Rightarrow&\s_t-\s'_t = \begin{cases}
\frac{1}{N} & t=1\\
 \frac{1 + c_1 ...+c_{t-1}}{N} & 1 < t < \tm \\
 \frac{c_1+ ...+c_{t-1}}{N}& t=\tm
\end{cases}
\end{alignat}
if we define $C = \sum^{T_{\max}-1}_{i=1}c_i$, we can find a general upper bound over the whole time-span of the study:
{\small
\begin{alignat}{3}
&&&\Delta_1\s_{\text{w censor}}=\|\s - \s'\|_1=\frac{1}{N}+\frac{1+c_1}{N}+...\frac{c_1+...c_{\tm -1}}{N} \nonumber\\
&&&\leq \frac{TC}{N}\label{eq:app-sl1-wcensor} \\
&&&\Delta_2\s_{\text{w censor}}=\|\s-\s'\|_2\nonumber\\
&&&=\sqrt{(\frac{1}{N})\strut^2+(\frac{1+c_1}{N})\strut^2+...(\frac{c_1+...c_{\tm-1}}{N})\strut^2}\nonumber\\
&&&\leq\sqrt{(\frac{C}{N})\strut^2+...+(\frac{C}{N})\strut^2}=\frac{\sqrt{T}C}{N}\label{eq:app-sl2-wcensor}
\end{alignat}
}%
Where $T$ is the number of time-bins, i.e. $T = T_{\max}/b$. Note that this is in line with our proof for the case of no censoring according to Equations~\ref{eq:app-sl1-nocensor-proof} and~\ref{eq:app-sl2-nocensor-proof}. 


Now, let us look at other possible neighboring datasets obtained by changing one point. We calculate the sensitivity for $x=\{t=1, e=1\}$ in $D$ to $x'=\{t=1, e=0\}$ in $D'$. Here, we have:

\begin{alignat}{3}
\s_t=& \frac{N - d_1}{N} \frac{N-d_1-c_1-d_2}{N - d_1 - c_1}\times ...&& \nonumber\\
&\times\frac{N - d_1 -...d_{t-1} - c_1 -....c_{t-1} - d_t}{N-d_1 -...d_{t-1} -c_1 ...-c_{t-1}}&&\nonumber\\
\s'_t=& \frac{N - (d_1 -1)}{N} \frac{N-(d_1-1)-(c_1+1)-d_2}{N - (d_1 -1) -(c_1+1)}\times...&& \nonumber\\
&\times\frac{N - (d_1-1) -...d_{t-1} - (c_1+1) -....c_{t-1} - d_t}{N-(d_1 -1)-...d_{t-1} -(c_1+1) ...-c_{t-1}}&&\nonumber\\
=&\frac{N-d_1+1}{N}\frac{N-d_1-c_1-d_2}{N-d_1-c_1}\times ...&&\nonumber\\
&\times\frac{N - d_1 -...d_{t-1} - c_1 -....c_{t-1} - d_t}{N-d_1 -...d_{t-1} -c_1 ...-c_{t-1}}\nonumber&&
\end{alignat}
We can calculate the upper and lower bounds for these survival functions according to Lemma~\ref{eq:app-upper-bound} and Lemma~\ref{eq:app-lower-bound}:
\begin{alignat}{3}
\label{eq:app-bounds-st-case2}
&&&\frac{N - d_1...-d_t - c_1 ... -c_{t-1}}{N} \leq \s_t\leq \frac{N-d_1...-d_t}{N}
\end{alignat}
\begin{alignat}{3}
\label{eq:app-bounds-spt-case2}
&&&\frac{N - d_1 ...- d_t - c_1 ... -c_{t-1}}{N} \leq \s'_t \leq \frac{N+1-d_1...-d_t}{N}
\end{alignat}

But we also see that except the first term, the rest of the terms are the same between $\s$ and $\s'$:
\begin{alignat}{3}
    \s_t =& \frac{N-d_1}{N}\times A_t, \qquad\s'_t =& \frac{N-d_1+1}{N}\times A_t\nonumber&&\\
    \rightarrow &\s'_t - \s_t = A_t \times \frac{1}{N}&&
    \label{eq:app-case2-diff-s}
\end{alignat}
where $A_t = a_2\times a_3 \times...a_t$ and  $\forall t: 0\leq a_t\leq 1$, so $0\leq A_t\leq 1 \rightarrow 0\leq|\s'_t - \s_t|\leq \frac{1}{N}$ and we have:
\begin{alignat}{3}
    &&&\Delta_1\s_{\text{w censor}} =\|\s' -\s\|_1 \leq \frac{T}{N}\\
    &&&\Delta_2\s_{\text{w censor}} =\|\s' -\s\|_2 \leq \frac{\sqrt{T}}{N}
    \end{alignat}
The same steps can also be applied to neighboring datasets constructed by changing $x=\{t=1, e=0\}$ in $D$ to $x=\{t=1, e=1\}$ in $D'$, to obtain the same bounds. 

Finally, we look at the case of changing $x=\{t=\tm, e=1\}$ in $D$ to $x=\{t=1, e=0\}$ in $D'$:  

\begin{alignat}{3}
\s_t=& \frac{N - d_1}{N} \frac{N-d_1-c_1-d_2}{N - d_1 - c_1}\times ...&& \nonumber\\
&\times\frac{N - d_1 -...d_{t-1} - c_1 -....c_{t-1} - d_t}{N-d_1 -...d_{t-1} -c_1 ...-c_{t-1}}&&\nonumber\\
\s'_t=& \frac{N - d_1}{N} \frac{N-d_1-(c_1+1)-d_2}{N- d_1 - (c_1+1)}\times...&& \nonumber\\
&\times\frac{N -d_1 -...d_{t-1} - (c_1+1) -....c_{t-1} - d_t}{N-d_1 -...d_{t-1} -(c_1+1) ...-c_{t-1}}&&\nonumber
\end{alignat}
We can again use Lemma~\ref{eq:app-upper-bound} and Lemma~\ref{eq:app-lower-bound} to upper and lower bound our $\s_t$ and $\s'_t$ terms:
\begin{alignat}{3}
\label{eq:app-bounds-st-case3}
&&&\frac{N - d_1...-d_t - c_1 ... -c_{t-1}}{N} \leq \s_t \leq \frac{N-d_1...-d_t}{N}
\end{alignat}
\begin{alignat}{3}
\label{eq:app-bounds-spt-case3}
&&&\frac{N - d_1 ...- d_t - (c_1+1) ... -c_{t-1}}{N} \leq \s'_t \leq \frac{N-d_1...-d_t}{N}
\end{alignat}
again, we know that $\s' \leq \s$ because it experiences one event of $e=0$ in the first time step compared to $\s$. So to upper bound the difference of these two functions, we can subtract the lower bound of $\s'_t$ from the upper bound of $\s_t$:

\begin{alignat}{3}
&&&\s_t-\s'_t = \begin{cases}
0 & t=1\\
 \frac{1 + c_1 ...+c_{t-1}}{N} & 1 < t < \tm \\
 \frac{c_1+ ...+c_{t-1}}{N}& t=\tm
\end{cases}
\end{alignat}
We see that this, in the worst case, is equivalent to our bounds found for the first case of neighboring datasets, shown in Equations~\ref{eq:app-sl1-wcensor} and~\ref{eq:app-sl2-wcensor}. So over all possible neighboring datasets we have the following sensitivities for $\s$:
{\small
\begin{alignat}{3}
&&&\Delta_1\s = \{(C=0)\rightarrow \frac{T-1}{N}, (C\neq0)\rightarrow \frac{TC}{N}\}\label{eq:app-sl1-proof}\\
&&&\Delta_2\s = \{(C=0)\rightarrow\frac{\sqrt{T-1}}{N}, (C\neq0)\rightarrow \frac{\sqrt{T}C}{N}\label{eq:app-sl2-proof}\}
\end{alignat}
}%

\myparagraph{Discussion about the sensitivity in datasets with censoring} As we discussed in Section~\ref{sec:dp-definition}, throughout the paper, we assume \textit{bounded} differential privacy. This means that the number of data points $N$ in the neighboring datasets is equal. This implies that this parameter, $N$ is public and can be shared externally without breaking the guarantees of differential privacy. $T$, the number of time bins is a hyperparameter that is not an intrinsic property of each dataset and can be selected and set publicly (as explained in e.g. Appendix D of~\cite{abadi2016deep}). However, as seen in Equations~\ref{eq:app-sl1-proof} and~\ref{eq:app-sl2-proof}, in the case of censoring in the datasets, the sensitivities depend on an additional parameter $C$, which is an inherent property of the dataset. This means that this parameter is privacy sensitive and should not be shared or used publicly. Therefore, in the most correct way to utilize differential privacy, we cannot use these sensitivities for the case of datasets with censored points. One option is to consider the worst-case scenario of $C=N$, however, this sensitivity would be so large that all the useful information of the signal would be destroyed by the DP noise. Another more elegant option, is to use \textit{smooth sensitivity}~\cite{nissim2007smooth}, and consider all the possible neighboring datasets to the actual dataset that we work with. The calculation of smooth sensitivity is usually complicated, computationally difficult and out of the scope of this paper. Unfortunately, with our current framework of directly bounding the $\s$ function, we could not achieve reasonable sensitivities for the censoring case and we defer this to future work.

\subsection{Sensitivity of $\yh(t)$}
\label{sec:appendix-proof-yh}
In this section we derive the sensitivity of $\yh$. The neighboring datasets are defined the same way as in our proof for sensitivity of $\s$ in Section~\ref{sec:appendix-proof-s}: $D'$ is derived from $D$ by changing the time of event of a point that experiences the event at the end of study $\tm$ to $t=1$, the first time bin in the study. We once more assume that the events are read in equidistant time intervals $t_i \in \{t_0 = 0, t_1 = b, ..., T_{\max} \}$ with a fixed bin size of $b$. We first derive the sensitivity in the absence of censored data and then proceed to the more general case with censored data in the dataset. 

\myparagraph{Sensitivity with no censoring in dataset} We denote the probability mass function of the dataset $D$ as $\yh$ and the probability mass function of its neighboring dataset $D'$ as $\yh'$. According to Equation~\ref{eq:s2y} we have $\yh_t=\s_{t-1} - \s_t$ and $\yh'_t=\s'_{t-1}-\s'_t$, where $\yh_t(\yh'_t)$ and $\s_t(\s'_t)$ indicate the value of the probability mass function and the KM estimator at times $t$, respectively. So:
\begin{alignat}{3}
\label{eq:app-yh-nocens-1}
&\yh_1 =&& \s_0 - \s_1 =1 - \frac{N-d_1}{N}\nonumber\\
&\yh'_1 =&& \s_0' - \s_1' =1 - \frac{N-d_1-1}{N}\nonumber\\
&\Rightarrow&& \yh'_1 - \yh_1 = \frac{1}{N}
\end{alignat}
According to Equations~\ref{eq:app-st} and~\ref{eq:app-stp}, we have:
\begin{alignat}{3}
&\yh_t =&& \frac{N-d_1-...d_{t-1}}{N} - \frac{N-d_1-...d_t}{N} = \frac{d_t}{N}\nonumber\\
&\yh'_t =&& \frac{N-d_1-1-...d_{t-1}}{N} - \frac{N-1- d_1-...d_t}{N} = \frac{d_t}{N}\nonumber\\
&\Rightarrow&& \yh'_t - \yh_t = 0 
\end{alignat}
There is also a difference in the last term due to changing the event of one datapoint from $t=\tm$ to $t=1$:
\begin{alignat}{3}
&\yh_{\tm} =&& \frac{N-d_1-...d_{\tm-1}}{N} - \frac{N-d_1-...d_{\tm}}{N} = \frac{d_{\tm}}{N}\nonumber\\
&\yh'_{\tm} =&& \frac{N-d_1-1-...d_{\tm-1}}{N} -\nonumber\\ 
&&&\frac{N- d_1-1-...-(d_{\tm}-1)}{N} = \frac{d_{\tm}-1}{N}\nonumber\\
&\Rightarrow&& \yh'_t - \yh_t = \frac{-1}{N} 
\end{alignat}
This means that the sensitivity of the probability mass function for datasets with no censoring will be:
\begin{alignat}{3}
\label{eq:app-proof-y-nocens}
&&&\Delta_{1}\yh_{\text{no censor}} = \|\yh' - \yh\|_1 = \frac{2}{N}\\
&&&\Delta_{2}\yh_{\text{no censor}} = \|\yh' - \yh\|_2 = \frac{\sqrt{2}}{N}
\end{alignat}

\myparagraph{Sensitivity with censoring in dataset} Now we derive the sensitivity for the general case of datasets also containing censored datapoints. First, we consider the case of changing a point $x = \{t=\tm, e=1\}$ in $D$ to $x' = \{t=1, e=1\}$ in $D'$.  
\begin{alignat}{3}
&&&\yh_t = \s_{t-1}- \s_{t}= \s_{t-1}-\nonumber \\
&&&\s_{t-1}\frac{N-d_1-...d_{t-1}-c_1-...c_{t-1}-d_t}{N-d_1-...d_{t-1}-c_1-...c_{t-1}}=\nonumber\\
&&&\s_{t-1}[1-\frac{N-d_1-...d_{t-1}-c_1-...c_{t-1}-d_t}{N-d_1-...d_{t-1}-c_1-...c_{t-1}}]=\nonumber\\
&&&\s_{t-1}[\frac{d_t}{N-d_1-...d_{t-1}-c_1-...c_{t-1}}]
\end{alignat}
Where for the first line we use the definition of $S_t = S_{t-1}\times\frac{r_t-d_t}{r_t}$. In the same way, we can show that:
\begin{alignat}{3}
&&&\yh'_{t} = \s'_{t-1} - \s'_{t}\nonumber\\
&&&= \begin{cases}
\frac{d_1+1}{N} & t=1\\
\s'_{t-1}[\frac{d_t}{N-1-d_1-...d_{t-1}-c_1-...c_{t-1}}]&1<t<\tm\\
\s'_{\tm-1}[\frac{d_t-1}{N-1-d_1-...d_{t-1}-c_1-...c_{t-1}}] & t=\tm
\end{cases}
\end{alignat}
here the case of $t=1$ and $t=\tm$ are different because we add and subtract one from $d_t$ at these times respectively. Now we can use Lemma~\ref{eq:app-upper-bound} and~\ref{eq:app-lower-bound} and Inequalities~\ref{eq:app-bounds-st-case1} and~\ref{eq:app-bounds-spt-case1} and the same trick we used for $\s$ to lower and upper bound $\yh_t$ and $\yh'_t$ for $1<t<\tm$:
\begin{alignat}{3}
&&\frac{d_t}{N}\leq\yh_t\leq\frac{d_t+\sum_{i=1}^{t-1}c_i}{N}\\
&&\frac{d_t}{N}\leq\yh'_t\leq\frac{d_t+\sum_{i=1}^{t-1}c_i}{N}
\end{alignat}
and again to find an upper bound on the difference of $|\yh'_t - \yh_t|$ we can subtract the lower bound of $\yh_t$ from the upper bound of $\yh'_t$ (or vice versa):
\begin{alignat}{3}
&&&|\yh'_t - \yh_t|=\nonumber\\
&&&\begin{cases}
\frac{1}{N} & t=1\\
\frac{d_t+\sum_{i=1}^{t-1}c_i}{N} - \frac{d_t}{N} = \frac{\sum_{i=1}^{t-1}c_i}{N}&1<t<\tm\\
\frac{d_{\tm}-1+\sum_{i=1}^{\tm-1}c_i}{N} - \frac{d_{\tm}}{N} \\= \frac{-1+\sum_{i=1}^{\tm-1}c_i}{N}& t=\tm
\end{cases}
\end{alignat}
Note that this is again consistent with our results for the no censoring case. By defining $C = \sum_{i=1}^{T_{\max}-1} c_i$, we can construct a general sensitivity over the whole time-frame of the study:
\begin{alignat}{3}
\label{eq:app-ybound-case1-l1}
&&&\Delta_1 \yh_{\text{w censor}} = \|\yh' - \yh\|_1 = \frac{1}{N}+\frac{c_1}{N}+...\frac{c_1+...c_{\tm-1}-1}{N}\nonumber\\
&&&\leq\frac{TC}{N}
\end{alignat}
\begin{alignat}{3}
\label{eq:app-ybound-case1-l2}
&&&\Delta_2 \yh _{\text{w censor}}=\|\yh' - \yh\|_2 \nonumber\\
&&&= \sqrt{(\frac{1}{N})\strut^2+(\frac{c_1}{N})\strut^2+...(\frac{c_1+...c_{\tm-1}}{N})\strut^2} \nonumber\\
&&&\leq\sqrt{(\frac{C}{N})\strut^2+...+(\frac{C}{N})\strut^2} =\frac{\sqrt{T}C}{N}
\end{alignat}
Where $T$ is the number of time bins, i.e. $T = T_{\max}/b$.

Now let us look at the case of neighboring datasets for when we change a point $x=\{t=1, e=1\}$ in $D$ to $x=\{t=1, e=0\}$ in $D'$. 
For the first time step we have: 
\begin{alignat}{3}
&\yh_1 =&& \s_0 - \s_1 =1 - \frac{N-d_1}{N}=\frac{d_1}{N}\nonumber\\
&\yh'_1 =&& \s_0' - \s_1' =1 - \frac{N-(d_1-1)}{N}=\frac{d_1-1}{N}\nonumber\\
&\Rightarrow&& \yh_1 - \yh'_1 = \frac{1}{N}
\end{alignat}
And for other times we have:
\begin{alignat}{3}
&&&\yh_t = \s_{t-1}[\frac{d_t}{N-d_1...-d_{t-1}-c_1...-c_{t-1}}]\nonumber\\
&&&\yh'_t = \s'_{t-1}[\frac{d_t}{N-d_1...-d_{t-1}-c_1...-c_{t-1}}]\nonumber    
\end{alignat}
Since $\s'_1 > \s_1$, and the rest of the multiplicative terms are always identical between $\s_t$ and $\s'_t$, we have $\forall t>1 : \yh'_t> \yh_t$, and the difference is:
\begin{alignat}{3}
&&& \yh'_t - \yh_t = [\frac{d_t}{N-d_1...-d_{t-1}-c_1...-c_{t-1}}](\s'_t -\s_t)
\end{alignat}
by using Equation~\ref{eq:app-case2-diff-s} and Lemma~\ref{eq:app-upper-bound} we have:
\begin{alignat}{3}
&&&0\leq \yh'_t - \yh_t \leq \frac{1}{N}\frac{d_t}{N-d_1...-d_{t-1}-c_1...-c_{t-1}}\\
&&&0 \leq\yh'_t - \yh_t \leq \frac{1}{N}\frac{d_t+d_1...d_{t-1}+c_1...+c_{t-1}}{N}\\
&&& 0 \leq\yh'_t - \yh_t \leq \frac{1}{N}\frac{N}{N}=\frac{1}{N}
\end{alignat}
the last line comes from the fact that we know that the upper bound for $d_1+d_2...+d_t+c_1...+c_{t-1}\leq N$. So over the complete time-frame of the study we have:
\begin{alignat}{3}
&&&\Delta_1\yh_{\text{w censor}} = \|\yh' - \yh\|_1 = \frac{T}{N}\\
&&&\Delta_2\yh_{\text{w censor}} = \|\yh' - \yh\|_2 = \frac{\sqrt{T}}{N}
\end{alignat}
which is a smaller bound compared to the one found for the first case in Equations~\ref{eq:app-ybound-case1-l1} and~\ref{eq:app-ybound-case1-l2}. Similarly, we can prove that the sensitivity for the reverse case of changing a point $x=\{t=1, e=0\}$ in $D$ to $x=\{t=1, e=1\}$ in $D'$ also results in a bound still smaller than Equations~\ref{eq:app-ybound-case1-l1} and~\ref{eq:app-ybound-case1-l2}. 

Finally, let us look at the case of neighboring datasets constructed by changing $x=\{t=\tm, e=1\}$ in $D$ to $x=\{t=1, e=0\}$ in $D'$ for $1<t<\tm$:
\begin{alignat}{3}
&&& \yh_t = \s_{t-1} - \s_{t} = \s_{t-1}[\frac{d_t}{N-d_1...-d_{t-1}-c_1...-c_{t-1}}]\\
&&& \yh'_t = \s'_{t-1} - \s'_{t} = \s'_{t-1}[\frac{d_t}{N-d_1...-d_{t-1}-(c+1)...-c_{t-1}}]
\end{alignat}

we once more use Lemma~\ref{eq:app-upper-bound} and Lemma~\ref{eq:app-lower-bound} with Inequalities~\ref{eq:app-bounds-st-case3} and~\ref{eq:app-bounds-spt-case3} to upper and lower bound these probability estimators: 

\begin{alignat}{3}
&&&\frac{d_t}{N}\leq\yh_t\leq\frac{d_t+\sum_{i=1}^{t-1}c_i}{N}\\
&&&\frac{d_t}{N}\leq\yh'_t\leq\frac{1 + d_t+\sum_{i=1}^{t-1}c_i}{N}
\end{alignat}

To upper bound the different $|\yh'_t - \yh_t|$, we subtract the lower bound of $\yh_t$ from the upper bound of $\yh'_{t}$: 
\begin{alignat}{3}
&&&|\yh'_t - \yh_t|=\nonumber\\
&&&\begin{cases}
0 & t=1\\
\frac{1+d_t+\sum_{i=1}^{t-1}c_i}{N} - \frac{d_t}{N} = \frac{1+\sum_{i=1}^{t-1}c_i}{N}&1<t<\tm\\
\frac{d_{\tm}-1+1 +\sum_{i=1}^{\tm-1}c_i}{N} - \frac{d_{\tm}}{N} \\= \frac{\sum_{i=1}^{\tm-1}c_i}{N}& t=\tm
\end{cases}
\end{alignat}
So over the complete time-span of the dataset we have:
\begin{alignat}{3}
&&&\Delta_1\yh_{\text{w censor}} = \|\yh' - \yh\|_1 \leq \frac{TC}{N}\\
&&&\Delta_2\yh_{\text{w censor}} = \|\yh' - \yh\|_2 \leq \frac{\sqrt{T}C}{N}
\end{alignat}

So, finally we can write the most general case of sensitivity for the probability mass function as:
{\small
\begin{alignat}{3}
&&&\Delta_1 \yh = \{(C=0)\rightarrow \frac{2}{N},(C\neq0)\rightarrow \frac{TC}{N}\}\label{eq:app-y1-sens-final}\\
&&&\Delta_2 \yh = \{(C=0)\rightarrow\frac{\sqrt{2}}{N},(C\neq0)\rightarrow\frac{\sqrt{T}C}{N}\}\label{eq:app-y2-sens-final}
\end{alignat}
}

\myparagraph{Discussion about sensitivities for datasets with censoring} As we discussed in Section~\ref{sec:dp-definition} and also in the previous section, the choice of bounded differential privacy allows us to treat the number of data points in the dataset, $N$, as a non-private parameter. Again, $T$, the number of time bins, is a hyperparameter that is not intrinsic to the dataset. These type of hyperparameters can be chosen and set, publicly as explained in Appendix D of~\cite{abadi2016deep}. However, we see that the sensitivity of the probability function when censoring is present in the dataset is again dependent on the total number of censored points $C$. However, $C$ is a property of the dataset and by including it in the sensitivities, we cannot have the usual DP guarantees anymore. We defer finding a theoretically correct bound for this case to future work.


\subsection{Datasets}
\label{sec:app-dataset}
For our experiments, we use the following real-world medical datasets:

\textbf{Rotterdam and German Breast Cancer Study Group (GBSG):} Contains data from 2,232 breast cancer patients from the Rotterdam tumor bank~\cite{foekens2000urokinase} and the German Breast Cancer Study Group (GBSG)~\cite{schumacher1994randomized}. $960 
 (43\%)$ patients are censored. The data is pre-processed similar to~\cite{katzman2018deepsurv} with a maximum survival duration of 87 months.  

\textbf{The Molecular Taxonomy of Breast Cancer International Consortium (METABRIC):} This dataset contains gene and protein expressions of 1904 individuals~\cite{curtis2012genomic}. We use a dataset prepared  similar to~\cite{katzman2018deepsurv}. The maximum duration of the study is 355 months ($\sim30$ years), 801 $(42\%$ of total) patients were right-censored and 1103 $(58\%$ of total) were followed until death.

\textbf{Study to Understand Prognoses Preferences Outcomes
and Risks of Treatment (SUPPORT):} This dataset consists of 8873 seriously ill adults~\cite{knaus1995support}. The dataset has a maximum survival time of 2029 days $(\sim5.6)$ years and $32\% (2839)$ of the data is right-censored. We use a pre-processed version according to~\cite{katzman2018deepsurv}.

\subsection{Metrics}
\label{sec:app-metrics}
\myparagraph{Logrank Test}
Consider that we would like to compare the survival distribution of two populations $j = \{1, 2\}$, and the combined data over these two populations has $t=\{1,..., T\}$ distinct events times. Here, the null hypothesis is that the two populations have the same survival distribution. We define $d_{t,j}$ as the number of events observed in group $j$ at time $t$, and $d_t= d_{t, 1} + d_{t, 2}$ as the total events at time $t$. If we consider $r_{t, j}$ as the number at risk in group $j$ at time $t$ and $r_t = r_{t, 1} + r_{t, 2}$ as the total number at risk at time $t$, the expected number of events for group $j$ at time $t$ under the null hypothesis would be $E_{t, j} = r_{t, j}\frac{d_t}{r_t}$. 

Using these notations, we can construct test statistics for population 1 (without loss of generality) under the null hypothesis as:
\begin{eqnarray}
Z = \frac{\sum_{t=1}^T (d_{t, 1} - E_{t, 1})}{\sqrt{\sum_{t=1}^T V_{t,1}}}
\end{eqnarray}
where $V_{t, 1}$ is the variance in group 1 at time $t$. Under the assumption that $d_{t,1}$ have a hypergeometric distribution, the variance is defined as:
\begin{eqnarray}
V_{t, 1} &=& E_{t, 1} (\frac{r_t - d_t}{r_t}) (\frac{r_t - r_{t,1}}{r_t - 1})\nonumber\\
&=& \frac{r_{t,1}r_{t,2}d_t(r_t-d_t)}{r_t^2 (r_t-1)}
\end{eqnarray}
By the central limit theorem $Z \sim \mathcal{N}(0, 1)$, where $\mathcal{N}(0,1)$ is a Gaussian probability with mean 0 and variance 1. Based on this approximation, the value of $Z$ can be compared with the tails of the standard Gaussian distribution to obtain the $p-$value of the null hypothesis.

\myparagraph{Confidence Intervals}
The variance of the KM estimator according to Greenwood's formula~\cite{greenwood1926report} is:
\begin{eqnarray}
\vh(t) =\sigh^2(t)= \s^2(t)\sum_{t'\leq t} \frac{d_{t'}}{r_{t'}(r_{t'}-d_{t'})}
\end{eqnarray}
Once more, for large samples, the Kaplan-Meier curve evaluated at time $t$ is assumed to be normally distributed and the $100(1-\alpha)\%$ confidence interval (CI) can be obtained as:
\begin{eqnarray}
\s(t) \pm z_{1-\alpha/2}\sigh(t)
\label{eq:ci-normal}
\end{eqnarray}
where $z_{1-\alpha/2}$ is the $1-\alpha/2$ fractile of the standard normal distribution. This assumption can be improved for smaller sample size, using Greenwood's exponential \textit{log-log} formula~\cite{sawyer2003greenwood}:
\begin{eqnarray}
\s(t)^{\exp(\pm z_{1-\alpha/2}\sigh(t)/[\s(t)\ln \s(t)])}
\end{eqnarray}

For some of our experiments, we also introduce a measure of median that makes comparison between different datasets easy. We call this the \textit{calibrated median difference} or cmd:
\begin{eqnarray}
    \text{cmd}_{n, b} = \frac{\mathopen|\text{median}_{n, b} - \text{median}_{\text{original}}\mathclose|}{\text{median}_{\text{original}}}
\end{eqnarray}

\subsection{Construction of Surrogate Datasets}
\label{sec:app-exp-surrogates}
Our metrics are dependent on access to individual data points and their times of event in each dataset. This problem motivated us to develop our surrogate dataset generation Algorithm~\ref{algo:surrogate} and it is a very important aspect of all of our experiments, since no matter which route we take in Figure~\ref{fig:overall-graph} of our workflow, we eventually need to reconstruct surrogate datasets to be able to calculate the performance metrics for our methods. In this section, we study the performance of our surrogate dataset generation method in a centralized setting and without any privacy-preserving mechanism. 

\myparagraph{Parameters} By inspecting Algorithm~\ref{algo:surrogate}, we see that $n$, the total number of points we choose to populate a KM curve with, is one of the hyperparameters that we need to optimize. We also discuss in Section~\ref{sec:method} that \dpm and our \dps and \dpy methods are based on equidistant time of event discretization. Since we would like to remain consistent among all DP methods, we choose an equidistant grid of size $b$ for times of events for all our experiments from this point on. Now $b$ is also a hyperparameter that can change the results and needs optimization.

\myparagraph{Setup} For all our datasets we first produce the discretized KM function $\s$ and its corresponding probability function $\yh$ based on our bin length $b$, then run our surrogate generation algorithm with $n = \{0.5\nb, \nb, 2\nb\}$ data points, where $\bar{N}$ is the total number of uncensored data points in each dataset. For the discretization step, we choose $b= \{1, 2, 4, 6\}$, which is measured in months for \textcolor{magenta}{METABRIC} and \textcolor{blue}{GBSG} and in days for \textcolor{teal}{SUPPORT}. The reason we choose a relatively smaller binning size for \textcolor{teal}{SUPPORT} is that the study is done on a shorter time frame compared to the other two datasets, and the initial drop in the value of survival function is much more drastic compared to the other two dataset, with a median survival time of only 57 days for $e=1$ points. In comparison, \textcolor{magenta}{METABRIC} and \textcolor{blue}{GBSG} have median survival time of 86 months and 24 months for $e=1$ datapoints, respectively. 

In Tables~\ref{tab:surrogates-median} and ~\ref{tab:surrogates-pvalue} we report the calibrated median difference (cmd) and $p-$values to the original dataset for \textcolor{teal}{SUPPORT}, \textcolor{blue}{GBSG}, and \textcolor{magenta}{METABRIC}. The \pv shows if a surrogate dataset is statistically dissimilar to the original dataset and higher values of it are preferred. The cmd shows how far from the real median the reconstructed median is, and lower values of this parameter are desired. 

\myparagraph{Effect of binning size} Based on both cmd and $p-$value smaller binning lengths of $b=\{1, 2\}$ work best for \textcolor{teal}{SUPPORT} and \textcolor{blue}{GBSG}. For \textcolor{blue}{GBSG}, the effect of discretization only (before construction of the surrogate dataset) for $b>2$ is enough to drop the \pv between the discretized KM estimator and the original curve below significant level. The same effect happens for \textcolor{teal}{SUPPORT} for $b>4$. \textcolor{magenta}{METABRIC} seems to be more robust to discretization and we observe acceptable results for surrogate dataset generation. In general, for all the datasets we start to see a degradation in the performance for larger binning lengths. 
\myparagraph{Effect of number of points} We also observe that our surrogate generation method is very robust against changes in $n$, the number of points used to construct the surrogate dataset. We expect the median to be constant with respect to the number of datapoints, as long as we have enough datapoints to populate all the bins over the time of study. However, \pv can be less robust, as it is directly calculated on data points and here, number of points we choose to calculate it with is important. But we observe that \pv is also always above the significance level of 0.05 for small enough binning size and enough number of points to successfully recreate the KM function. 

\begin{table*}
\caption{Calibrated difference to the median of the original dataset for reconstructed surrogate datasets for \textcolor{teal}{SUPPORT}, \textcolor{blue}{GBSG} and \textcolor{magenta}{METABRIC} only for event $e=1$. A lower value shows a more accurate reconstruction. The parameter $n$ is the total number of data points in each dataset and $b$ is the time discretization length.}
  \label{tab:surrogates-median}
\centering
{\begin{tabular}{c|c|c|c|c}
\toprule
    $n$&$b=1$ &$ b=2$& $b=4$& $b=6$\\
    \hline
    $2\nb$& \textcolor{teal}{0.00}, \textcolor{blue}{0.042}, \textcolor{magenta}{0.000}&\textcolor{teal}{0.018}, \textcolor{blue}{0.083}, \textcolor{magenta}{0.000}& \textcolor{teal}{0.053}, \textcolor{blue}{0.167}, \textcolor{magenta}{0.023}& \textcolor{teal}{0.053}, \textcolor{blue}{0.25}, \textcolor{magenta}{0.047}\\
    \hline
    $\nb$ & \textcolor{teal}{0.00}, \textcolor{blue}{0.042}, \textcolor{magenta}{0.000}&\textcolor{teal}{0.018}, \textcolor{blue}{0.083}, \textcolor{magenta}{0.000}& \textcolor{teal}{0.053}, \textcolor{blue}{0.167}, \textcolor{magenta}{0.023}& \textcolor{teal}{0.053}, \textcolor{blue}{0.25}, \textcolor{magenta}{0.047}\\
    \hline
    $0.5\nb$& \textcolor{teal}{0.00}, \textcolor{blue}{0.042}, \textcolor{magenta}{0.047}& \textcolor{teal}{0.018}, \textcolor{blue}{0.083}, \textcolor{magenta}{0.023}&  \textcolor{teal}{0.053}, \textcolor{blue}{0.167}, \textcolor{magenta}{0.023}& \textcolor{teal}{0.053}, \textcolor{blue}{0.25}, \textcolor{magenta}{0.047}\\
    \bottomrule
\end{tabular}}
\end{table*}

\begin{table*}
\caption{p-value between the surrogate and original datasets for \textcolor{teal}{SUPPORT}, \textcolor{blue}{GBSG} and \textcolor{magenta}{METABRIC} only for event $e=1$. Higher values are preferable, and small values show a statistically significant separation between the reconstructed dataset and the original. The parameter $n$ is the total number of data points in each dataset and $b$ is the time discretization length.}
  \label{tab:surrogates-pvalue}
\centering
{\begin{tabular}{c|c|c|c|c}
\toprule
    $n$&$b=1$&$ b=2$& $b=4$& $b=6$\\
    \hline
    $2\nb$&\textcolor{teal}{1.00}, \textcolor{blue}{0.21}, \textcolor{magenta}{0.71}& \textcolor{teal}{0.53}, \textcolor{blue}{0.03}, \textcolor{magenta}{0.52}& \textcolor{teal}{0.10}, \textcolor{blue}{0.00}, \textcolor{magenta}{0.23}& \textcolor{teal}{0.01}, \textcolor{blue}{0.00}, \textcolor{magenta}{0.08}\\
    \hline
    $\nb$ &\textcolor{teal}{1.00}, \textcolor{blue}{0.33}, \textcolor{magenta}{0.78}& \textcolor{teal}{0.63}, \textcolor{blue}{0.08}, \textcolor{magenta}{0.62}& \textcolor{teal}{0.20}, \textcolor{blue}{0.00}, \textcolor{magenta}{0.35}& \textcolor{teal}{0.04}, \textcolor{blue}{0.00}, \textcolor{magenta}{0.18}\\
    \hline
    $0.5\nb$ &\textcolor{teal}{0.02}, \textcolor{blue}{0.27}, \textcolor{magenta}{0.04}&\textcolor{teal}{0.14}, \textcolor{blue}{0.10}, \textcolor{magenta}{0.16}& \textcolor{teal}{0.20}, \textcolor{blue}{0.00}, \textcolor{magenta}{0.16}& \textcolor{teal}{0.11}, \textcolor{blue}{0.00}, \textcolor{magenta}{0.11}\\
    \bottomrule
\end{tabular}}
\end{table*}

\subsection{Hyperparameter Selection}
\label{sec:app-exp-hyperparameter}
Our differentially private methods depend on a few hyperparameters. These can influence the performance of our methods. In this section, we strive to evaluate the effect of these parameters on our methods in a centralized setting, where all the data is accessible centrally. We again run all the experiments on the noncensored portion of our datasets. Later, we will generalize the results of this section to run experiments in a collaborative setting. In the following, we will go through each DP method and explain which parameters are important for each method and how we choose the optimal values.

\myparagraph{\dps} According to Section~\ref{sec:dpsurvdef} and Theorem~\ref{theorem:dctws}, the sensitivity of our \dps method scales like
\begin{eqnarray}
    \Delta_1 D^k \propto \sqrt{k} \sqrt{T-1} = \sqrt{k}\sqrt{(\tm/b) -1}
\end{eqnarray}

where $T_{\max}$ is the maximum duration in the study, $b$ is the discretization binning size and $k$ is the number of the first coefficients chosen from the discrete cosine transform (DCT). So for a smaller sensitivity of \dps and, therefore, a better expected utility, we strive to choose the smallest first $k$ coefficient of the DCT and the largest binning size $b$ possible. This is a classic privacy/performance trade-off problem, because the more coefficients $k$ we take and the smaller our discretization step $b$, the more accurate our reconstructed private survival function becomes. However, these increase the sensitivity and more noise should be added for the same level of privacy guarantee $\varepsilon$.

Based on our experiments in the previous section and the fact that in general our surrogate generation algorithm works best for small bin sizes, we pick $b=\{1, 2, 4, 6\}$ and to make the surrogate datasets from noisy survival functions, we set $n = \nb$ where $\nb$ is the total number of uncensored data points for each dataset. 

Table~\ref{tab:central-DPS-parameter} shows the calibrated median difference (cmd) and \pv between the original and the reconstructed noisy survival function, for $\varepsilon = 1.0$ and averaged over 100 runs of the algorithm for different fractions of the total coefficients of the DCT, $k$. We choose $\varepsilon=1.0$ because it gives tight theoretical guarantees for privacy in a centralized setting~\cite{ponomareva2023dp,desfontainesblog20211001, hsu2014differential, roth2022private}.

\begin{table*}
\caption{Calibrated median difference and $p$-value between the original dataset and the private dataset for $e=1$, obtianed by \dps with $\varepsilon=1$ for \textcolor{teal}{SUPPORT}, \textcolor{blue}{GBSG} and \textcolor{magenta}{METABRIC}. All results are averaged over five independent runs of the DP algorithm. The arrows show whether a lower or higher value indicates better utility of our DP method.}
  \label{tab:central-DPS-parameter}
\centering
{\begin{tabular}{c|c|c|c|c || c|c|c|c}
\toprule
    & \multicolumn{4}{c||}{cmd $\downarrow$}& \multicolumn{4}{c}{$p-$value $\uparrow$}\\
    \hline
    $k$&$b=1$ &$ b=2$& $b=4$& $b=6$ &$b=1$ &$b=2$&$b=4$& $b=6$\\
    \hline
    $5\%$ &\textcolor{teal}{0.08}, \textcolor{blue}{0.04}, \textcolor{magenta}{0.04}& \textcolor{teal}{0.11}, \textcolor{blue}{0.33}, \textcolor{magenta}{0.02}& \textcolor{teal}{0.26}, \textcolor{blue}{1.00}, \textcolor{magenta}{0.07}&\textcolor{teal}{0.58}, \textcolor{blue}{1.0}, \textcolor{magenta}{0.19} & \textcolor{teal}{0.48}, \textcolor{blue}{0.55}, \textcolor{magenta}{0.56}&\textcolor{teal}{0.53}, \textcolor{blue}{0.04}, \textcolor{magenta}{0.57}& \textcolor{teal}{0.67}, \textcolor{blue}{0}, \textcolor{magenta}{0.11}&\textcolor{teal}{0.72}, \textcolor{blue}{0}, \textcolor{magenta}{0.38} \\
    \hline
    $10\%$ &\textcolor{teal}{0.17}, \textcolor{blue}{0.01}, \textcolor{magenta}{0.07}&\textcolor{teal}{0.07}, \textcolor{blue}{0.08}, \textcolor{magenta}{0.04}& \textcolor{teal}{0.11}, \textcolor{blue}{0.33}, \textcolor{magenta}{0.02}&\textcolor{teal}{0.10}, \textcolor{blue}{1.0}, \textcolor{magenta}{0.02}& \textcolor{teal}{0.40}, \textcolor{blue}{0.41}, \textcolor{magenta}{0.47}&\textcolor{teal}{0.37}, \textcolor{blue}{0.16}, \textcolor{magenta}{0.51}& \textcolor{teal}{0.29}, \textcolor{blue}{0}, \textcolor{magenta}{0.36}&\textcolor{teal}{0.26}, \textcolor{blue}{0}, \textcolor{magenta}{0.26}\\
    \hline
    $15\%$&\textcolor{teal}{0.19}, \textcolor{blue}{0.02}, \textcolor{magenta}{0.08} &\textcolor{teal}{0.16}, \textcolor{blue}{0.01}, \textcolor{magenta}{0.05}& \textcolor{teal}{0.14}, \textcolor{blue}{0.17}, \textcolor{magenta}{0.02}&\textcolor{teal}{0.10}, \textcolor{blue}{0.5}, \textcolor{magenta}{0.04}&\textcolor{teal}{0.30}, \textcolor{blue}{0.40}, \textcolor{magenta}{0.34} &\textcolor{teal}{0.27}, \textcolor{blue}{0.13}, \textcolor{magenta}{0.43}& \textcolor{teal}{0.20}, \textcolor{blue}{0}, \textcolor{magenta}{0.43}&\textcolor{teal}{0.11}, \textcolor{blue}{0}, \textcolor{magenta}{0.21}\\
    \hline
    $20\%$ &\textcolor{teal}{0.18}, \textcolor{blue}{0.02}, \textcolor{magenta}{0.08}& \textcolor{teal}{0.17}, \textcolor{blue}{0.02}, \textcolor{magenta}{0.06}& \textcolor{teal}{0.09}, \textcolor{blue}{0.17}, \textcolor{magenta}{0.05}&\textcolor{teal}{0.16}, \textcolor{blue}{0.3}, \textcolor{magenta}{0.05}&\textcolor{teal}{0.26}, \textcolor{blue}{0.38}, \textcolor{magenta}{0.32} &\textcolor{teal}{0.25}, \textcolor{blue}{0.14}, \textcolor{magenta}{0.41}& \textcolor{teal}{0.17}, \textcolor{blue}{0}, \textcolor{magenta}{0.32}&\textcolor{teal}{0.06}, \textcolor{blue}{0}, \textcolor{magenta}{0.25}\\
    \bottomrule
\end{tabular}}
\end{table*}

\begin{table*}
\caption{Calibrated median difference and $p$-value between the original dataset and the private dataset for $e=1$ obtianed by \dpy with $\varepsilon=1$ for \textcolor{teal}{SUPPORT}, \textcolor{blue}{GBSG} and \textcolor{magenta}{METABRIC}. All results are averaged over 5 independent runs of the DP algorithm. The arrows show if a lower or a higher value indicates better utility of our DP method.}
  \label{tab:central-DPy-parameter}
\centering
{\begin{tabular}{c|c|c|c || c|c|c|c}
\toprule
    \multicolumn{4}{c||}{cmd $\downarrow$}& \multicolumn{4}{c}{$p-$value $\uparrow$}\\
    \hline
    $b=1$&$ b=2$& $b=4$ & $b=6$& $b=1$&$b=2$&$b=4$& $b=6$\\
    \hline
      \textcolor{teal}{0.90}, \textcolor{blue}{0.03}, \textcolor{magenta}{0.13}&\textcolor{teal}{0.28}, \textcolor{blue}{0.05}, \textcolor{magenta}{0.04}& \textcolor{teal}{0.10}, \textcolor{blue}{0.09}, \textcolor{magenta}{0.02}& \textcolor{teal}{0.06}, \textcolor{blue}{0.16}, \textcolor{magenta}{0.05} & \textcolor{teal}{0.00}, \textcolor{blue}{0.32}, \textcolor{magenta}{0.00}& \textcolor{teal}{0.00}, \textcolor{blue}{0.13}, \textcolor{magenta}{0.02}& \textcolor{teal}{0.00}, \textcolor{blue}{0.00}, \textcolor{magenta}{0.11}&\textcolor{teal}{0.00}, \textcolor{blue}{0.00}, \textcolor{magenta}{0.08}\\
    \bottomrule
\end{tabular}}
\end{table*}

By comparing the cmd and \pv and striving to choose the lowest possible $k$ and the highest $b$, which return a reasonable performance, we choose: for \textcolor{teal}{SUPPORT} $\{b = 2, k = 10\% \} $, for \textcolor{blue}{GBSG} $\{b=1, k=10\%\}$ and for \textcolor{magenta}{METABRIC} $\{b=6, k=10\%\}$. From now on, we will always use these parameters for our experiments involving \dps. 

\myparagraph{\dpy} As explained in Section~\ref{sec:dpprobdef} and Theorem~\ref{theorem:ysensitivity}, for \dpy and $L_1$ sensitivity, the only important hyperparameter is the discretization grid size of the duration. With bigger binning size, we add noise to a more aggregated function of the data and thus achieve a better level of privacy with less severe adverse effect of noise on utility~\cite{desfontainesblog20210616}. However, as seen in Section~\ref{sec:app-exp-surrogates}, larger bin size degrades the surrogate dataset generation.  

To study this effect, we ran \dpy with $\varepsilon=1.0$ for bin sizes $b = \{1, 2, 4, 6\}$. The averaged cmd and \pv over 100 runs of the algorithm are shown in Table~\ref{tab:central-DPy-parameter}. We again use $n = n_{\text{tot}}$ points to generate the surrogate datasets. Although sometimes \pv falls below the significance level, for example for \textcolor{teal}{SUPPORT}, we still choose the binning size based on the best value of cmd, as \pv alone is not sufficient to measure performance. Based on these metrics, we choose: $b=6$ for \textcolor{teal}{SUPPORT}, $b=2$ for \textcolor{blue}{GBSG} and $b=4$ for \textcolor{magenta}{METABRIC}. These binning sizes will be used for our \dpy method for all forthcoming experiments. 

\myparagraph{\dpm} As explained in Section~\ref{sec:dpmdef}, the original algorithm of DP-Matrix~\cite{gondara2020differentially} uses no binning, furthermore, no post-processing or pre-processing is mentioned. We explain that we fix this issue in our improved version \dpm. This will be in favor of the performance of this algorithm, because again the noise will be added to aggregated data, which increase the value-to-noise level and thus improve utility~\cite{desfontainesblog20210616}. We also add post-processing steps to ensure that the noisy number of at risk group $r'_t$ does not become negative at any step and that the algorithm halts once all the datapoints have experienced an event. Now the only hyperparameter that \dpm depends on is the binning size $b$.  

Table~\ref{tab:central-DPm-parameter} shows the cmd and \pv averaged over 100 runs of \dpm on our datasets. The same as \dpy we strive to find the largest binning size that returns good utility. Based on the metrics, we choose $b=6$ for \textcolor{teal}{SUPPORT}, $b=2$ for \textcolor{blue}{GBSG} and $b=6$ for \textcolor{magenta}{METABRIC}. These binning sizes will be used for our \dpy method for all forthcoming experiments.

\begin{table*}
\caption{Calibrated median difference and $p$-value between the original dataset and the private dataset for $e=1$ obtianed by \dpm with $\varepsilon=1$ for \textcolor{teal}{SUPPORT}, \textcolor{blue}{GBSG} and \textcolor{magenta}{METABRIC}. All results are averaged over 5 independent runs of the DP algorithm. The arrows show if a lower or a higher value indicates better utility of our DP method.}
  \label{tab:central-DPm-parameter}
\centering
{\begin{tabular}{c|c|c|c || c|c|c|c}
\toprule
    \multicolumn{4}{c||}{cmd $\downarrow$}& \multicolumn{4}{c}{$p-$value $\uparrow$}\\
    \hline
    $b=1$&$ b=2$& $b=4$ & $b=6$& $b=1$&$b=2$&$b=4$& $b=6$\\
    \hline
      \textcolor{teal}{0.02}, \textcolor{blue}{0.03}, \textcolor{magenta}{0.06}&\textcolor{teal}{0.02}, \textcolor{blue}{0.03}, \textcolor{magenta}{0.04}& \textcolor{teal}{0.04}, \textcolor{blue}{0.08}, \textcolor{magenta}{0.03}& \textcolor{teal}{0.05}, \textcolor{blue}{0.16}, \textcolor{magenta}{0.05} & \textcolor{teal}{0.00}, \textcolor{blue}{0.18}, \textcolor{magenta}{0.00}& \textcolor{teal}{0.00}, \textcolor{blue}{0.47}, \textcolor{magenta}{0.06}& \textcolor{teal}{0.04}, \textcolor{blue}{0.08}, \textcolor{magenta}{0.28}&\textcolor{teal}{0.38}, \textcolor{blue}{0.00}, \textcolor{magenta}{0.39}\\
    \bottomrule
\end{tabular}}
\end{table*}
\subsection{Centralized Performance of DP Methods}
\label{sec:app-centralizedDP}

Here, we present the results for our centralized experiments, for the privacy budget $\varepsilon = 1.0$. All parameters and procedures are as explained in Section~\ref{sec:exp-centralized}.  

Table~\ref{tab:centralized-all-e1} shows the performance for all DP methods and all datasets. The mean of the metrics (i.e., the \pv, median and survival percentages at $t=\{0.25T_{\max}, 0.5T_{\max}, 0.75T_{\max}\}$) over 100 random runs of our DP algorithm and their calculated $95\%$ confidence interval in parentheses are reported.  

Here we again observe that \dps performs the best, with the means and their confidence intervals always contained within the confidence interval of the original datasets, for all datasets and all metrics. The second best method with respect to \pv is \dpm. However, we still observe the issue of underprediction of survival percentages towards the end point of the studies, in particular in METABRIC and SUPPORT. Our \dpy method, although lacking in \pv performance, keeps a good performance for all the other metrics, especially for GBSG and METABRIC, where all the means of the metrics and their confidence intervals are within the confidence interval of the original non-private KM curves. 

We also show the visual results for one random run of the algorithms in Figure~\ref{fig:centralized-e1}.  

\begin{figure*}[ht]
\centering
\begin{minipage}[l]{0.68\columnwidth}
        \centering
        \includegraphics[width=\linewidth]{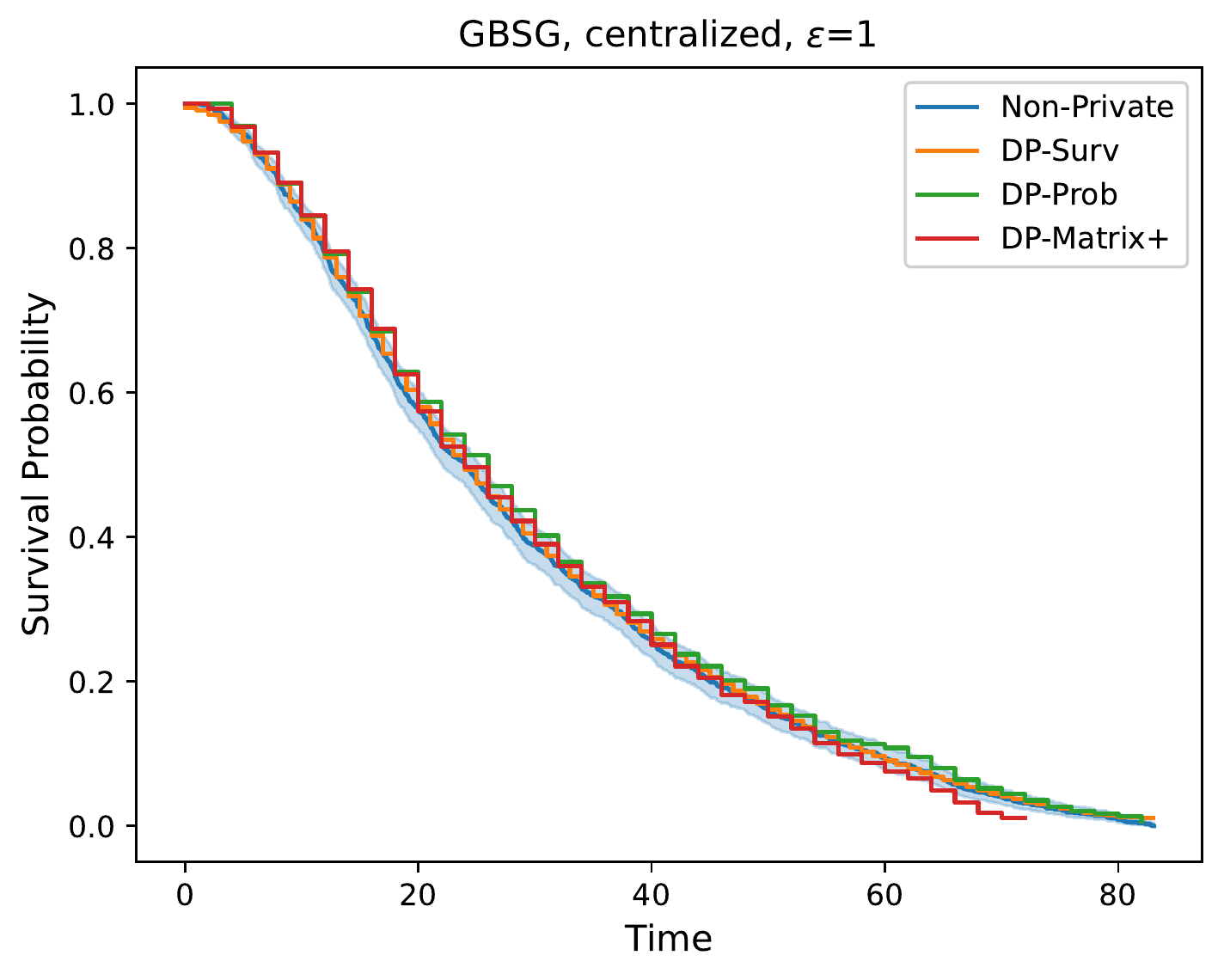}
\end{minipage}
\begin{minipage}[l]{0.68\columnwidth}
        \centering
        \includegraphics[width=\linewidth]{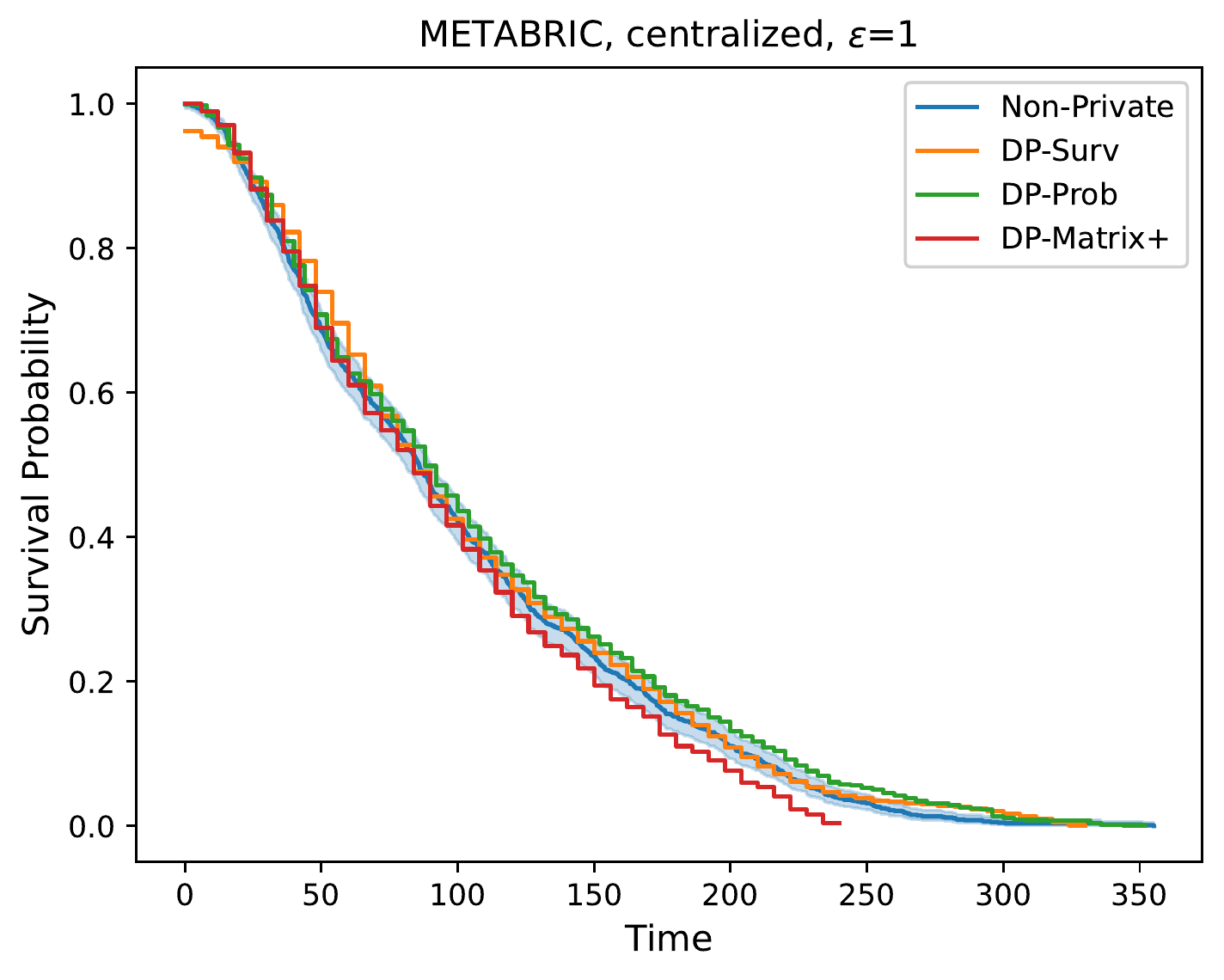}
\end{minipage}
\begin{minipage}[l]{0.68\columnwidth}
        \centering
        \includegraphics[width=\linewidth]{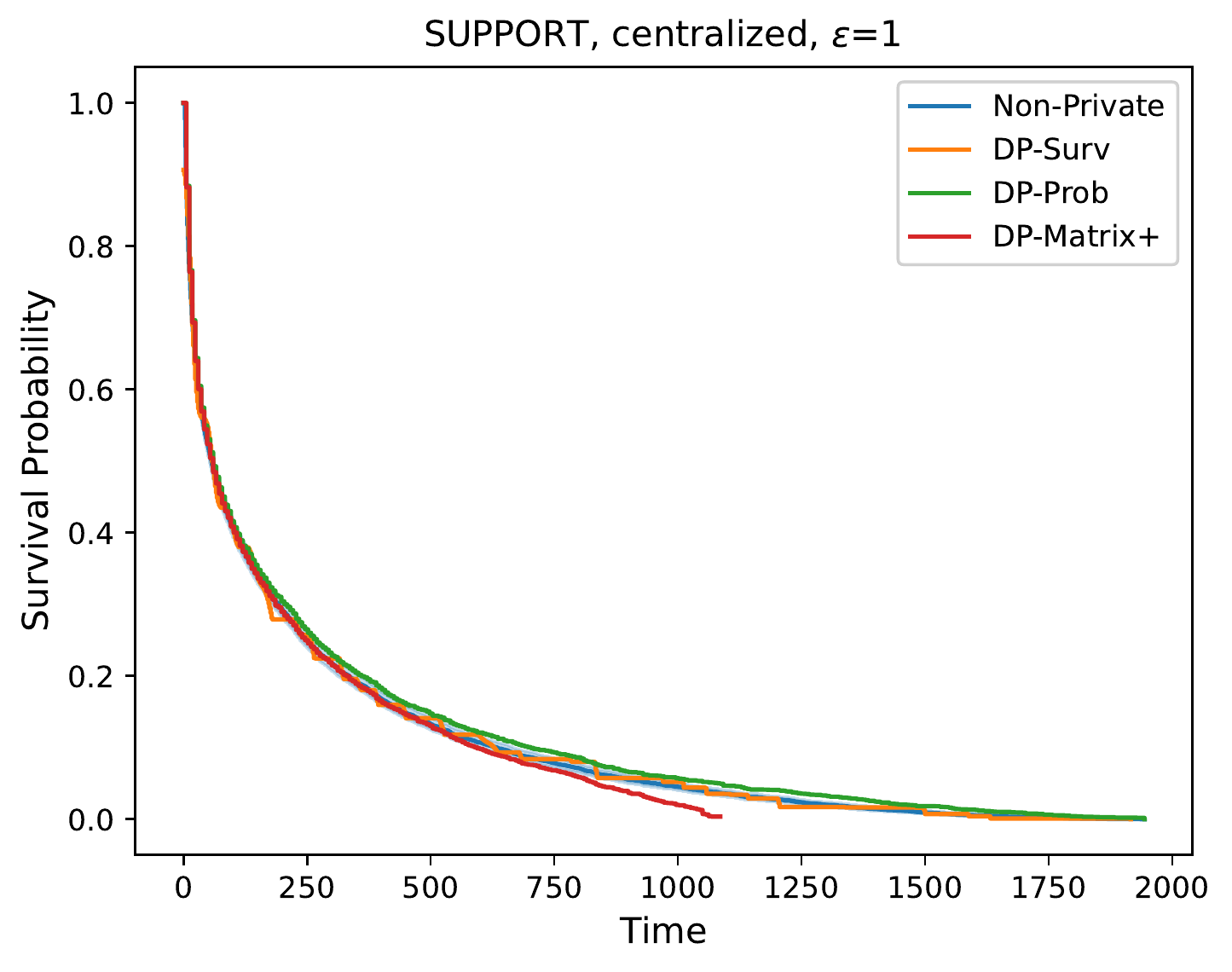}
\end{minipage}
\caption{Comparison of all the DP methods in a centralized setting, for $\varepsilon=1.0$ and one random run of the DP algorithms. The blue shaded region shows the confidence area of the non-private dataset.}
\label{fig:centralized-e1}
\end{figure*}

\begin{table*}[!ht]
\centering
\caption{Performance of the DP methods in the centralized setting for event $e=1$.}
    \hspace*{-0.6cm}\scalebox{1}
    {
    \begin{tabular}{c|c|c|c|c|c|c}
    \toprule
         & & $p$- value& median& $25\%$ $T_{\text{max}}$ & $50\%$ $T_{\text{max}}$ & $75\%$ $T_{\text{max}}$ \\
         \midrule
          \multirow{4}{*}{\rotatebox[origin=c]{90}{\small{GBSG}}}&centralized, non-DP &-& $24 (22; 25)$ & $0.58 (0.55;0.60)$&$0.24 (0.22;0.26) $&$0.08 (0.07;0.11) $\\
         \cline{2-7}
         
          &\dps ($\varepsilon=1$)& $0.39(0.35, 0.44)$ & $24(24, 24)$ & $0.58(0.58, 0.58)$ & $0.25(0.24, 0.25)$& $0.08(0.08, 0.08)$\\
          
         \cline{2-7}
         
          &\dpy ($\varepsilon=1$)& $0.33(0.28, 0.39)$ & $25(25, 25)$ & $0.58(0.57, 0.58)$&$ 0.26(0.26, 0.26)$&$0.08(0.08, 0.09) $\\
         \cline{2-7}
         
          &\dpm ($\varepsilon=1$)& $0.47(0.41, 0.52)$ & $25(24, 25)$ & $0.57(0.57, 0.58)$&$0.25(0.25, 0.25)$&$0.07(0.06, 0.07)$\\
         
          \bottomrule
          \toprule
          
          \multirow{4}{*}{\rotatebox[origin=c]{90}{\small{METABRIC}}}&centralized, non-DP &-& $86 (81; 90)$ & $0.49 (0.46;0.51)$&$0.16 (0.14;0.18) $&$0.02 (0.01;0.03) $\\
         \cline{2-7}

          &\dps ($\varepsilon=1$)& $0.24(0.20, 0.26)$&$84(84, 85)$&$0.49(0.49, 0.49)$ & $0.18(0.18, 0.18)$& $0.02(0.02, 0.02)$\\
          
         \cline{2-7}
         
          &\dpy ($\varepsilon=1$)& $0.07(0.05, 0.09)$ & $89(88, 89)$&$0.49(0.49, 0.50)$&$ 0.17(0.17, 0.18)$&$0.03(0.03, 0.03) $\\

         \cline{2-7}
         
          &\dpm ($\varepsilon=1$)& $0.32(0.15, 0.46)$ & $89(88, 91)$ & $0.51(0.50, 0.51)$&$0.15(0.14, 0.16)$&$0.01(0.00, 0.01)$\\
          \bottomrule
          \toprule
          
          \multirow{4}{*}{\rotatebox[origin=c]{90}{\small{SUPPORT}}}&centralized, non-DP &-& $57 (53; 61)$ & $0.14 (0.13;0.15)$&$0.05 (0.04;0.05) $&$0.01 (0.01;0.01) $\\
         \cline{2-7}

          &\dps ($\varepsilon=1$)& $0.42(0.36, 0.48)$ & $59 (58, 60)$ & $0.14(0.13, 0.14)$ & $0.05(0.05, 0.05)$& $0.01(0.01, 0.01)$\\
          
         \cline{2-7}
         
          &\dpy ($\varepsilon=1$)& $0.00(0.00, 0.00)$ & $60(60, 61)$ & $0.15(0.15, 0.15)$&$ 0.06(0.06, 0.06)$&$0.02(0.02, 0.02) $\\
         \cline{2-7}
          
          &\dpm ($\varepsilon=1$)& $0.37(0.26, 0.47)$ & $60(60, 60)$ & $0.13(0.13, 0.13)$&$0.03(0.02, 0.03)$&$0.00(0.00, 0.00)$\\
          \bottomrule
    \end{tabular}
    }
    \label{tab:centralized-all-e1}
\end{table*}

\subsection{Collaboration}
\label{sec:app-collaboration}

In this section, we expand on our results from Section~\ref{sec:exp-collab} and Section~\ref{sec:exp-collab-uneven}. The setup is exactly the same as explained, but here we show the results for other values of the privacy budget $\varepsilon$. 
\subsubsection{Even Split of Data}
In the paper we analyze the results of collaboration with even splitting of data (where each site has the same amount of data) for the tight privacy budget of $\varepsilon$. Here, we also include the privacy regimes of $\varepsilon=\{3, 5\}$. 

Table~\ref{tab:even-e3} and Table~\ref{tab:even-e5} summarize the results for our DP methods across 10 collaborating sites. The $95\%$ confidence interval of the mean of the metrics over 100 random runs of the DP algorithm is shown in parentheses. 

\myparagraph{Performance of \dps-Based Methods} Our \dps method performs really well for these privacy regimes, as expected. We observe that for values of $\varepsilon$ we have a mean \pv that is above the significance level of 0.05 for all datasets. Similar to the tighter privacy regime of $\varepsilon=1$, here we also observe consistent performance between multiple runs of the algorithm and also between different paths A, B and C. The estimated private survival percentages always fall within the confidence interval of the non-DP datasets. The estimated median and its confidence interval is also always accurate. Here we see that raising the value of $\varepsilon$ to 3 is already enough to solve the issue with median estimation for SUPPORT which is a challenging dataset.

\myparagraph{Performance of \dpy-Based Methods} For these privacy budgets, the \dpy-based paths still do not work as well as the \dps method according to \pv.  

In this lower privacy regime, however, we see more stable behavior of \dpy-based paths, among different runs of the algorithm as well as between different paths D, E, and F. We still observe the problem of overestimation of survival percentages, as we described in the centralized experiments in Section~\ref{sec:exp-centralized}.

\myparagraph{Performance of \dpm-Based Methods} For $\varepsilon=3$, the \pv of this path falls below the significance level of 0.05 for GBSG and SUPPORT and for $\varepsilon=5$, it still falls below the significance level for SUPPORT. It shows stable behavior between multiple runs. However, it still suffers from the problem of under estimating the survival percentages, especially towards the end of the study, for all datasets and for both privacy values, as explained in Section~\ref{sec:exp-centralized}.

\begin{table*}[!ht]
    \centering
        \caption{Collaboration with even data split for $e=1$}
    \scalebox{0.9}{
    \begin{tabular}{c|c|c|c|c|c|c|c}
    \toprule
         & & &$p$- value& median survival time& $25\%$ $T_{\text{max}}$ & $50\%$ $T_{\text{max}}$ & $75\%$ $T_{\text{max}}$ \\
         \midrule
          \multirow{7}{*}{\rotatebox[origin=c]{90}{GBSG}}& centralized, non-private  & &-& $24 (22; 25)$ & $0.58 (0.55;0.60)$&$0.24 (0.22;0.26) $&$0.08 (0.07;0.10) $\\
         \cline{2-8}
         
         && pooled&$0.29(0.25, 0.33)$ & $24(24, 24)$ & $0.58(0.58, 0.58)$ & $0.24(0.24, 0.25)$ &$0.08(0.08, 0.09)$ \\
         &\dps ($\varepsilon=3$)&Averaged $\s'$& $0.25(0.21, 0.29)$& $24(24, 24)$&$0.58(0.58, 0.58)$&$0.25(0.24, 0.25)$&$0.08(0.08, 0.08)$\\
         &&Averaged $\yh'$&$0.24(0.20, 0.27)$&$24(24, 24)$&$0.58(0.58, 0.58)$&$0.25(0.24, 0.25)$&$0.08(0.08, 0.09)$\\
         \cline{2-8}
          
         && pooled&$0.00(0.00, 0.00)$ & $26(26, 26)$ & $0.60(0.60, 0.60)$ & $0.28(0.28, 0.28)$ &$0.11(0.11, 0.11)$ \\
         &\dpy ($\varepsilon=3$)&Averaged $\s'$&$0.00(0.00, 0.00)$&$26(26, 26)$&$0.60(0.60, 0.60)$&$0.28(0.28, 0.28)$&$0.11(0.11, 0.11)$\\
         &&Averaged $\yh'$&$0.00(0.00, 0.00)$&$26(26, 26)$&$0.60(0.60, 0.60)$&$0.28(0.28, 0.28)$&$0.11(0.11, 0.12)$\\
         \cline{2-8}
         
         &\dpm ($\varepsilon=3$)&pooled&$0.01(0.0, 0.01)$&$23(23, 23)$&$0.56(0.56, 0.57)$&$0.18(0.18, 0.18)$&$0.02(0.02, 0.02)$\\

    \bottomrule
    \toprule
         \multirow{7}{*}{\rotatebox[origin=c]{90}{METABRIC}}&centralized, non-private  &&-& $86 (81; 90)$ & $0.49 (0.46;0.51)$&$0.16 (0.14;0.18) $&$0.02 (0.01;0.03) $\\
         \cline{2-8}
          
         && pooled&$0.24(0.2, 0.28)$ & $84(84, 84)$ & $0.49(0.48, 0.49)$ & $0.17(0.17, 0.18)$ &$0.02(0.02, 0.02)$ \\
         &\dps ($\varepsilon=3$)&Averaged $\s'$&$0.14(0.12, 0.16)$&$85(84, 85)$&$0.49(0.49, 0.49)$&$0.18(0.18, 0.18)$&$0.03(0.02, 0.03)$\\
         &&Averaged $\yh'$&$0.12(0.10, 0.14)$&$84(84, 84)$&$0.49(0.49, 0.49)$&$0.18(0.18, 0.18)$&$0.03(0.02, 0.03)$\\
         \cline{2-8}
          
         && pooled&$0(0.00, 0.00)$ & $92(92, 93)$ & $0.54(0.53, 0.54)$ & $0.21(0.21, 0.21)$ &$0.05(0.05, 0.06)$ \\
         &\dpy ($\varepsilon=3$)&Averaged $\s'$&$0(0.00, 0.00)$&$93(93, 94)$&$0.54(0.54, 0.54)$&$0.21(0.21, 0.21)$&$0.06(0.06, 0.06)$\\
         &&Averaged $\yh'$&$0(0.00, 0.00)$&$93(93, 94)$&$0.54(0.54, 0.54)$&$0.21(0.21, 0.21)$&$0.06(0.06, 0.06)$\\
         \cline{2-8}
         
         &\dpm ($\varepsilon=3$)&pooled&$0.12(0.07, 0.15)$&$87(86, 87)$&$0.50(0.50, 0.50)$&$0.11(0.11, 0.11)$&$0.01(0.01, 0.01)$\\

    \bottomrule
    \toprule
          \multirow{7}{*}{\rotatebox[origin=c]{90}{SUPPORT}}&centralized, non-private  &&-& $57 (53; 61)$ & $0.14 (0.13;0.15)$&$0.05 (0.04;0.05) $&$0.01 (0.01;0.01) $\\
         \cline{2-8}
         
         && pooled&$0.20(0.15, 0.24)$ & $60(58, 60)$ & $0.14(0.14, 0.14)$ & $0.05(0.05, 0.05)$ &$0.02(0.01, 0.02)$ \\
         &\dps ($\varepsilon=3$)&Averaged $\s'$&$0.21(0.16, 0.25)$&$59(58, 60)$&$0.14(0.14, 0.14)$&$0.05(0.05, 0.05)$&$0.01(0.01, 0.01)$\\
         &&Averaged $\yh'$&$0.18(0.14, 0.22)$&$59(58, 60)$&$0.14(0.14, 0.14)$&$0.05(0.05, 0.05)$&$0.01(0.01, 0.02)$\\
         \cline{2-8}
         
         && pooled&$0(0.00, 0.00)$ & $167(166, 168)$ & $0.33(0.33, 0.33)$ & $0.19(0.19, 0.19)$ &$0.09(0.09, 0.09)$ \\
         &\dpy ($\varepsilon=3$)&Averaged $\s'$&$0(0.00, 0.00)$&$196(195, 197)$&$0.36(0.35, 0.36)$&$0.21(0.21, 0.21)$&$0.1(0.1, 0.1)$\\
         &&Averaged $\yh'$&$0(0.00, 0.00)$&$196(195, 197)$&$0.36(0.35, 0.36)$&$0.21(0.21, 0.21)$&$0.1(0.1, 0.1)$\\
         \cline{2-8}
         
         &\dpm ($\varepsilon=3$)&pooled&$0(0.00, 0.00)$&$57(56, 57)$&$0.08(0.08, 0.08)$&$0.00(0.00, 0.00)$&$0.00(0.00, 0.00)$\\
         
          \bottomrule
    \end{tabular}
    }
    \label{tab:even-e3}
\end{table*}

\begin{table*}[!ht]
    \centering
        \caption{Collaboration with even data split for $e=1$}
    \scalebox{0.9}{
    \begin{tabular}{c|c|c|c|c|c|c|c}
    \toprule
         & & &$p$- value& median survival time& $25\%$ $T_{\text{max}}$ & $50\%$ $T_{\text{max}}$ & $75\%$ $T_{\text{max}}$ \\
         \midrule
          \multirow{7}{*}{\rotatebox[origin=c]{90}{GBSG}}& centralized, non-private  & &-& $24 (22; 25)$ & $0.58 (0.55;0.60)$&$0.24 (0.22;0.26) $&$0.08 (0.07;0.10) $\\
         \cline{2-8}
         
         && pooled&$0.33(0.29, 0.36)$ & $24(24, 24)$ & $0.58(0.58, 0.58)$ & $0.25(0.24, 0.25)$ &$0.08(0.08, 0.08)$ \\
         &\dps ($\varepsilon=5$)&Averaged $\s'$& $0.28(0.26, 0.31)$& $24(24, 24)$&$0.58(0.58, 0.58)$&$0.24(0.24, 0.25)$&$0.08(0.08, 0.08)$\\
         &&Averaged $\yh'$&$0.27(0.24, 0.29)$&$24(24, 24)$&$0.58(0.58, 0.58)$&$0.25(0.24, 0.25)$&$0.08(0.08, 0.08)$\\
         \cline{2-8}
          
         && pooled&$0.03(0.02, 0.04)$ & $25(25, 25)$ & $0.59(0.59, 0.59)$ & $0.26(0.26, 0.26)$ &$0.10(0.10, 0.10)$ \\
         &\dpy ($\varepsilon=5$)&Averaged $\s'$&$0.01(0.01, 0.02)$&$25(25, 25)$&$0.59(0.59, 0.59)$&$0.26(0.26, 0.27)$&$0.10(0.10, 0.10)$\\
         &&Averaged $\yh'$&$0.01(0.01, 0.02)$&$25(25, 25)$&$0.59(0.59, 0.59)$&$0.26(0.26, 0.27)$&$0.10(0.10, 0.10)$\\
         \cline{2-8}
         
         &\dpm ($\varepsilon=5$)&pooled&$0.21(0.16, 0.26)$&$24(24, 24)$&$0.57(0.57, 0.57)$&$0.22(0.22, 0.22)$&$0.05(0.05, 0.05)$\\

    \bottomrule
    \toprule
         \multirow{7}{*}{\rotatebox[origin=c]{90}{METABRIC}}&centralized, non-private  &&-& $86 (81; 90)$ & $0.49 (0.46;0.51)$&$0.16 (0.14;0.18) $&$0.02 (0.01;0.03) $\\
         \cline{2-8}
          
         && pooled&$0.48(0.43, 0.53)$ & $84(84, 84)$ & $0.48(0.48, 0.49)$ & $0.17(0.17, 0.17)$ &$0.02(0.01, 0.02)$ \\
         &\dps ($\varepsilon=5$)&Averaged $\s'$&$0.16(0.14, 0.18)$&$84(84, 84)$&$0.49(0.49, 0.49)$&$0.18(0.18, 0.18)$&$0.02(0.02, 0.02)$\\
         &&Averaged $\yh'$&$0.16(0.14, 0.18)$&$84(84, 84)$&$0.49(0.49, 0.49)$&$0.18(0.18, 0.18)$&$0.02(0.02, 0.02)$\\
         \cline{2-8}
          
         && pooled&$0(0.00, 0.00)$ & $90(90, 90)$ & $0.52(0.52, 0.53)$ & $0.19(0.18, 0.19)$ &$0.04(0.04, 0.04)$ \\
         &\dpy ($\varepsilon=5$)&Averaged $\s'$&$0(0.00, 0.00)$&$90(90, 90)$&$0.53(0.53, 0.53)$&$0.19(0.19, 0.19)$&$0.04(0.04, 0.04)$\\
         &&Averaged $\yh'$&$0(0.00, 0.00)$&$90(90, 90)$&$0.53(0.53, 0.53)$&$0.19(0.19, 0.19)$&$0.04(0.04, 0.04)$\\
         \cline{2-8}
         
         &\dpm ($\varepsilon=5$)&pooled&$0.46(0.40, 0.51)$&$89(89, 90)$&$0.51(0.51, 0.51)$&$0.14(0.14, 0.14)$&$0.01(0.01, 0.01)$\\

    \bottomrule
    \toprule
          \multirow{7}{*}{\rotatebox[origin=c]{90}{SUPPORT}}&centralized, non-private  &&-& $57 (53; 61)$ & $0.14 (0.13;0.15)$&$0.05 (0.04;0.05) $&$0.01 (0.01;0.01) $\\
         \cline{2-8}
         
         && pooled&$0.32(0.27, 0.36)$ & $59(59, 60)$ & $0.14(0.14, 0.14)$ & $0.05(0.05, 0.05)$ &$0.01(0.01, 0.01)$ \\
         &\dps ($\varepsilon=5$)&Averaged $\s'$&$0.34(0.29, 0.39)$&$59(58, 59)$&$0.14(0.14, 0.14)$&$0.05(0.05, 0.05)$&$0.01(0.01, 0.01)$\\
         &&Averaged $\yh'$&$0.30(0.25, 0.34)$&$59(59, 60)$&$0.14(0.14, 0.14)$&$0.05(0.05, 0.05)$&$0.01(0.01, 0.01)$\\
         \cline{2-8}
         
         && pooled&$0(0.00, 0.00)$ & $101(100, 102)$ & $0.25(0.25, 0.25)$ & $0.13(0.13, 0.13)$ &$0.06(0.06, 0.06)$ \\
         &\dpy ($\varepsilon=5$)&Averaged $\s'$&$0(0.00, 0.00)$&$122(122, 123)$&$0.29(0.29, 0.29)$&$0.16(0.16, 0.16)$&$0.07(0.07, 0.07)$\\
         &&Averaged $\yh'$&$0(0.00, 0.00)$&$121(121, 122)$&$0.28(0.28, 0.29)$&$0.16(0.16, 0.16)$&$0.07(0.07, 0.07)$\\
         \cline{2-8}
         
         &\dpm ($\varepsilon=5$)&pooled&$0(0.00, 0.00)$&$57(57, 57)$&$0.11(0.11, 0.11)$&$0.01(0.01, 0.01)$&$0.00(0.00, 0.00)$\\
         
          \bottomrule
    \end{tabular}
    }
    \label{tab:even-e5}
\end{table*}

\subsubsection{Uneven Split of Data}

Next, we proceed to the collaborative setting with uneven split of the datasets. Here again, the setup is identical as described in Section~\ref{sec:exp-collab-uneven}, but we look at two new privacy values $\varepsilon=\{3, 5\}$. 

Table~\ref{tab:uneven-e3} and Table~\ref{tab:uneven-e5} show the results of our experiments with \dps-based paths A, B, and C. We clearly see consistent and really good results for all these paths. Again, we see very stable behavior of our method across multiple runs of the algorithm with very tight confidence intervals for both median and survival percentages. We also see that paths A, B, and C show very similar behavior, solidifying our prior conclusion that our averaging algorithms work well and give freedom of choice to data collectors. 

To give a visual intuition about these collaborative settings, we provide Figures~\ref{fig:collabe1},~\ref{fig:collabe3} and~\ref{fig:collabe05}, where we show one random run of the \dps-based path B, for each dataset and each data split, for different values of the privacy budget $\varepsilon$. We also depict the non-DP local KM estimators of each client if they used only the local data. We report each site's as well as our methods and also the centralized, non-private dataset's median and \pv with m and p in the plot. For uneven splits of datasets, we also indicate the median and \pv of the minority site (i.e., the site receiving either $5\%$ or $50\%$ of the data). We observe how closely these private estimators mimic the behavior of the centralized dataset and we also observe that in many cases, if sites depend on only their local datasets, they will overestimate or underestimate survival percentages and the median. This again proves that with these private and collaborative paths, there can always be an incentive to participate and learn a better KM estimator without compromising the privacy of the local dataset.

\begin{table*}[ht]
    \centering
        \caption{Collaboration with uneven data split with one site receiving either $50\%$ or $5\%$ of all of the data, for $e=1$ and $\varepsilon=3$}
    \scalebox{0.85}{
    \begin{tabular}{c|c|c|c|c|c|c|c}
    \toprule
         & & &$p$- value& median survival time& $25\%$ $T_{\text{max}}$ & $50\%$ $T_{\text{max}}$ & $75\%$ $T_{\text{max}}$ \\
         \hline
          \multirow{7}{*}{\rotatebox[origin=c]{90}{GBSG}}& centralized, non-private  & &-& $24 (22; 25)$ & $0.58 (0.55;0.60)$&$0.24 (0.22;0.26) $&$0.08 (0.07;0.10) $\\
         \cline{2-8}
         
         && pooled&$0.36(0.31, 0.41)$ & $24(24, 24)$ & $0.58(0.58, 0.58)$ & $0.24(0.24, 0.25)$ &$0.08(0.08, 0.08)$ \\
         &\dps ($\varepsilon=3$)&Averaged $\s'$& $0.29(0.25, 0.32)$& $24(24, 24)$&$0.58(0.58, 0.58)$&$0.24(0.24, 0.25)$&$0.08(0.08, 0.08)$\\
         &minority has $50\%$&Averaged $\yh'$&$0.28(0.24, 0.31)$&$24(24, 24)$&$0.58(0.58, 0.58)$&$0.25(0.25, 0.25)$&$0.08(0.08, 0.09)$\\
         \cline{2-8}
         
         && pooled&$0.30(0.25, 0.34)$ & $24(24, 24)$ & $0.58(0.58, 0.58)$ & $0.25(0.24, 0.25)$ &$0.08(0.08, 0.08)$ \\
         &\dps ($\varepsilon=3$)&Averaged $\s'$& $0.26(0.22, 0.31)$& $24(24, 24)$&$0.58(0.58, 0.58)$&$0.25(0.24, 0.25)$&$0.08(0.08, 0.09)$\\
         &minority has $5\%$&Averaged $\yh'$&$0.23(0.20, 0.27)$&$24(24, 24)$&$0.58(0.58, 0.58)$&$0.25(0.24, 0.25)$&$0.08(0.08, 0.09)$\\

    \bottomrule
    \toprule
         \multirow{7}{*}{\rotatebox[origin=c]{90}{METABRIC}}&centralized, non-private  &&-& $86 (81; 90)$ & $0.49 (0.46;0.51)$&$0.16 (0.14;0.18) $&$0.02 (0.01;0.03) $\\
         \cline{2-8}
          
         && pooled&$0.25(0.20, 0.29)$ & $84(84, 84)$ & $0.49(0.48, 0.49)$ & $0.18(0.17, 0.18)$ &$0.02(0.02, 0.02)$ \\
         &\dps ($\varepsilon=3$)&Averaged $\s'$&$0.13(0.11, 0.15)$&$85(84, 85)$&$0.49(0.49, 0.49)$&$0.18(0.18, 0.18)$&$0.03(0.02, 0.03)$\\
         &minority has $50\%$&Averaged $\yh'$&$0.13(0.11, 0.15)$&$84(84, 85)$&$0.49(0.49, 0.49)$&$0.18(0.18, 0.18)$&$0.03(0.02, 0.03)$\\
         \cline{2-8}
         
         && pooled&$0.26(0.22, 0.30)$ & $84(84, 85)$ & $0.49(0.48, 0.49)$ & $0.17(0.17, 0.18)$ &$0.02(0.02, 0.02)$ \\
         &\dps ($\varepsilon=3$)&Averaged $\s'$&$0.12(0.10, 0.14)$&$84(84, 85)$&$0.49(0.49, 0.49)$&$0.18(0.18, 0.18)$&$0.03(0.02, 0.03)$\\
         &minority has $5\%$&Averaged $\yh'$&$0.14(0.11, 0.17)$&$85(84, 85)$&$0.49(0.49, 0.49)$&$0.18(0.18, 0.18)$&$0.02(0.02, 0.03)$\\

    \bottomrule
    \toprule
          \multirow{7}{*}{\rotatebox[origin=c]{90}{SUPPORT}}&centralized, non-private  &&-& $57 (53; 61)$ & $0.14 (0.13;0.15)$&$0.05 (0.04;0.05) $&$0.01 (0.01;0.01) $\\
         \cline{2-8}
         
         && pooled&$0.18(0.14, 0.22)$ & $59(59, 60)$ & $0.14(0.14, 0.14)$ & $0.05(0.05, 0.05)$ &$0.01(0.01, 0.02)$ \\
         &\dps ($\varepsilon=3$)&Averaged $\s'$&$0.16(0.12, 0.20)$&$60(59, 61)$&$0.14(0.14, 0.14)$&$0.05(0.05, 0.05)$&$0.02(0.01, 0.02)$\\
         &minority has $50\%$&Averaged $\yh'$&$0.18(0.13, 0.22)$&$60(59, 61)$&$0.14(0.14, 0.14)$&$0.05(0.05, 0.05)$&$0.02(0.01, 0.02)$\\
         \cline{2-8}
         
         && pooled&$0.20(0.15, 0.24)$ & $60(60, 61)$ & $0.14(0.14, 0.14)$ & $0.05(0.05, 0.05)$ &$0.01(0.01, 0.02)$ \\
         &\dps ($\varepsilon=3$)&Averaged $\s'$&$0.19(0.15, 0.23)$&$60(59, 61)$&$0.14(0.14, 0.14)$&$0.05(0.05, 0.05)$&$0.01(0.01, 0.01)$\\
         &minority has $5\%$&Averaged $\yh'$&$0.23(0.18, 0.27)$&$60(59, 61)$&$0.14(0.14, 0.14)$&$0.05(0.05, 0.05)$&$0.01(0.01, 0.02)$\\
         
          \bottomrule
    \end{tabular}
    }
    \label{tab:uneven-e3}
\end{table*}

\begin{table*}[ht]
    \centering
        \caption{Collaboration with uneven data split with one site receiving either $50\%$ or $5\%$ of all of the data, for $e=1$ and $\varepsilon=5$}
    \scalebox{0.85}{
    \begin{tabular}{c|c|c|c|c|c|c|c}
    \toprule
         & & &$p$- value& median survival time& $25\%$ $T_{\text{max}}$ & $50\%$ $T_{\text{max}}$ & $75\%$ $T_{\text{max}}$ \\
         \hline
          \multirow{7}{*}{\rotatebox[origin=c]{90}{GBSG}}& centralized, non-private  & &-& $24 (22; 25)$ & $0.58 (0.55;0.60)$&$0.24 (0.22;0.26) $&$0.08 (0.07;0.10) $\\
         \cline{2-8}
         
         && pooled&$0.50(0.46, 0.55)$ & $24(24, 24)$ & $0.58(0.58, 0.58)$ & $0.24(0.24, 0.24)$ &$0.08(0.08, 0.08)$ \\
         &\dps ($\varepsilon=5$)&Averaged $\s'$& $0.34(0.33, 0.36)$& $24(24, 24)$&$0.58(0.58, 0.58)$&$0.24(0.24, 0.24)$&$0.08(0.08, 0.08)$\\
         &minority has $50\%$&Averaged $\yh'$&$0.35(0.33, 0.36)$&$24(24, 24)$&$0.58(0.58, 0.58)$&$0.24(0.24, 0.25)$&$0.08(0.08, 0.09)$\\
         \cline{2-8}
         && pooled&$0.38(0.34, 0.42)$ & $24(24, 24)$ & $0.58(0.58, 0.58)$ & $0.24(0.24, 0.25)$ &$0.08(0.08, 0.08)$ \\
         &\dps ($\varepsilon=5$)&Averaged $\s'$& $0.29(0.26, 0.31)$& $24(24, 24)$&$0.58(0.58, 0.58)$&$0.24(0.24, 0.25)$&$0.08(0.08, 0.08)$\\
         &minority has $5\%$&Averaged $\yh'$&$0.28(0.26, 0.31)$&$24(24, 24)$&$0.58(0.58, 0.58)$&$0.25(0.25, 0.25)$&$0.08(0.08, 0.08)$\\

    \bottomrule
    \toprule
         \multirow{7}{*}{\rotatebox[origin=c]{90}{METABRIC}}&centralized, non-private  &&-& $86 (81; 90)$ & $0.49 (0.46;0.51)$&$0.16 (0.14;0.18) $&$0.02 (0.01;0.03) $\\
         \cline{2-8}
          
         && pooled&$0.37(0.32, 0.42)$ & $84(84, 84)$ & $0.49(0.48, 0.49)$ & $0.17(0.17, 0.17)$ &$0.02(0.02, 0.02)$ \\
         &\dps ($\varepsilon=5$)&Averaged $\s'$&$0.17(0.14, 0.19)$&$84(84, 84)$&$0.49(0.49, 0.49)$&$0.18(0.18, 0.18)$&$0.02(0.02, 0.02)$\\
         &minority has $50\%$&Averaged $\yh'$&$0.19(0.18, 0.20)$&$84(84, 84)$&$0.49(0.49, 0.49)$&$0.18(0.18, 0.18)$&$0.02(0.02, 0.02)$\\
         \cline{2-8}
         && pooled&$0.46(0.41, 0.50)$ & $84(84, 84)$ & $0.48(0.48, 0.49)$ & $0.17(0.17, 0.17)$ &$0.02(0.01, 0.02)$ \\
         &\dps ($\varepsilon=5$)&Averaged $\s'$&$0.15(0.13, 0.16)$&$84(84, 84)$&$0.49(0.49, 0.49)$&$0.18(0.18, 0.18)$&$0.02(0.02, 0.02)$\\
         &minority has $5\%$&Averaged $\yh'$&$0.15(0.14, 0.17)$&$84(84, 84)$&$0.49(0.49, 0.49)$&$0.18(0.18, 0.18)$&$0.02(0.02, 0.02)$\\

    \bottomrule
    \toprule
          \multirow{7}{*}{\rotatebox[origin=c]{90}{SUPPORT}}&centralized, non-private  &&-& $57 (53; 61)$ & $0.14 (0.13;0.15)$&$0.05 (0.04;0.05) $&$0.01 (0.01;0.01) $\\
         \cline{2-8}
         
         && pooled&$0.28(0.24, 0.32)$ & $59(59, 60)$ & $0.14(0.14, 0.14)$ & $0.05(0.05, 0.05)$ &$0.01(0.01, 0.01)$ \\
         &\dps ($\varepsilon=5$)&Averaged $\s'$&$0.33(0.28, 0.37)$&$59(58, 59)$&$0.14(0.14, 0.14)$&$0.05(0.05, 0.05)$&$0.01(0.01, 0.01)$\\
         &minority has $50\%$&Averaged $\yh'$&$0.30(0.26, 0.34)$&$59(58, 59)$&$0.14(0.14, 0.14)$&$0.05(0.05, 0.05)$&$0.01(0.01, 0.01)$\\
         \cline{2-8}
         && pooled&$0.29(0.25, 0.33)$ & $59(58, 59)$ & $0.14(0.14, 0.14)$ & $0.05(0.05, 0.05)$ &$0.01(0.01, 0.01)$ \\
         &\dps ($\varepsilon=5$)&Averaged $\s'$&$0.31(0.26, 0.35)$&$58(58, 59)$&$0.14(0.14, 0.14)$&$0.05(0.05, 0.05)$&$0.01(0.01, 0.01)$\\
         &minority has $5\%$&Averaged $\yh'$&$0.27(0.23, 0.30)$&$59(59, 59)$&$0.14(0.14, 0.14)$&$0.05(0.05, 0.05)$&$0.01(0.01, 0.01)$\\
         
          \bottomrule
    \end{tabular}
    }
    \label{tab:uneven-e5}
\end{table*}

\begin{figure*}[!ht]
\centering
\hspace*{-.5cm}
\begin{minipage}[l]{0.73\columnwidth}
        \centering
        \includegraphics[width=\linewidth]{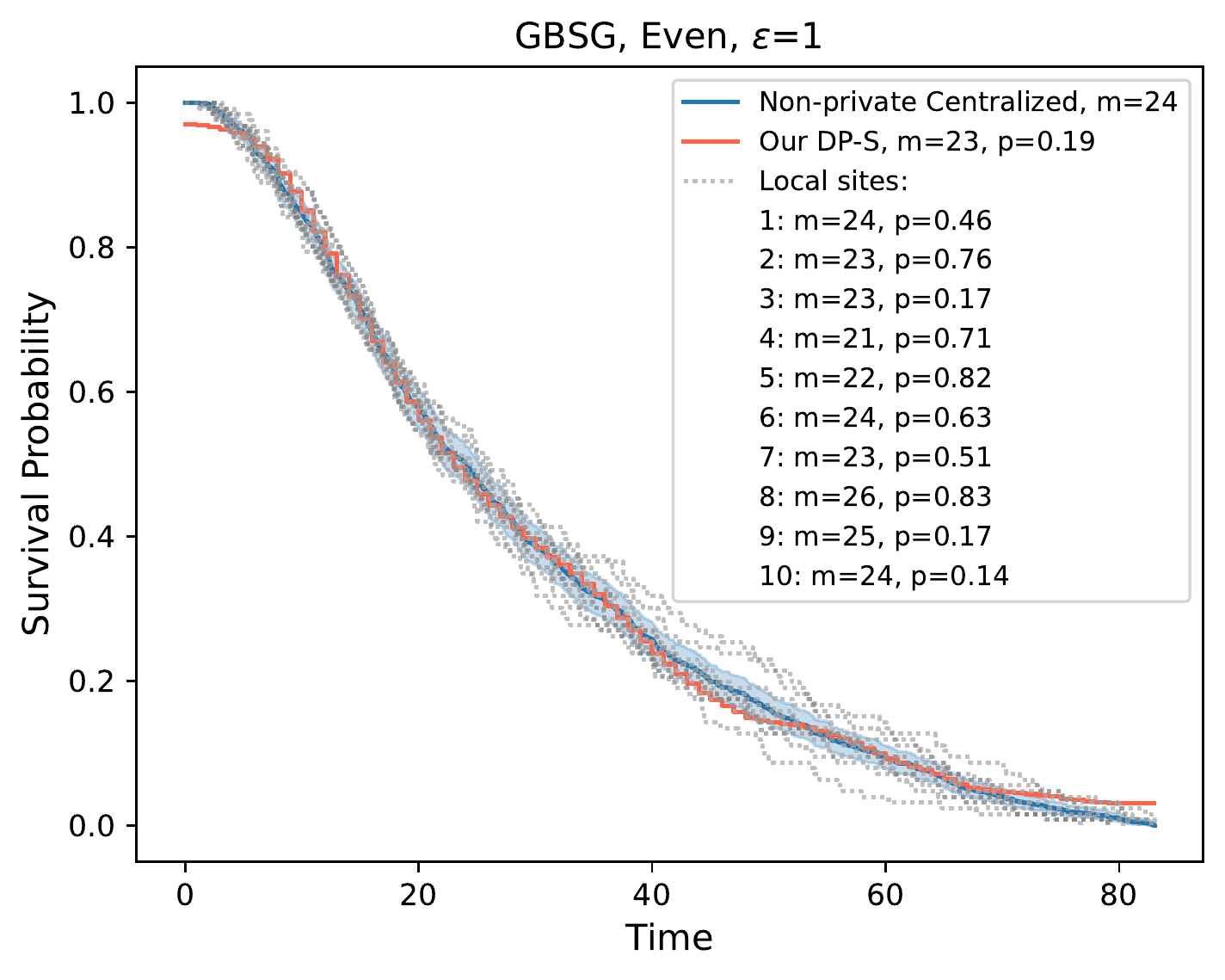}
\end{minipage}%
\hspace*{-.2cm}
\begin{minipage}[l]{0.73\columnwidth}
        \centering
        \includegraphics[width=\linewidth]{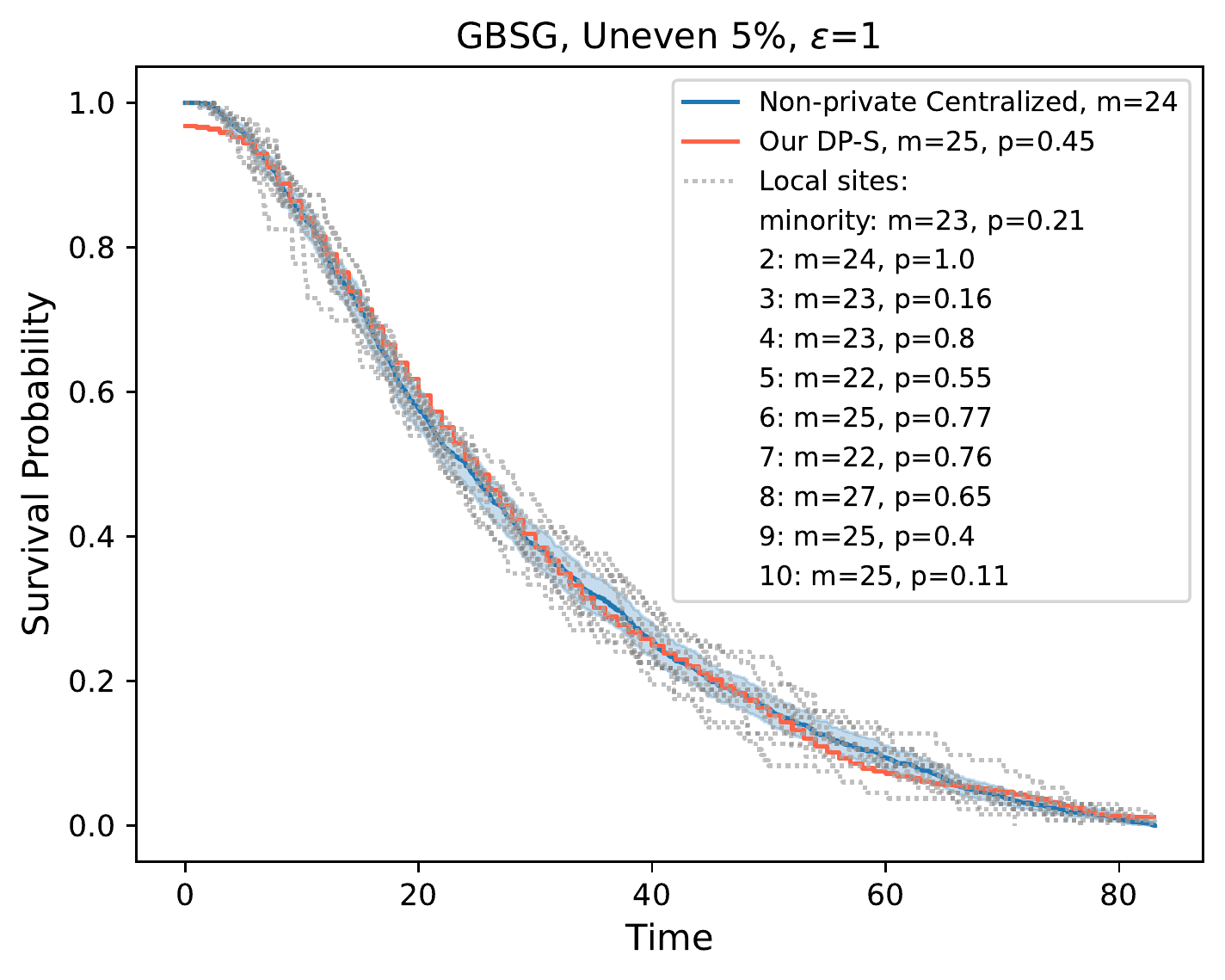}
\end{minipage}%
\hspace*{-.2cm}
\begin{minipage}[l]{0.73\columnwidth}
        \centering
        \includegraphics[width=\linewidth]{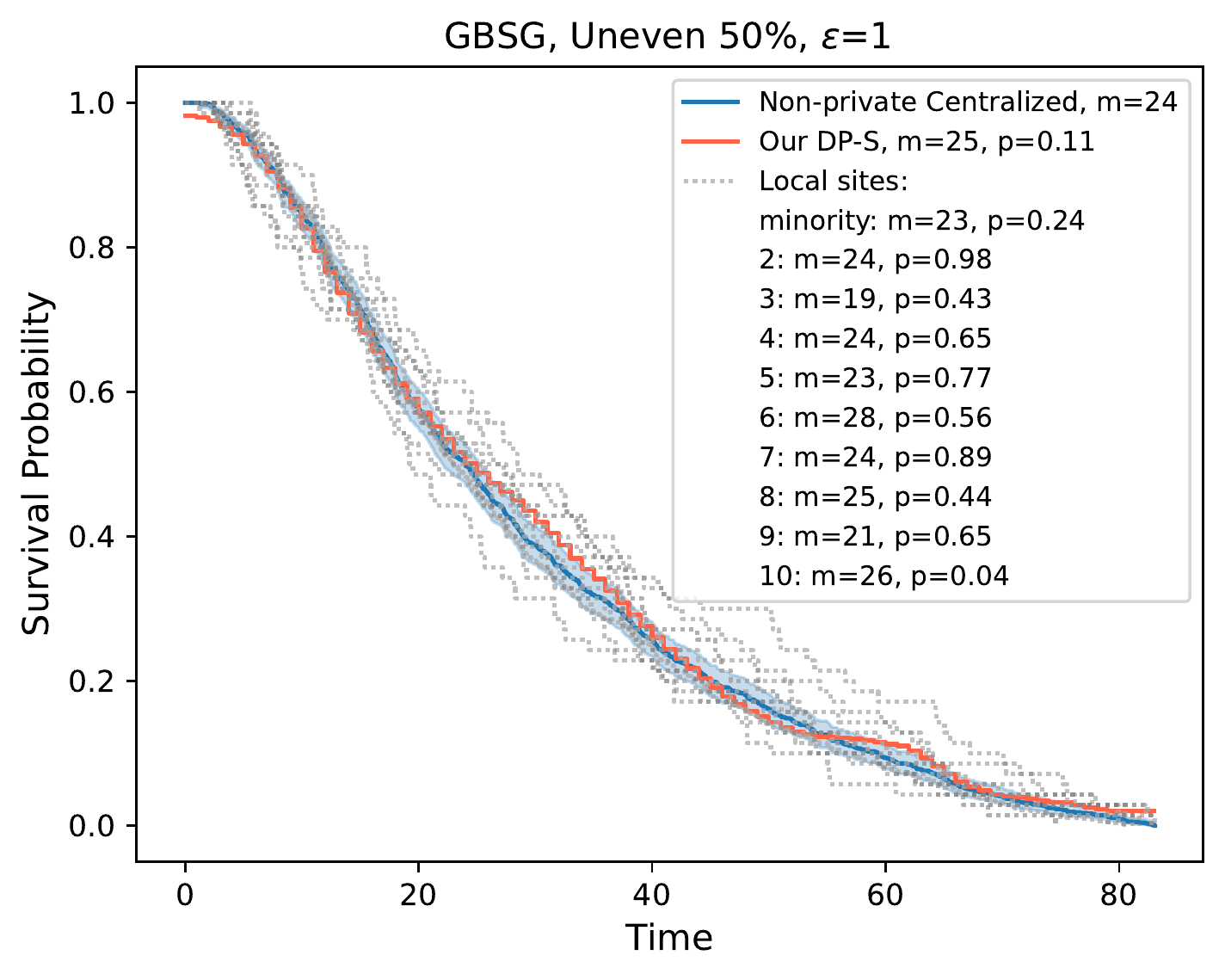}
\end{minipage}
    
\hspace*{-.5cm}
\begin{minipage}[l]{0.73\columnwidth}
        \centering
        \includegraphics[width=\linewidth]{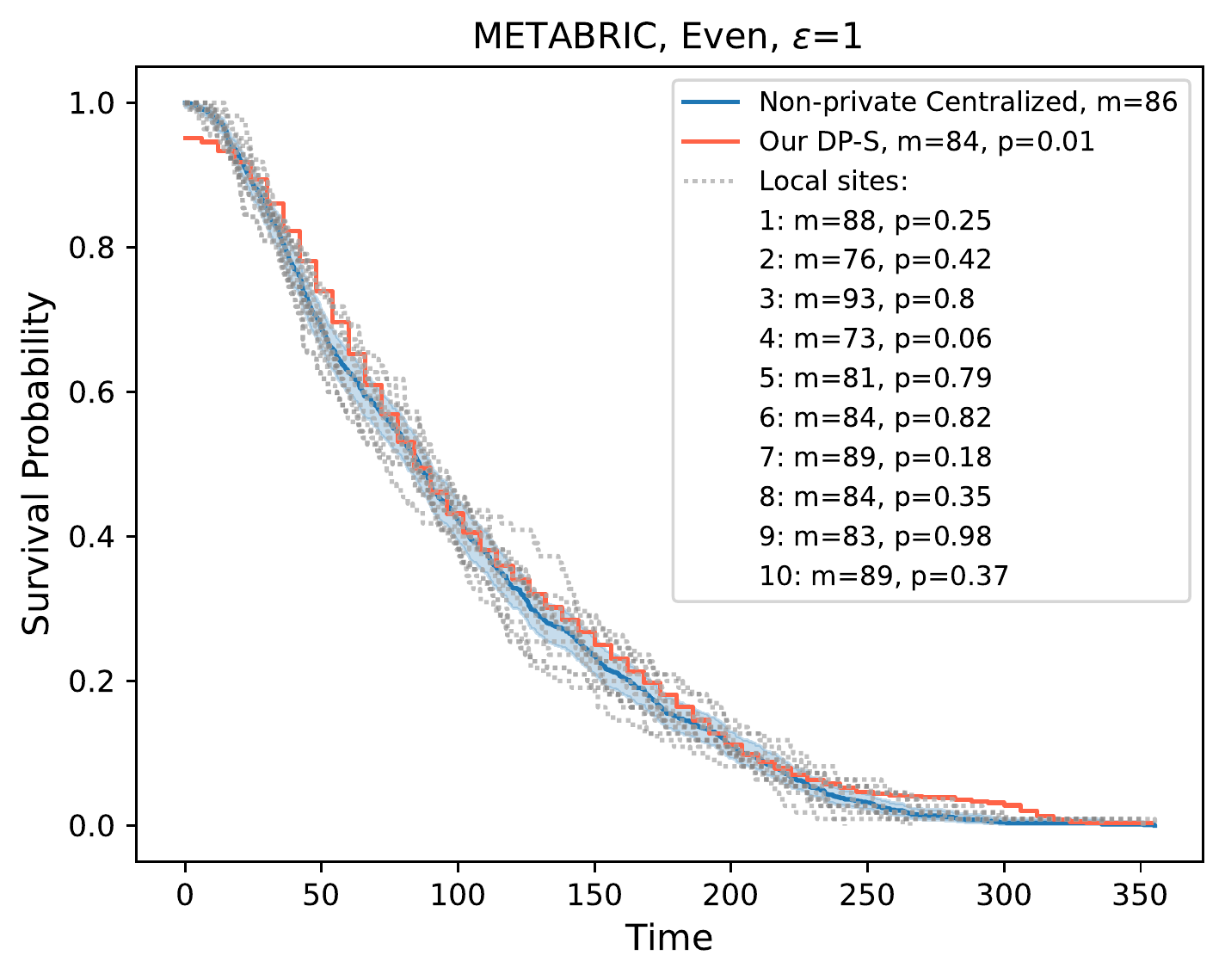}
\end{minipage}%
\hspace*{-.2cm}
\begin{minipage}[l]{0.73\columnwidth}
        \centering
        \includegraphics[width=\linewidth]{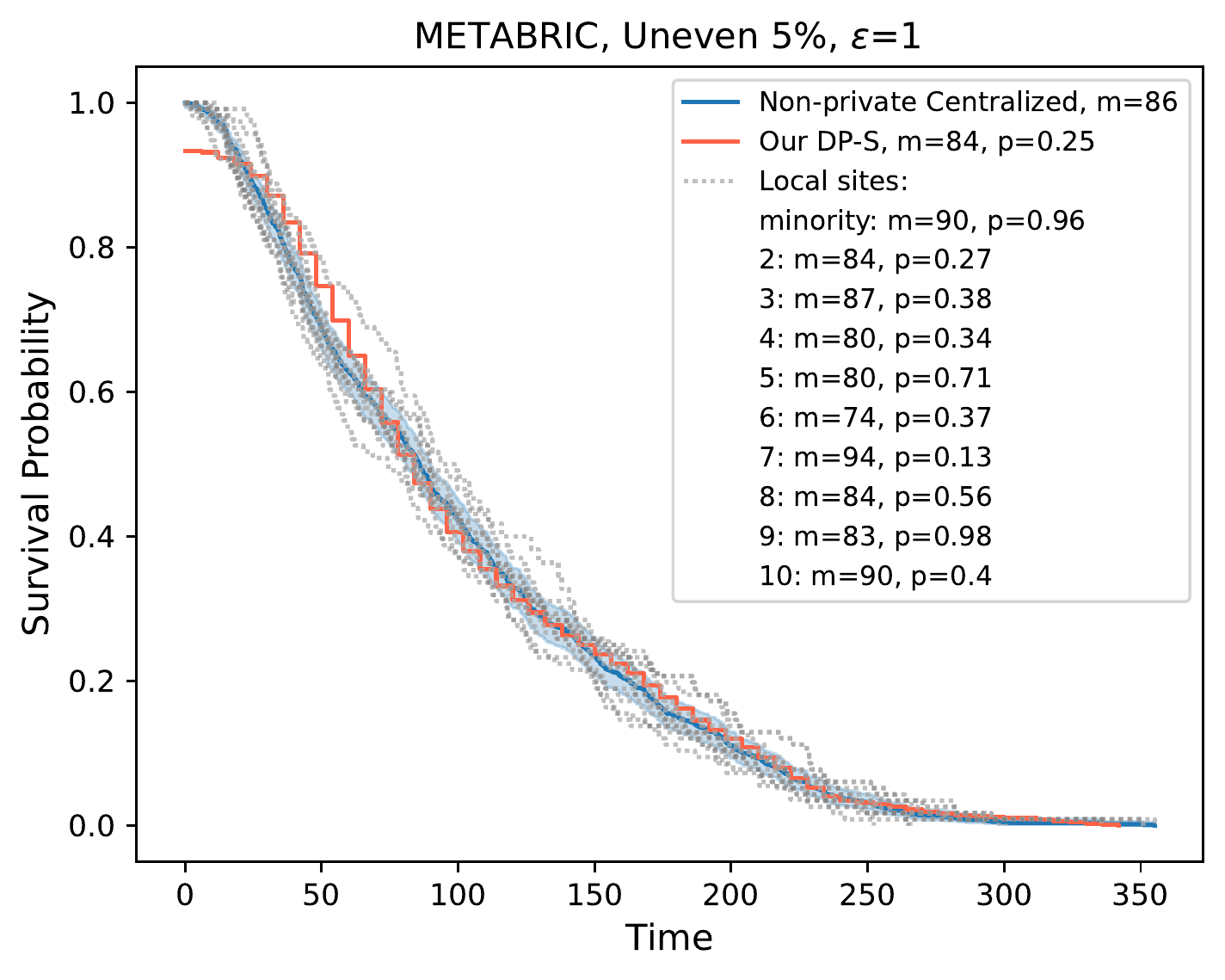}
\end{minipage}%
\hspace*{-.2cm}
\begin{minipage}[l]{0.73\columnwidth}
        \centering
        \includegraphics[width=\linewidth]{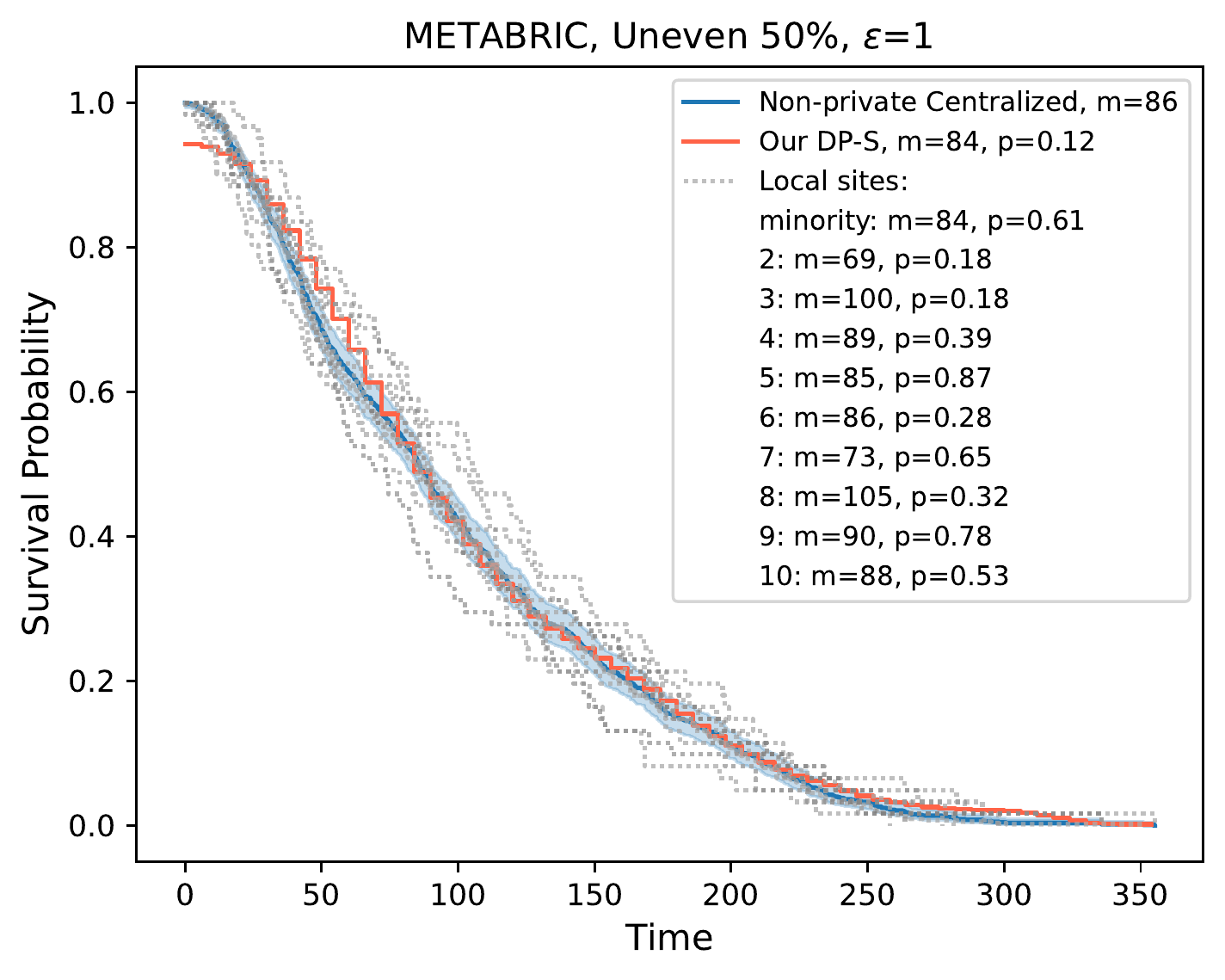}
\end{minipage}

\hspace*{-.5cm}
\begin{minipage}[l]{0.73\columnwidth}
        \centering
        \includegraphics[width=\linewidth]{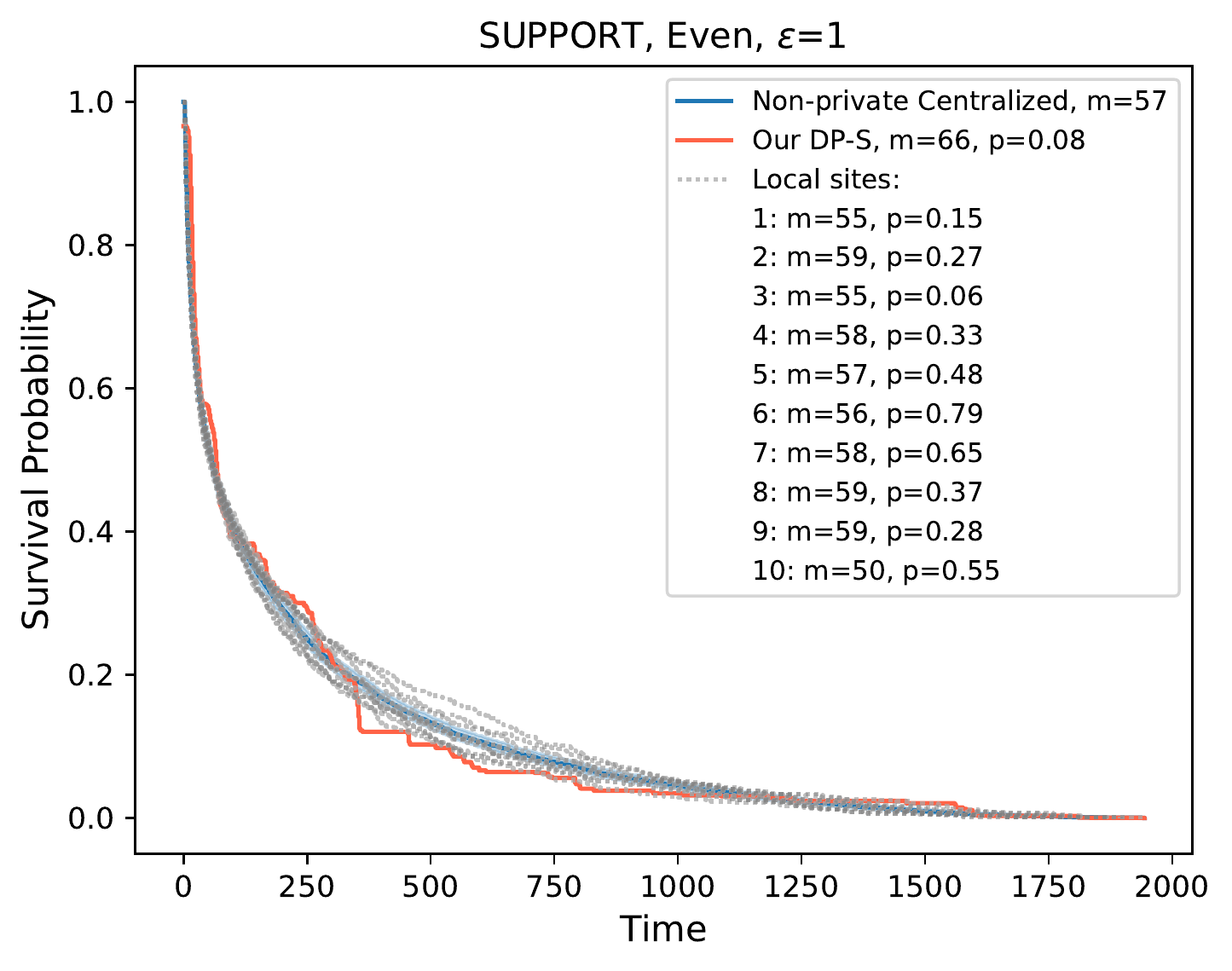}
\end{minipage}%
\hspace*{-.2cm}
\begin{minipage}[l]{0.73\columnwidth}
        \centering
        \includegraphics[width=\linewidth]{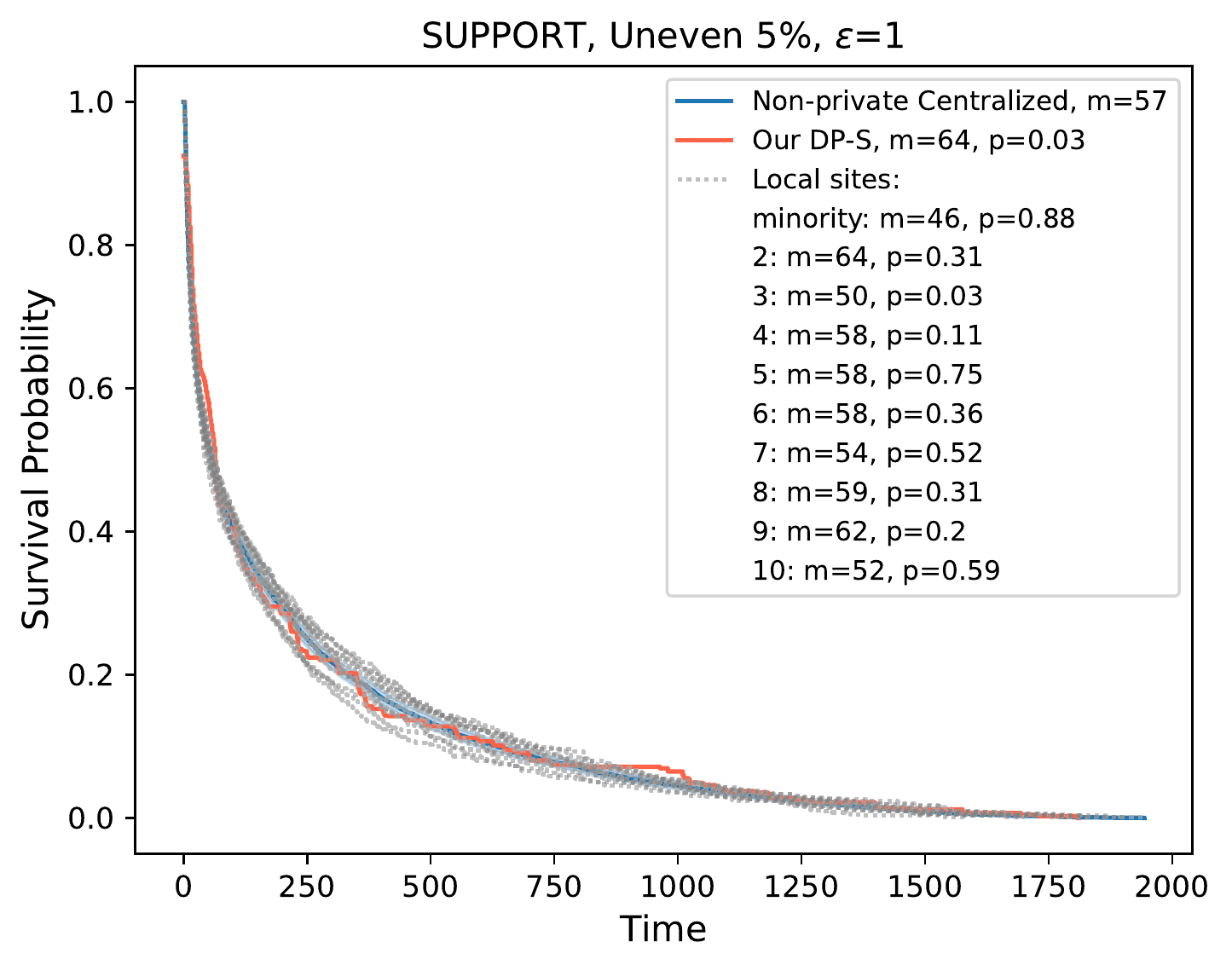}
\end{minipage}%
\hspace*{-.2cm}
\begin{minipage}[l]{0.73\columnwidth}
        \centering
        \includegraphics[width=\linewidth]{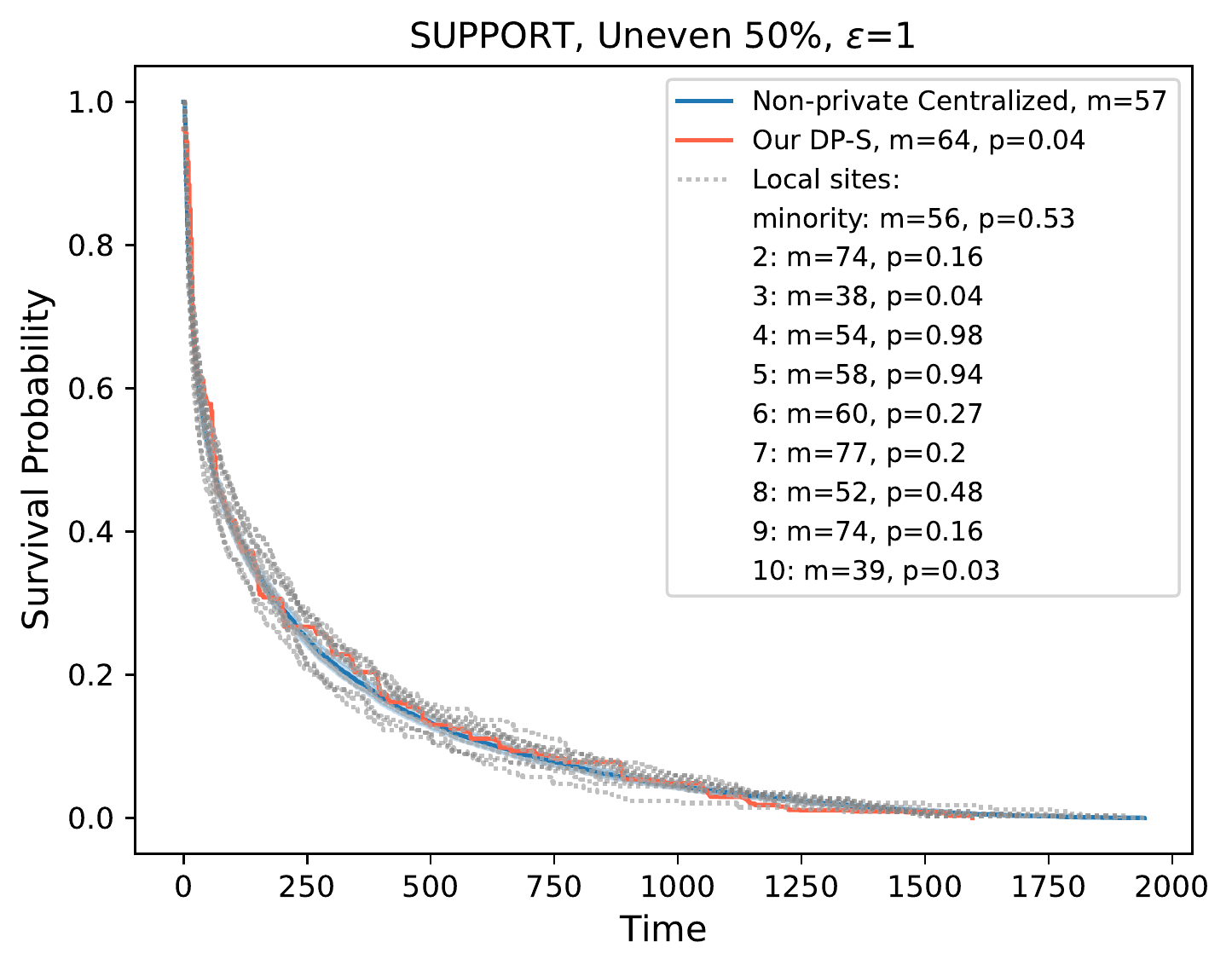}
\end{minipage}
\caption{Collaboration among 10 sites for 3 types of data splitting. Our private \dps method is shown with the red line. The median and the \pv to the non-private, centralized estimator is shown by m and p for our method and also for each site when only the local data is used to construct the KM curve.}
\label{fig:collabe1}
\end{figure*}

\begin{figure*}[!ht]
\centering
\hspace*{-.5cm}
\begin{minipage}[l]{0.73\columnwidth}
        \centering
        \includegraphics[width=\linewidth]{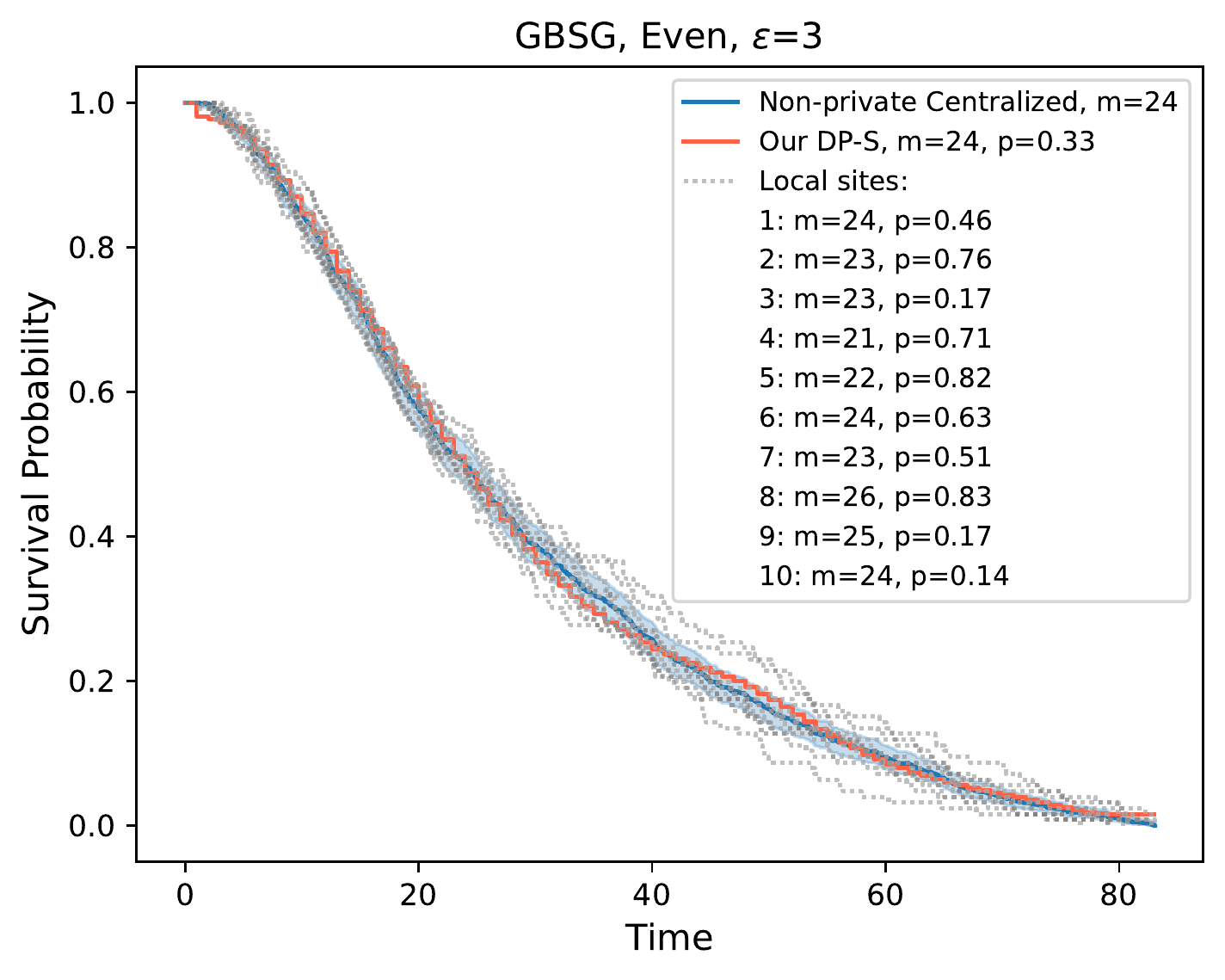}
\end{minipage}%
\hspace*{-.2cm}
\begin{minipage}[l]{0.73\columnwidth}
        \centering
        \includegraphics[width=\linewidth]{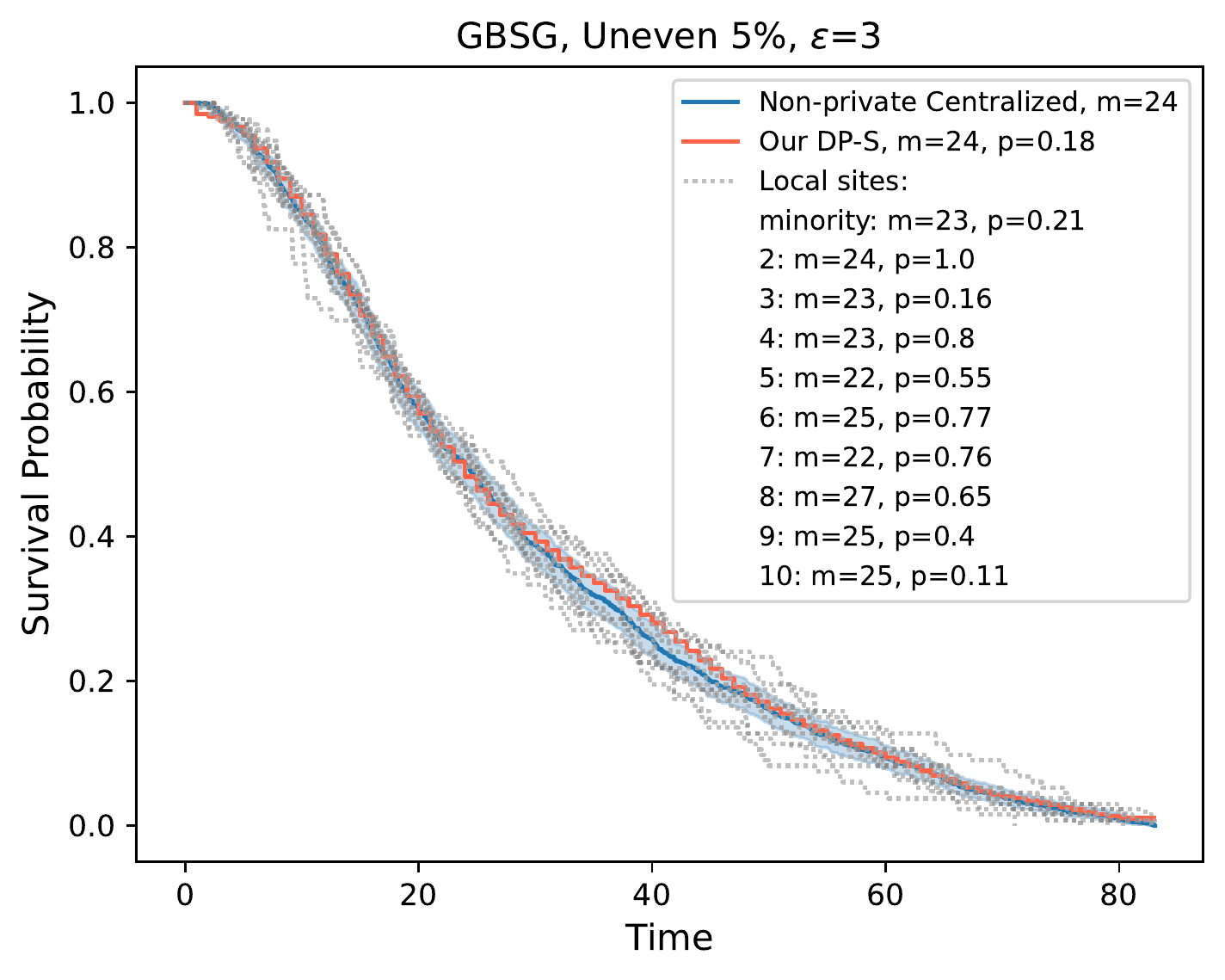}
\end{minipage}%
\hspace*{-.2cm}
\begin{minipage}[l]{0.73\columnwidth}
        \centering
        \includegraphics[width=\linewidth]{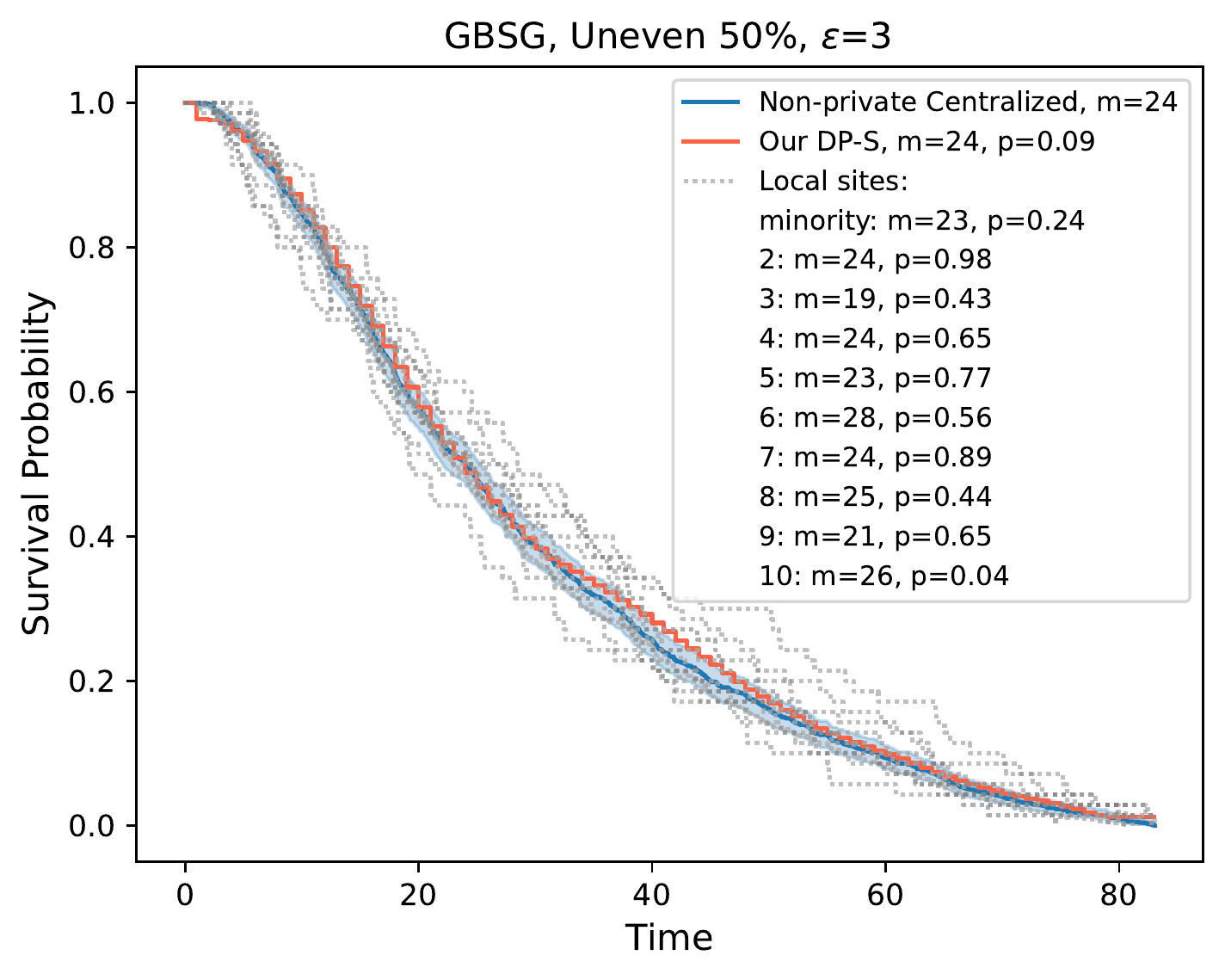}
\end{minipage}
    
\hspace*{-.5cm}
\begin{minipage}[l]{0.73\columnwidth}
        \centering
        \includegraphics[width=\linewidth]{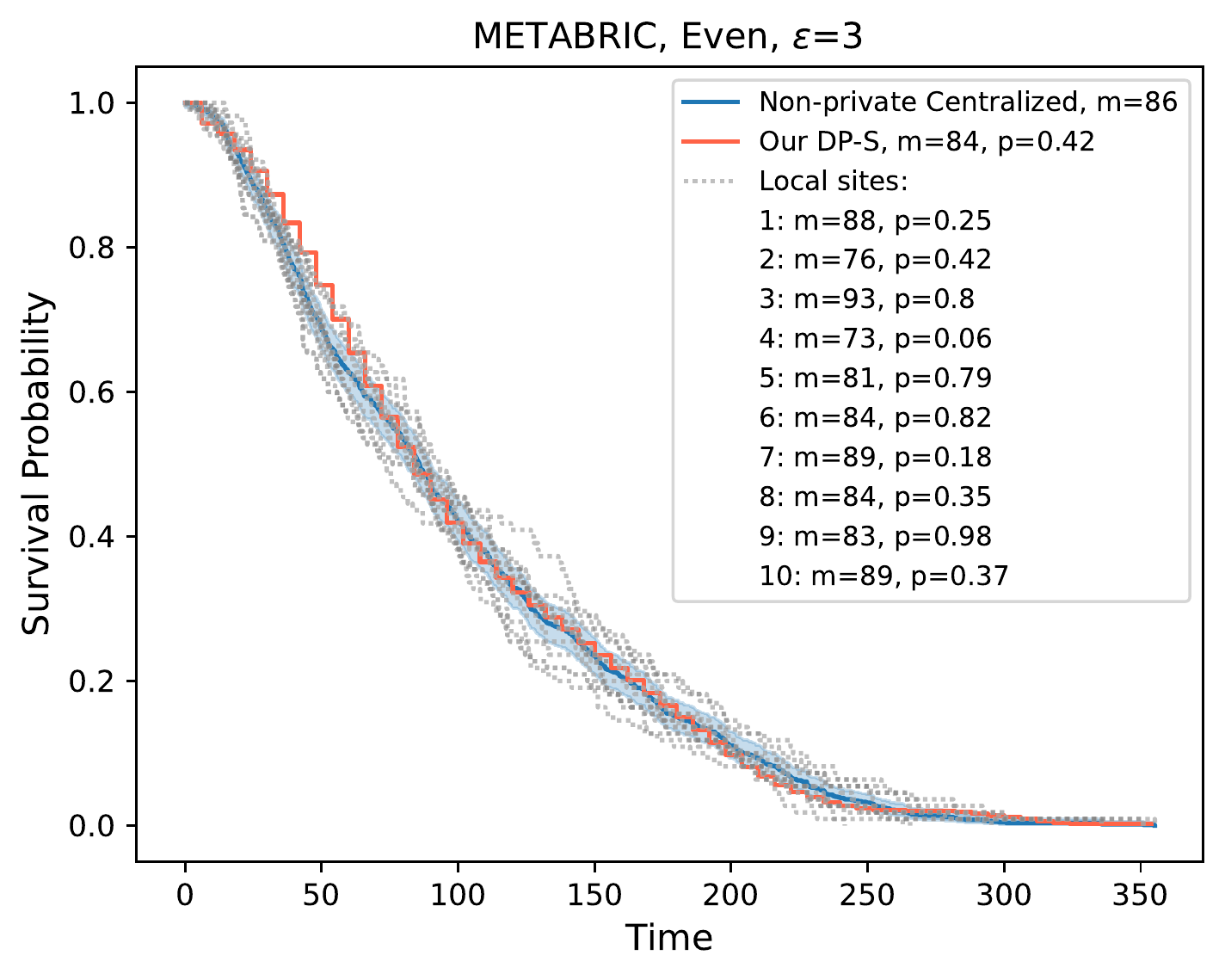}
\end{minipage}%
\hspace*{-.2cm}
\begin{minipage}[l]{0.73\columnwidth}
        \centering
        \includegraphics[width=\linewidth]{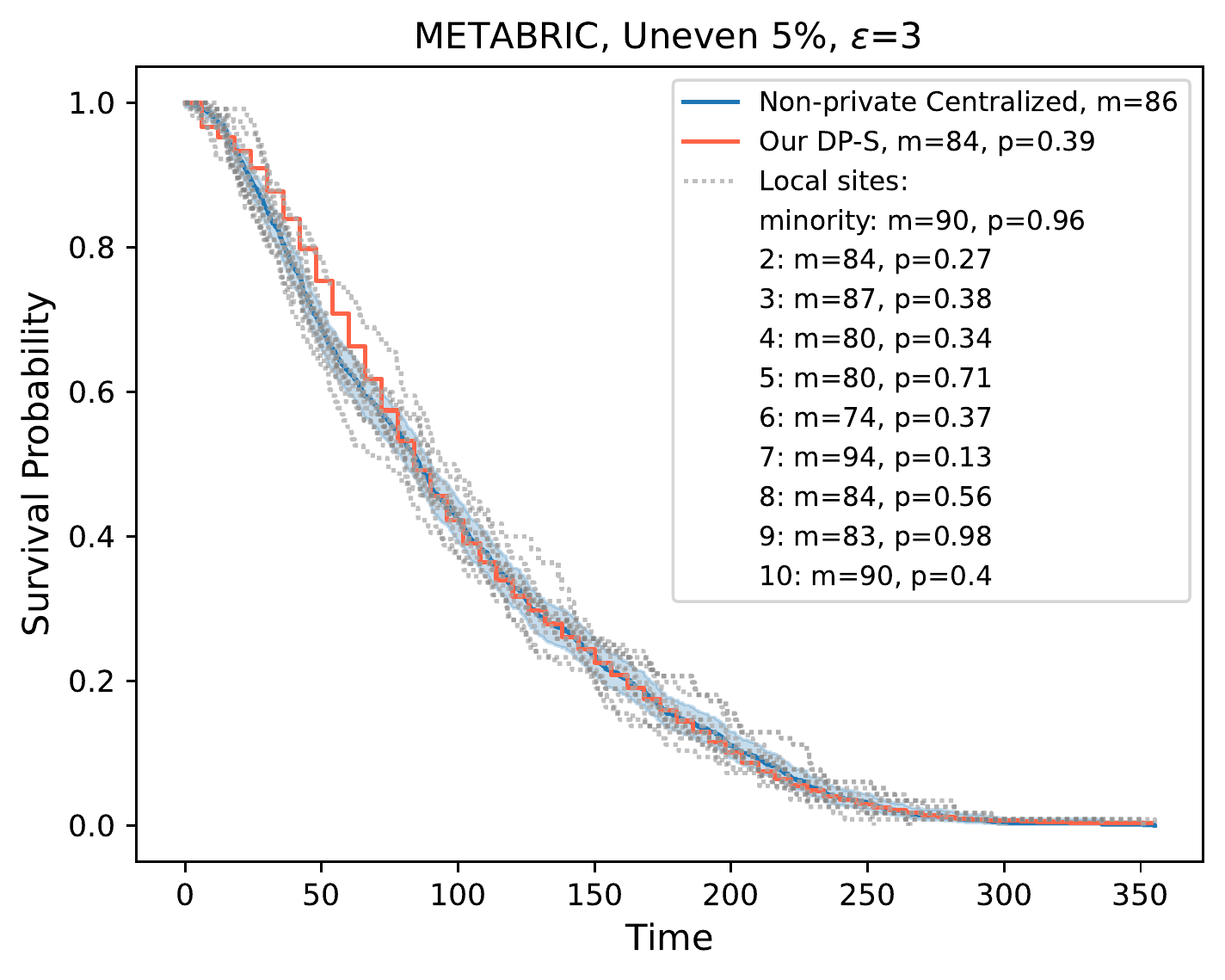}
\end{minipage}%
\hspace*{-.2cm}
\begin{minipage}[l]{0.73\columnwidth}
        \centering
        \includegraphics[width=\linewidth]{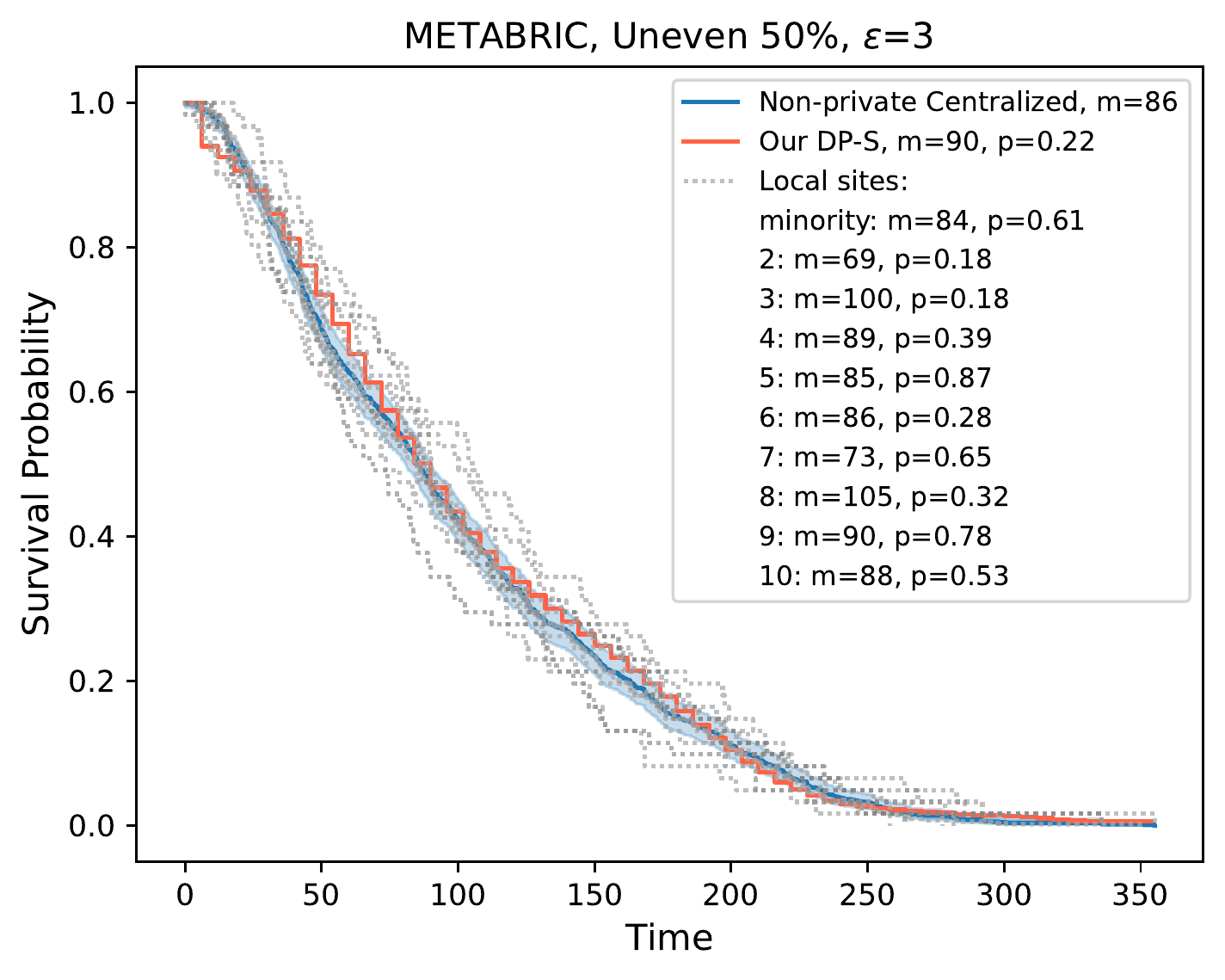}
\end{minipage}

\hspace*{-.5cm}
\begin{minipage}[l]{0.73\columnwidth}
        \centering
        \includegraphics[width=\linewidth]{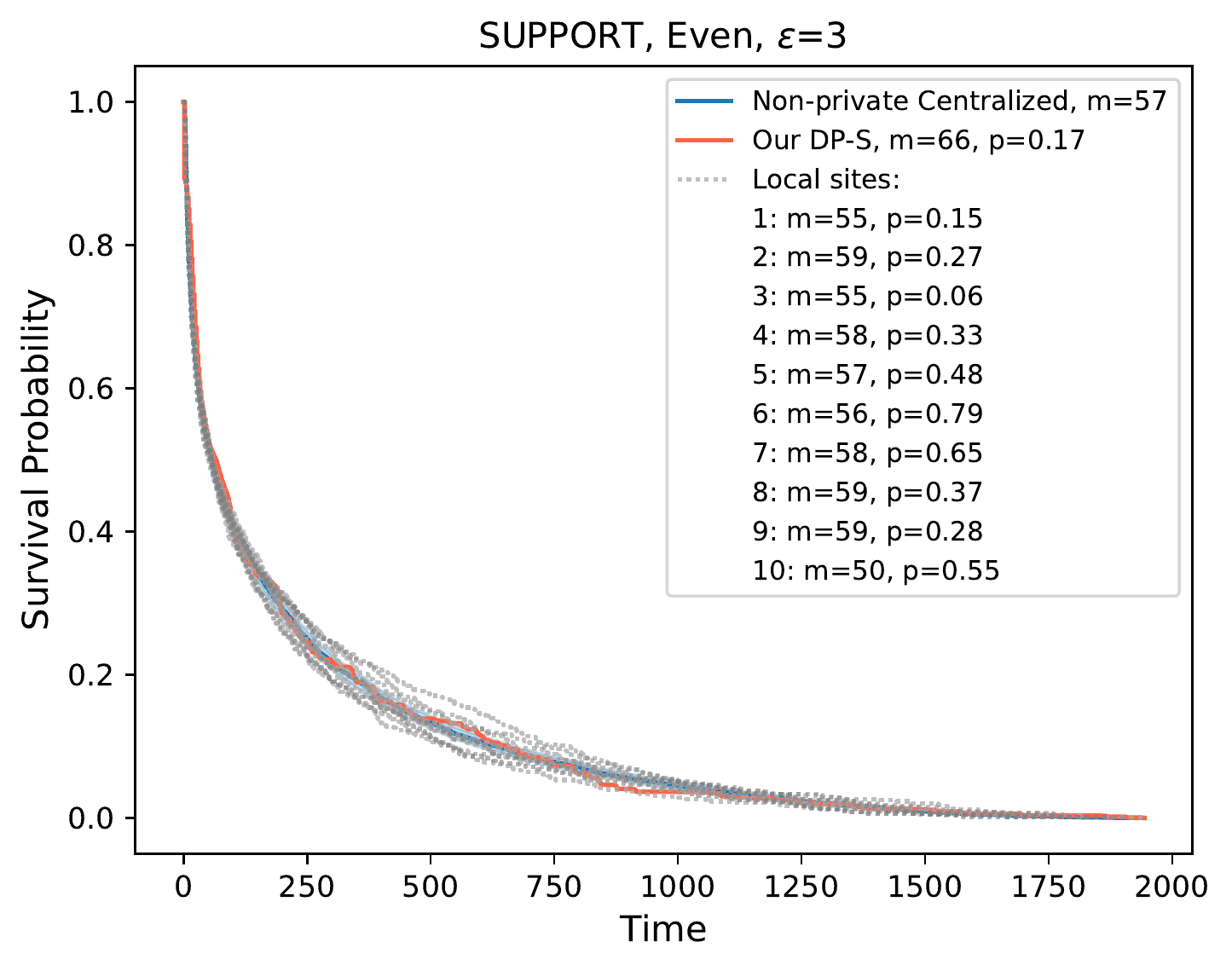}
\end{minipage}%
\hspace*{-.2cm}
\begin{minipage}[l]{0.73\columnwidth}
        \centering
        \includegraphics[width=\linewidth]{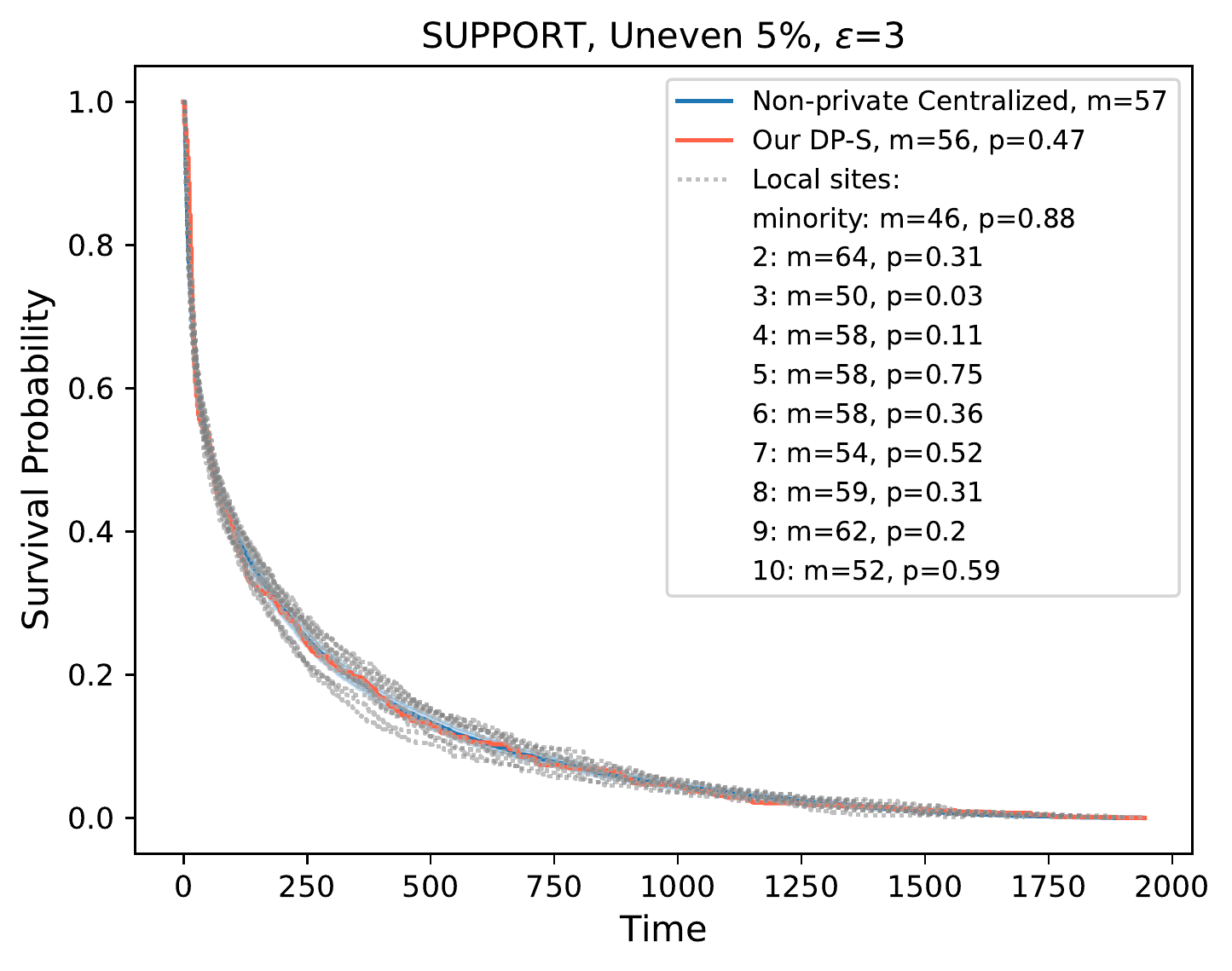}
\end{minipage}%
\hspace*{-.2cm}
\begin{minipage}[l]{0.73\columnwidth}
        \centering
        \includegraphics[width=\linewidth]{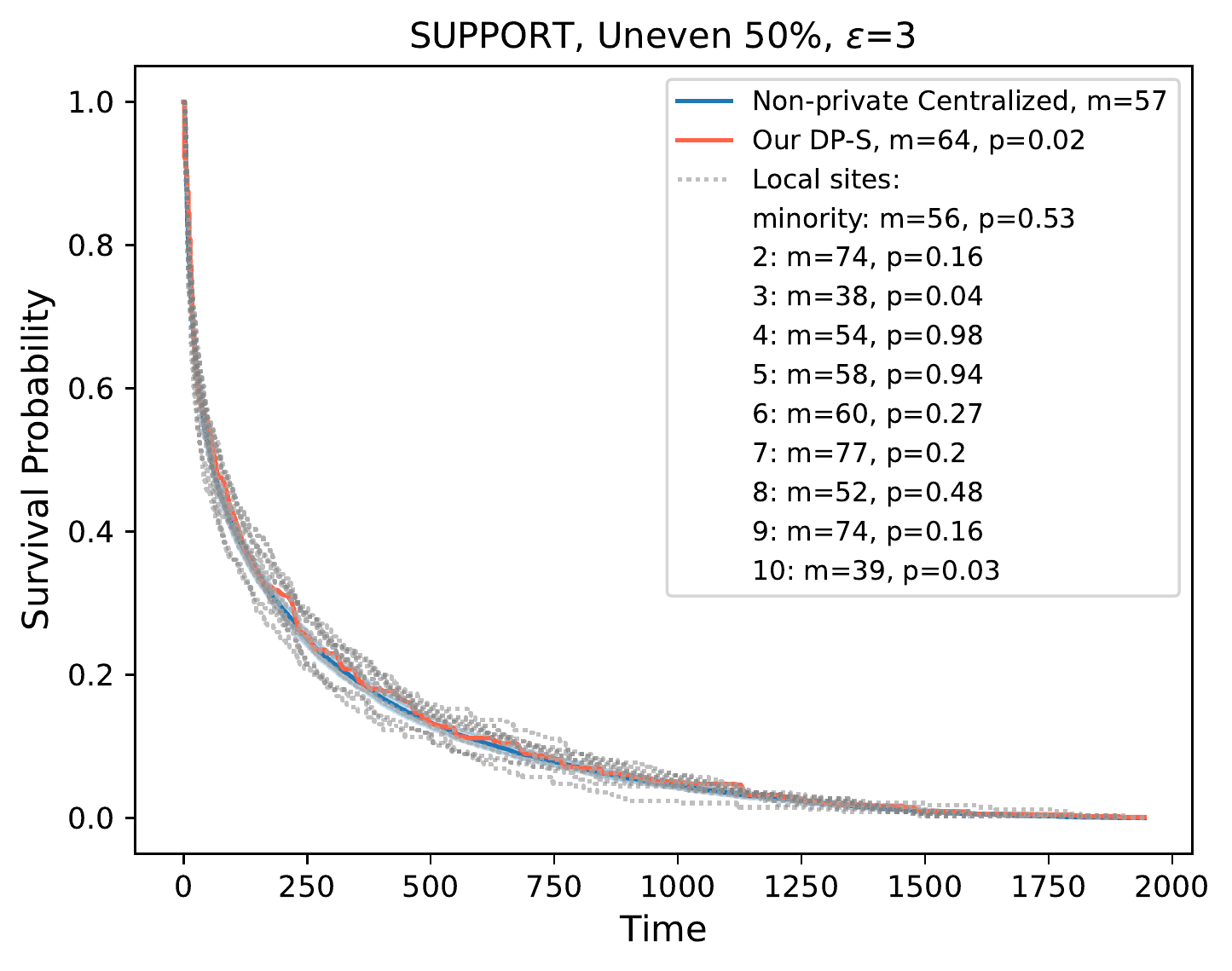}
\end{minipage}
\caption{Collaboration among 10 sites for 3 types of data splitting. Our private \dps method is shown with the red line. The median and the \pv to the non-private, centralized estimator is shown by m and p for our method and also for each site when only the local data is used to construct the KM curve.}
\label{fig:collabe3}
\end{figure*}

\begin{figure*}[ht]
\centering
\begin{minipage}[l]{0.67\columnwidth}
        \centering
        \includegraphics[width=\linewidth]{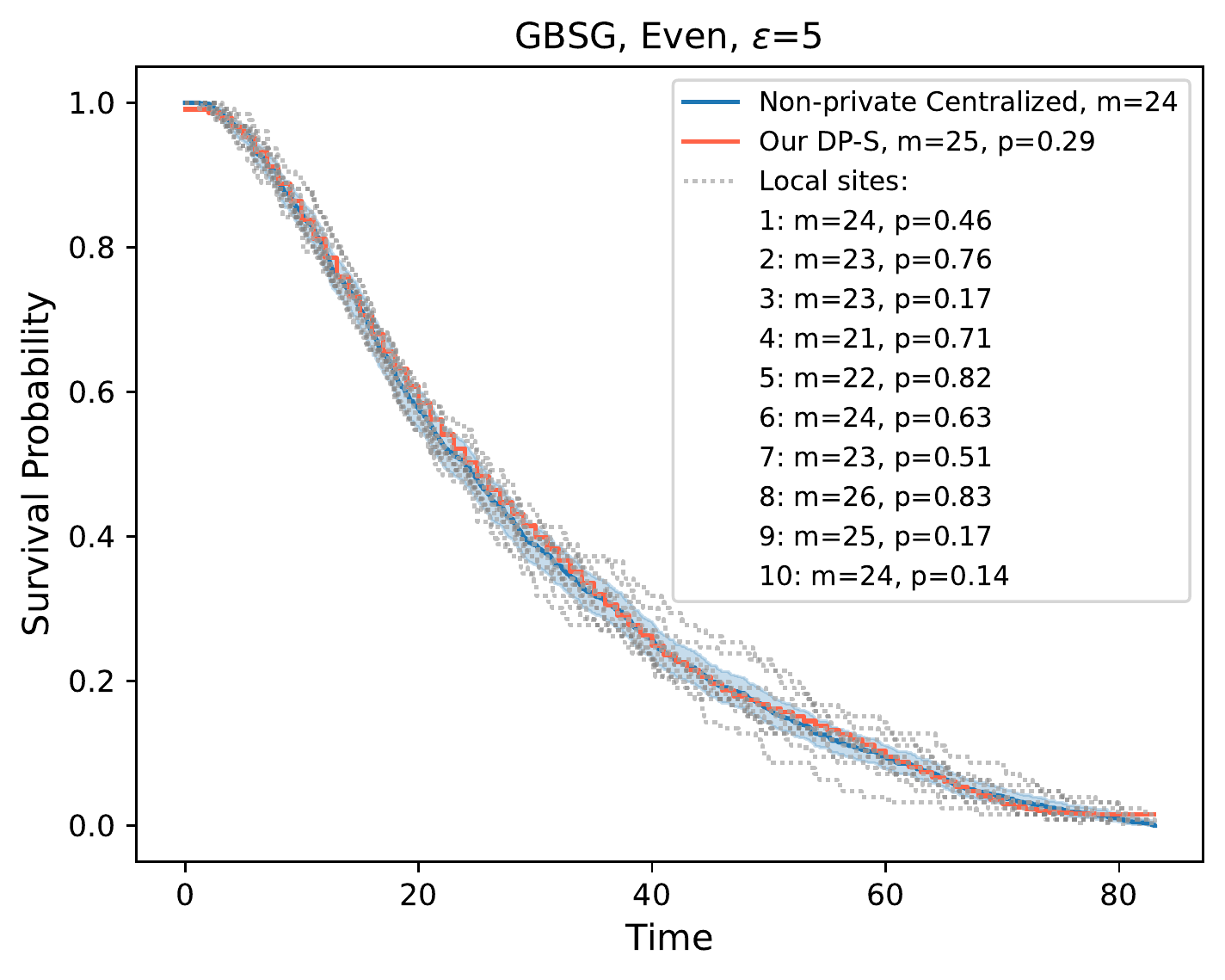}
\end{minipage}
\begin{minipage}[l]{0.67\columnwidth}
        \centering
        \includegraphics[width=\linewidth]{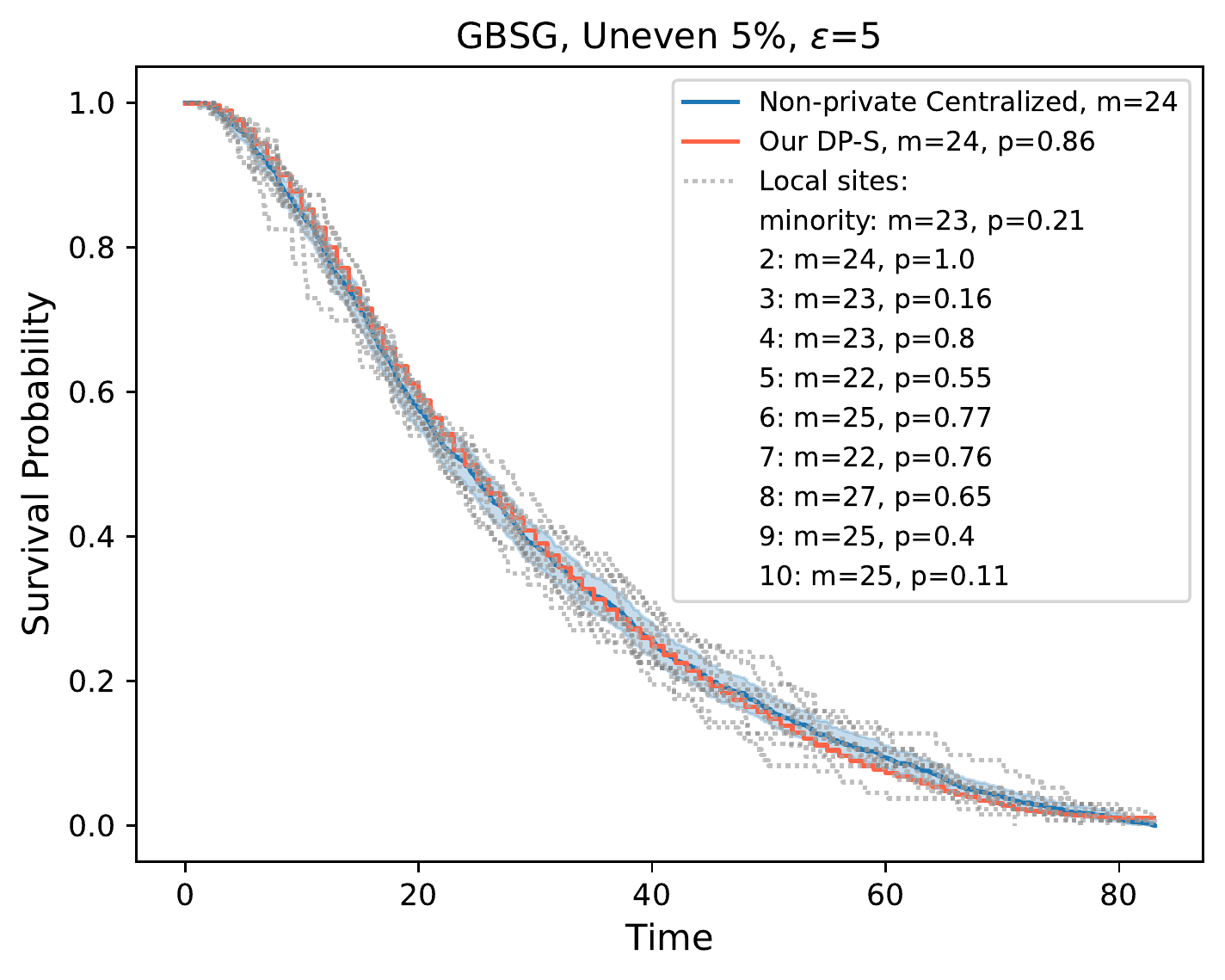}
\end{minipage}
\begin{minipage}[l]{0.67\columnwidth}
        \centering
        \includegraphics[width=\linewidth]{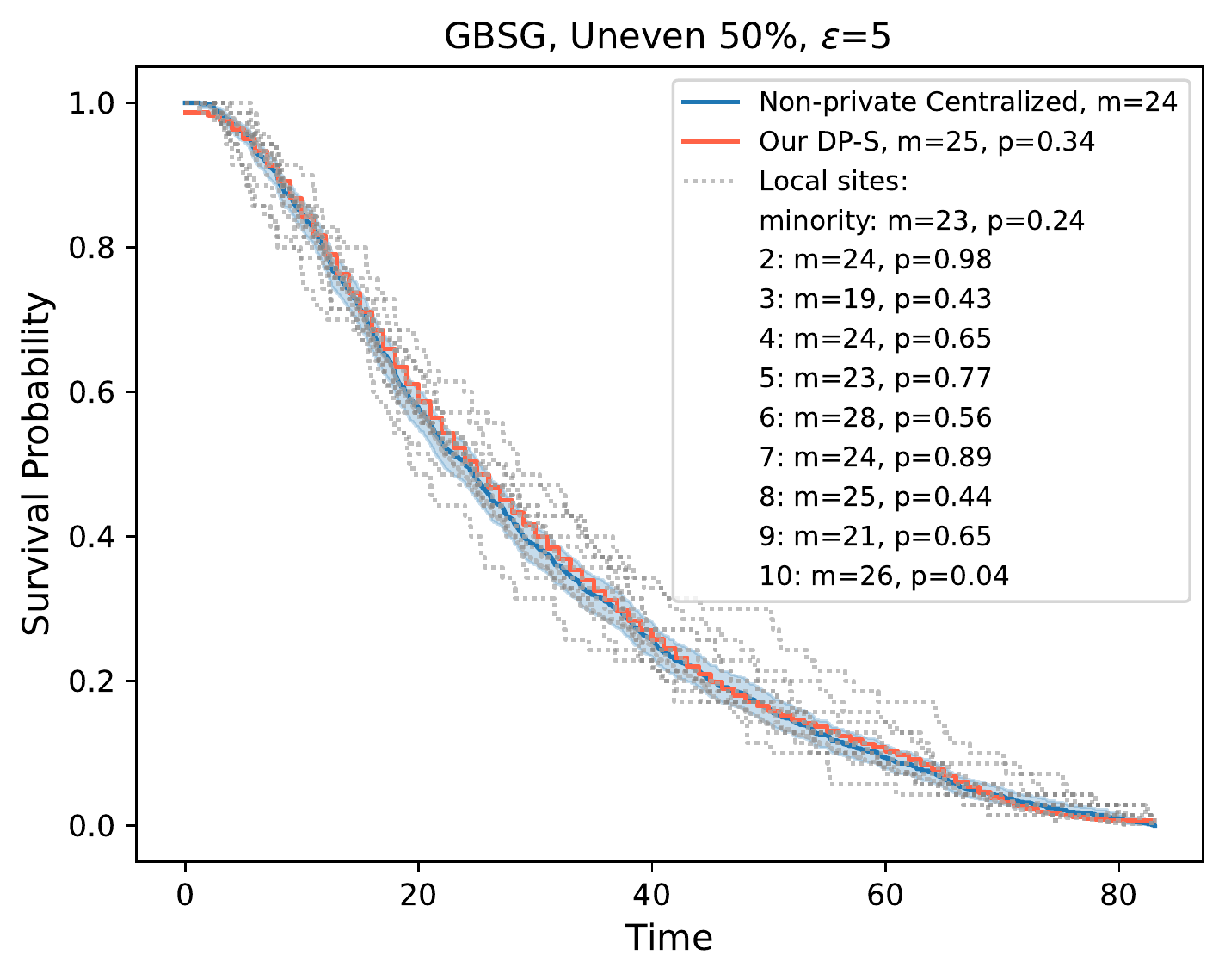}
\end{minipage}
    
\begin{minipage}[l]{0.67\columnwidth}
        \centering
        \includegraphics[width=\linewidth]{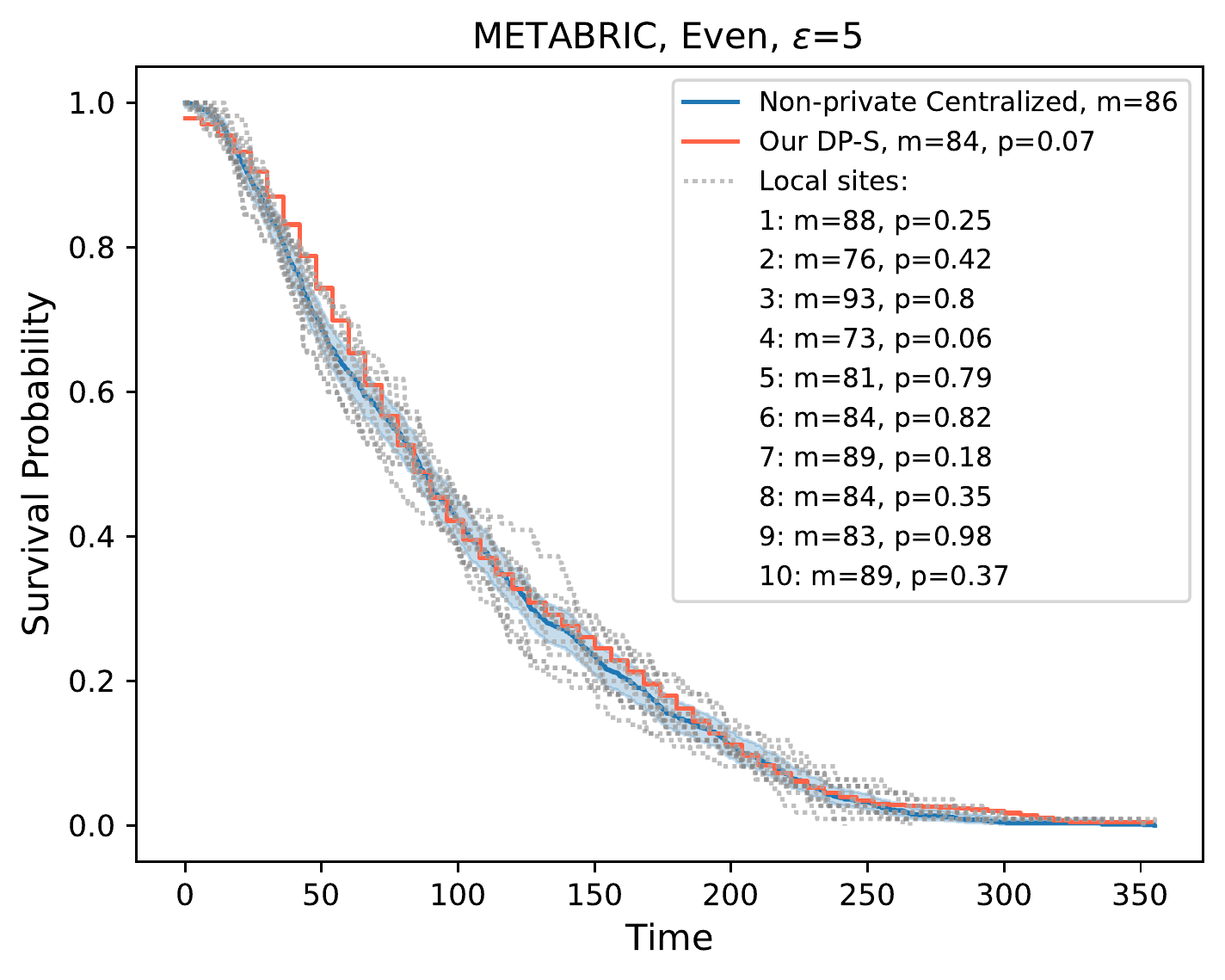}
\end{minipage}
\begin{minipage}[l]{0.67\columnwidth}
        \centering
        \includegraphics[width=\linewidth]{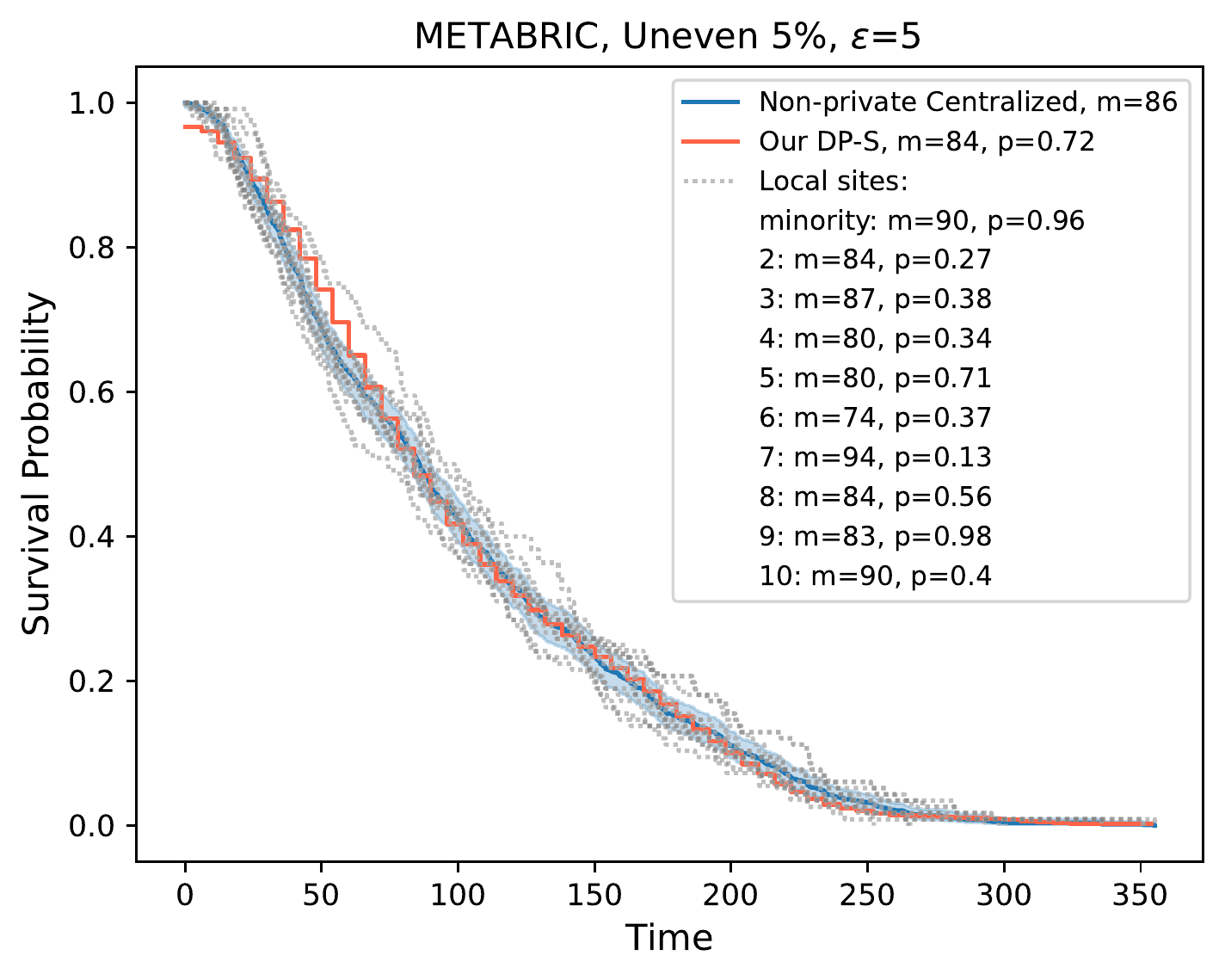}
\end{minipage}
\begin{minipage}[l]{0.67\columnwidth}
        \centering
        \includegraphics[width=\linewidth]{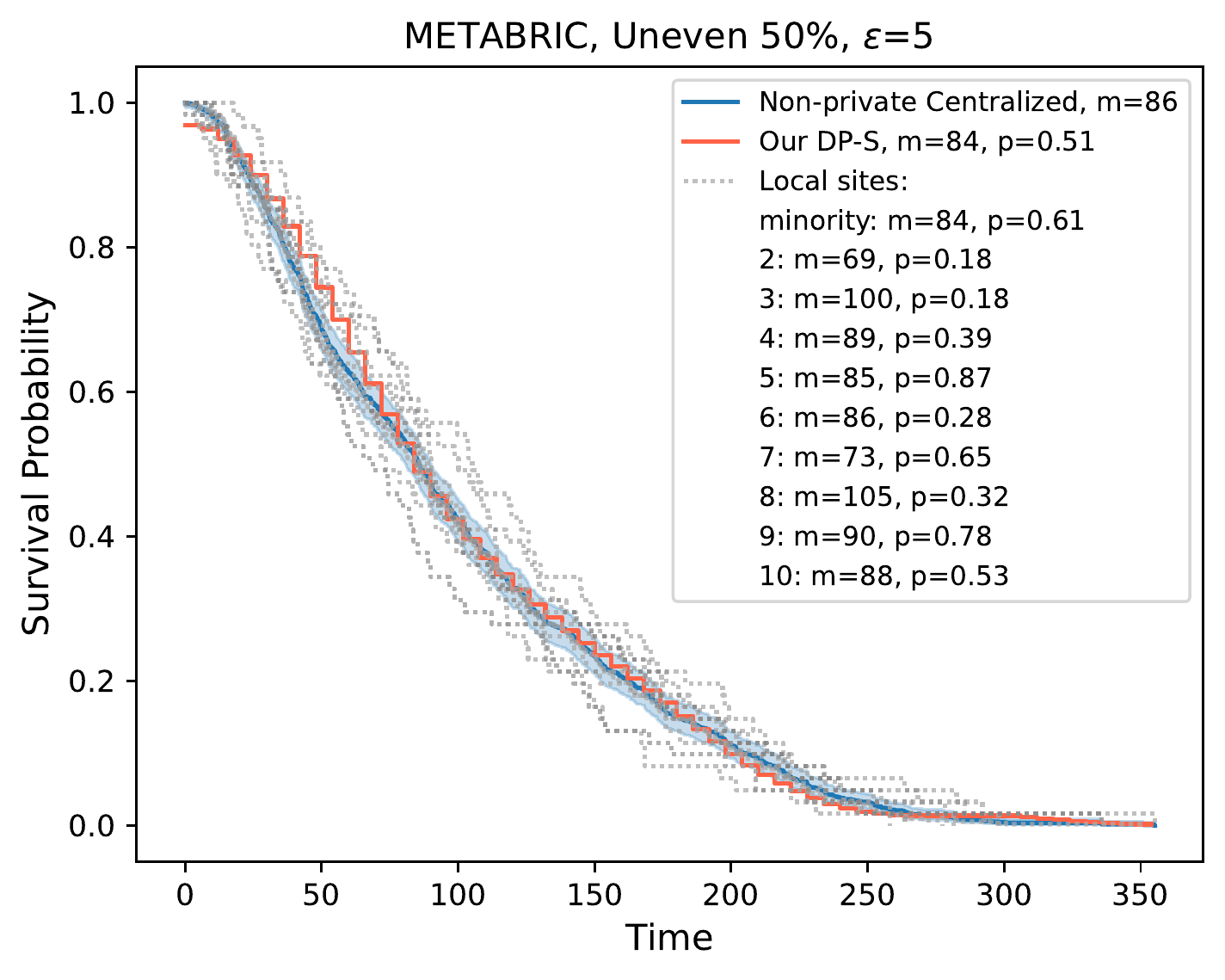}
\end{minipage}

\begin{minipage}[l]{0.67\columnwidth}
        \centering
        \includegraphics[width=\linewidth]{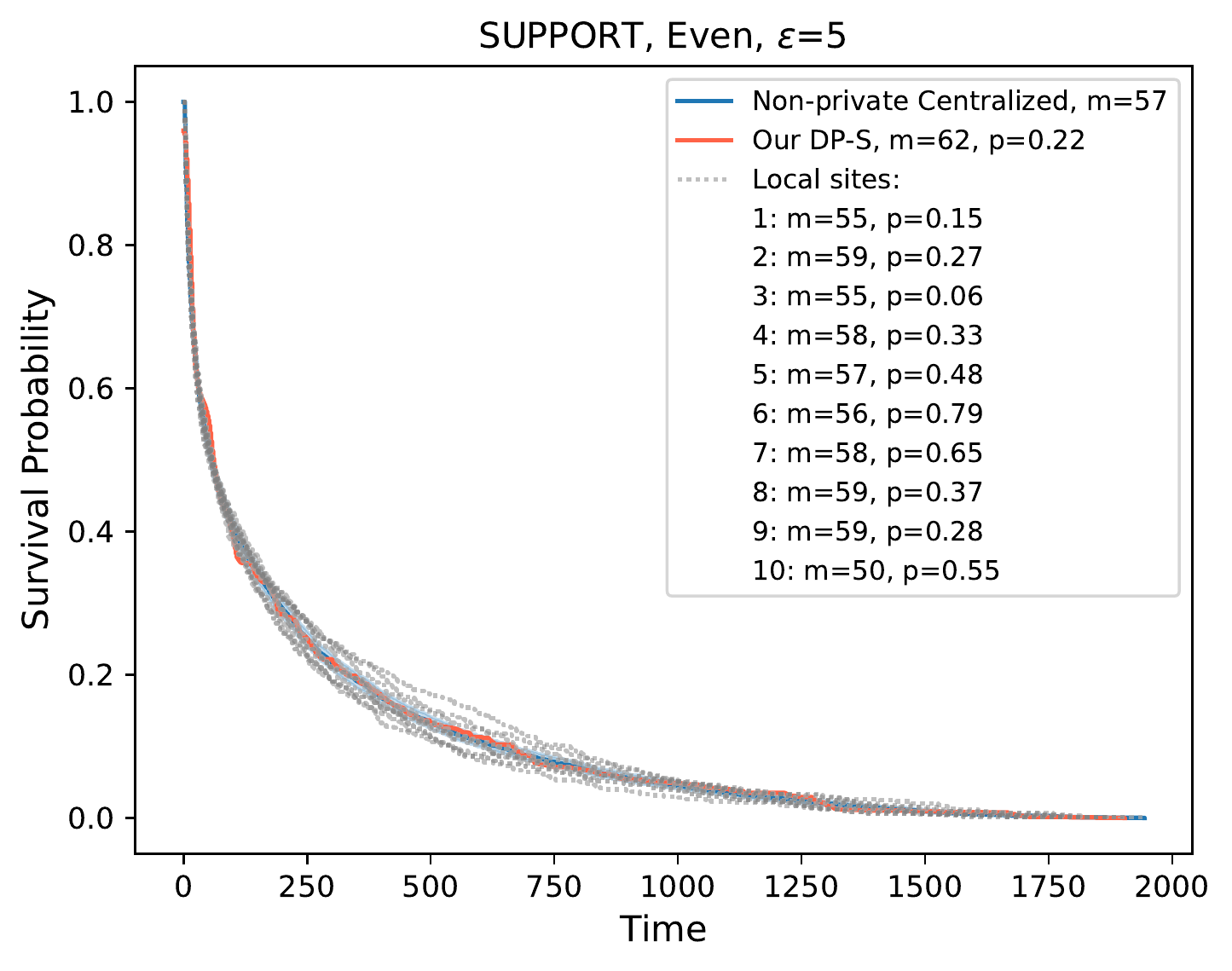}
\end{minipage}
\begin{minipage}[l]{0.67\columnwidth}
        \centering
        \includegraphics[width=\linewidth]{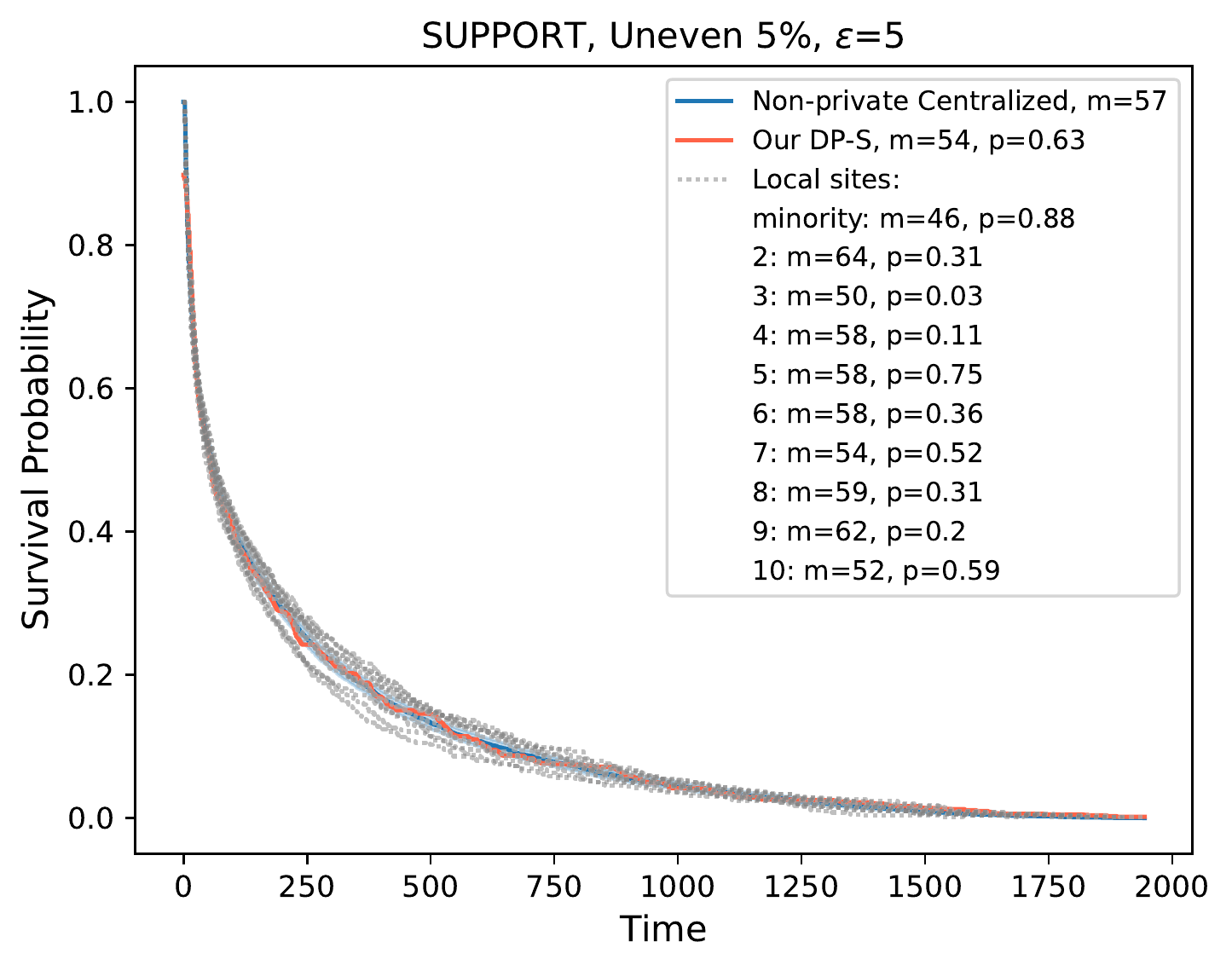}
\end{minipage}
\begin{minipage}[l]{0.67\columnwidth}
        \centering
        \includegraphics[width=\linewidth]{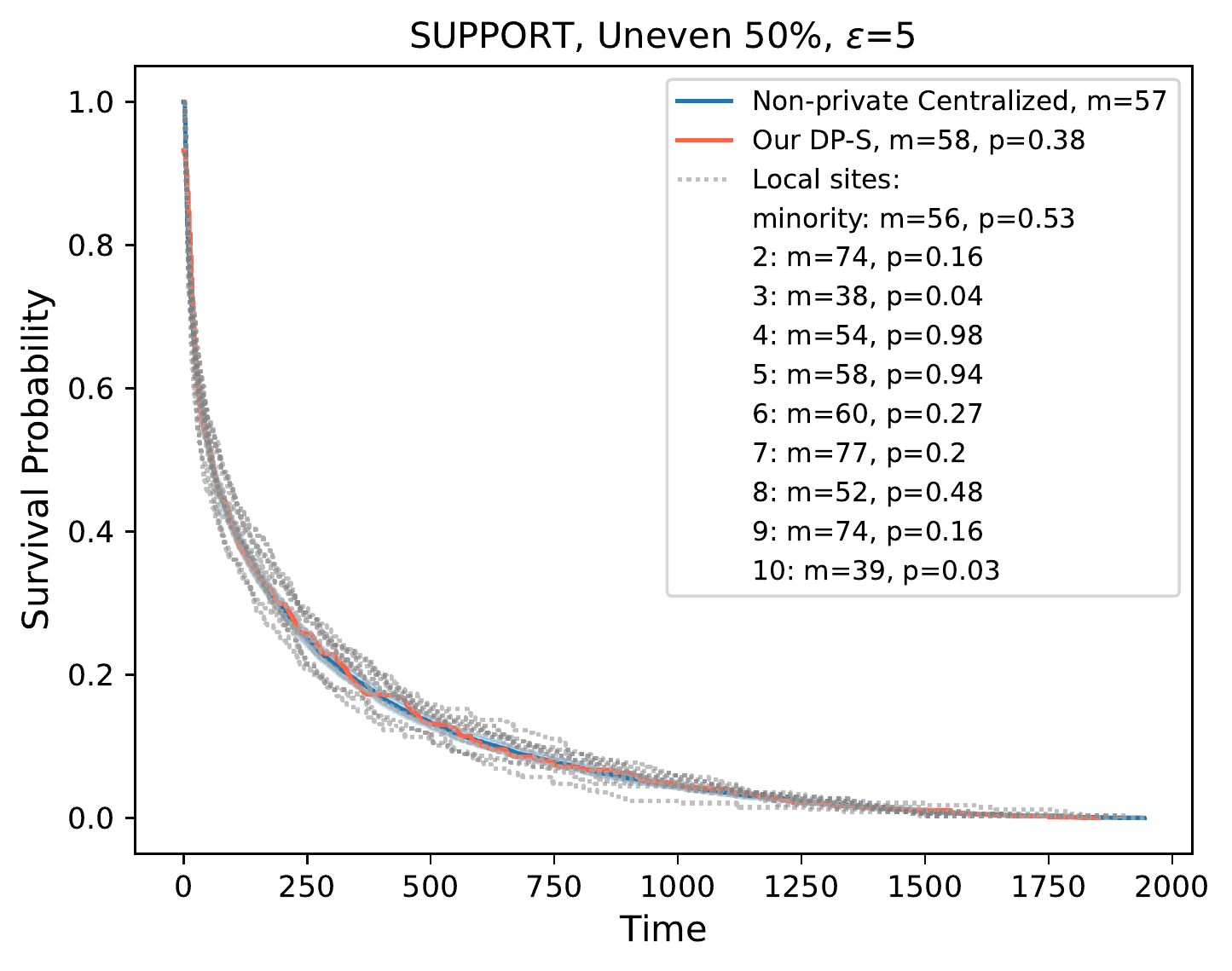}
\end{minipage}
\caption{Collaborative private estimation of Kaplan-Meier curves among 10 participating sites for 3 types of data splitting. The blue line shows the non-private centralized case and the red line shows our performance after constructing a joint private estimator. We also plot the locally estimated KM curve with dotted lines for all 10 sites. The median and the \pv to the non-DP centralized estimator is shown by m and p.}
\label{fig:collabe05}
\end{figure*}

\clearpage
\newpage
\subsection{Datasets with Censoring}
\label{sec:app-smooth}

In this section, we examine the effect of including censored points in the datasets. Figure~\ref{fig:centralized-smooth} shows the KM estimators for the 3 datasets. The complete dataset and its confidence interval are colored blue. The dataset which contains only the noncensored portion of the data points is colored orange. 
As explained in Section~\ref{sec:dpmdef}, \dpm is also defined for datasets containing censoring. The KM curve generated by \dpm in a centralized setting is shown in green. We also explained in Appendix~\ref{sec:appendix-proof-s}, that to make the general sensitivities that we find for the KM curve differentially private, we can consider the extreme case of $C=N$. The performance of our \dps method using this sensitivity is shown in red. 

As we had predicted, the noise of the DP mechanism, in the case of \dps with $C=N$ for sensitivity, renders the utility useless, especially for the value $\varepsilon = 10$. By increasing the value of $\varepsilon$ to much larger amounts, we start to see that the \dps curve becomes closer in behavior to the non-private curve.  

\begin{figure*}[!h]
\centering
\begin{minipage}[l]{0.68\columnwidth}
        \centering
        \includegraphics[width=\linewidth]{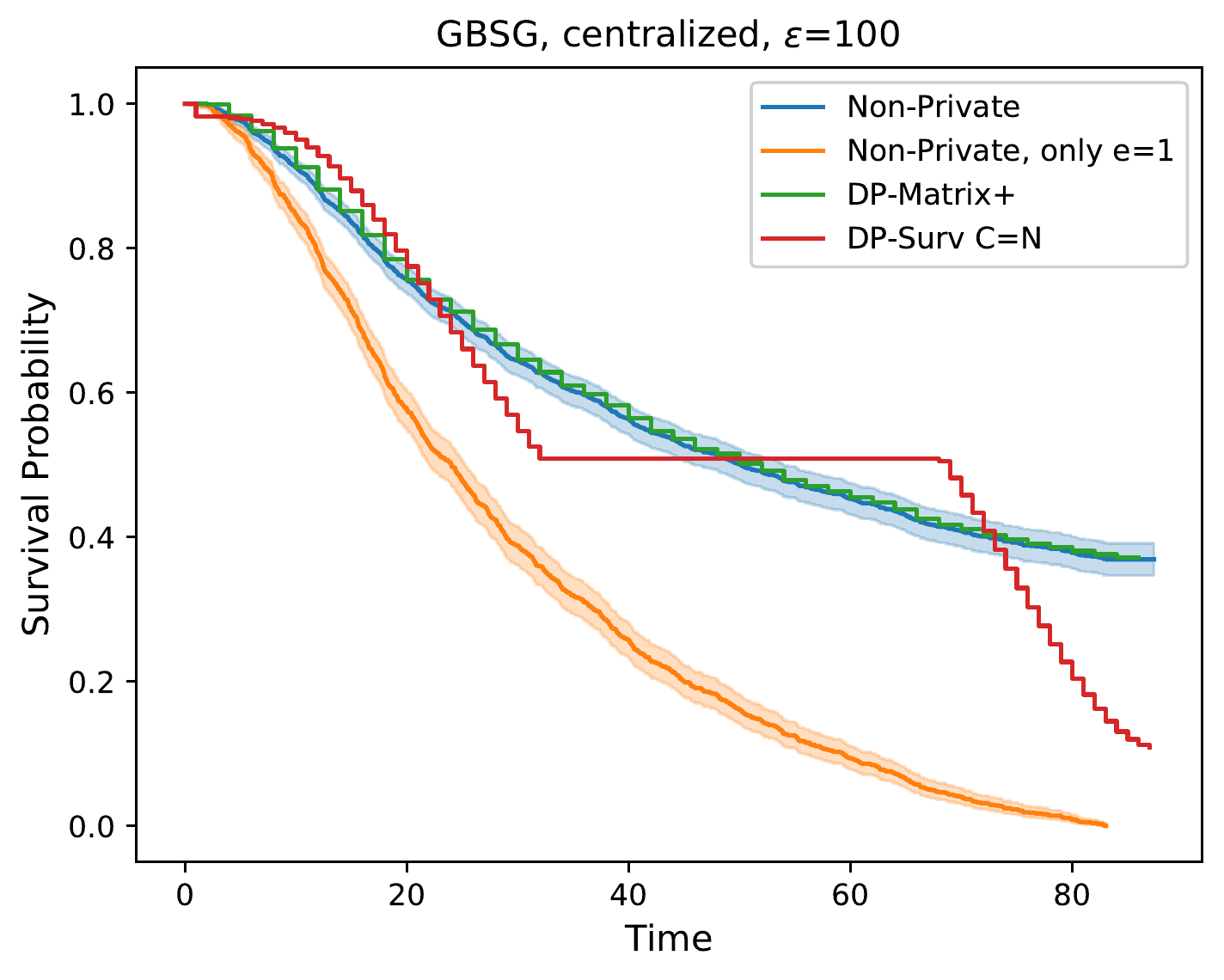}
\end{minipage}
\begin{minipage}[l]{0.68\columnwidth}
        \centering
        \includegraphics[width=\linewidth]{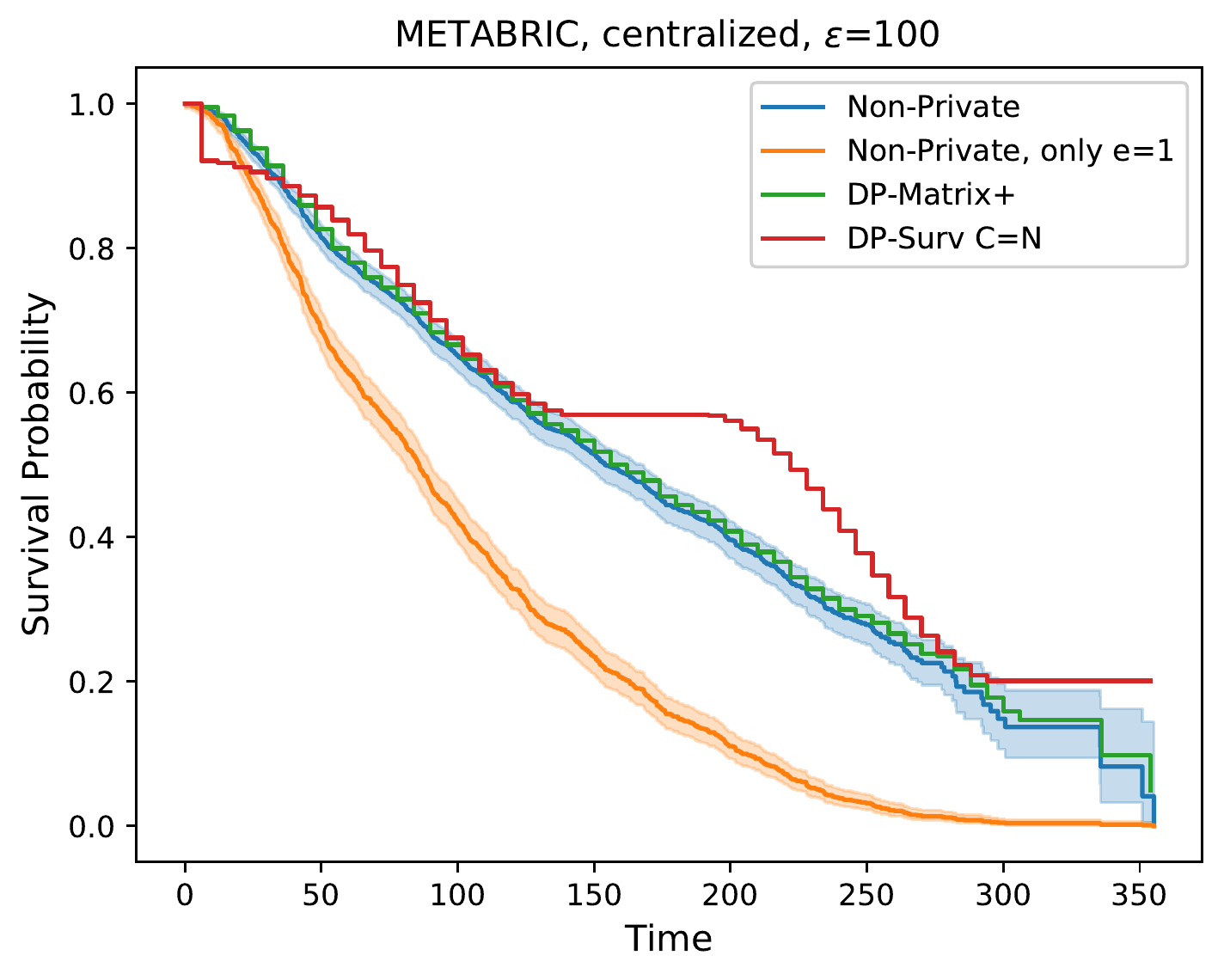}
\end{minipage}
\begin{minipage}[l]{0.68\columnwidth}
        \centering
        \includegraphics[width=\linewidth]{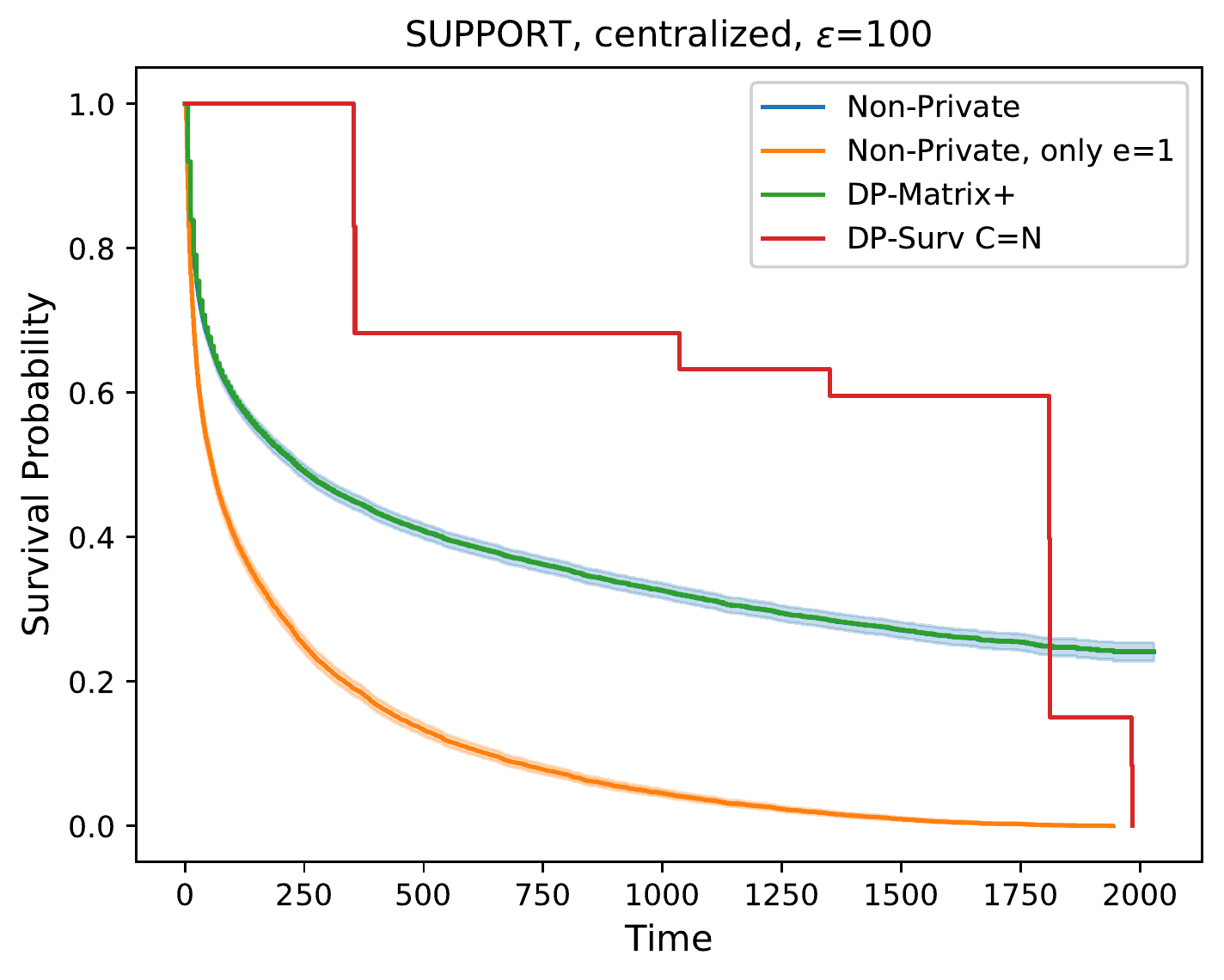}
\end{minipage}
\begin{minipage}[l]{0.68\columnwidth}
        \centering
        \includegraphics[width=\linewidth]{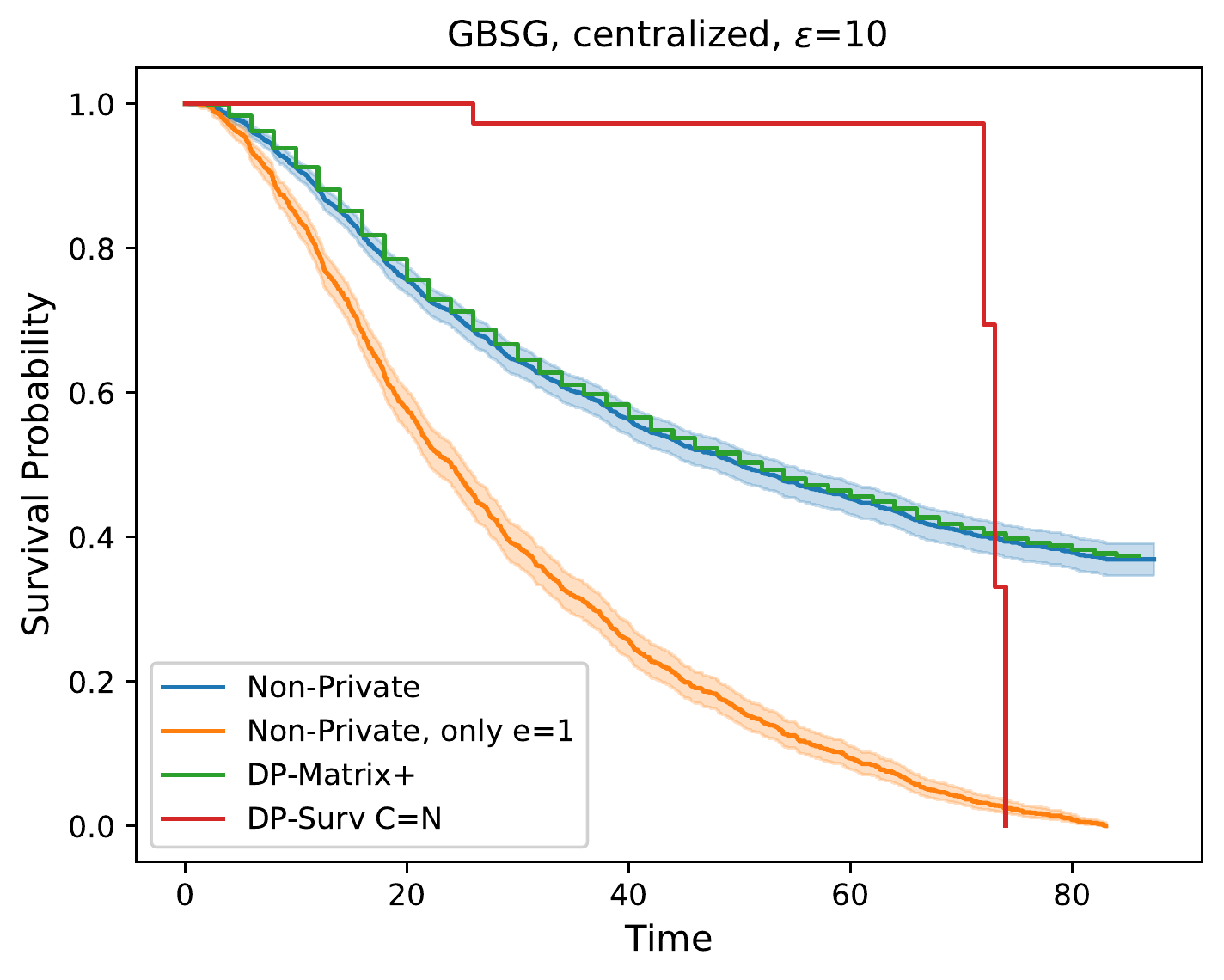}
\end{minipage}
\begin{minipage}[l]{0.68\columnwidth}
        \centering
        \includegraphics[width=\linewidth]{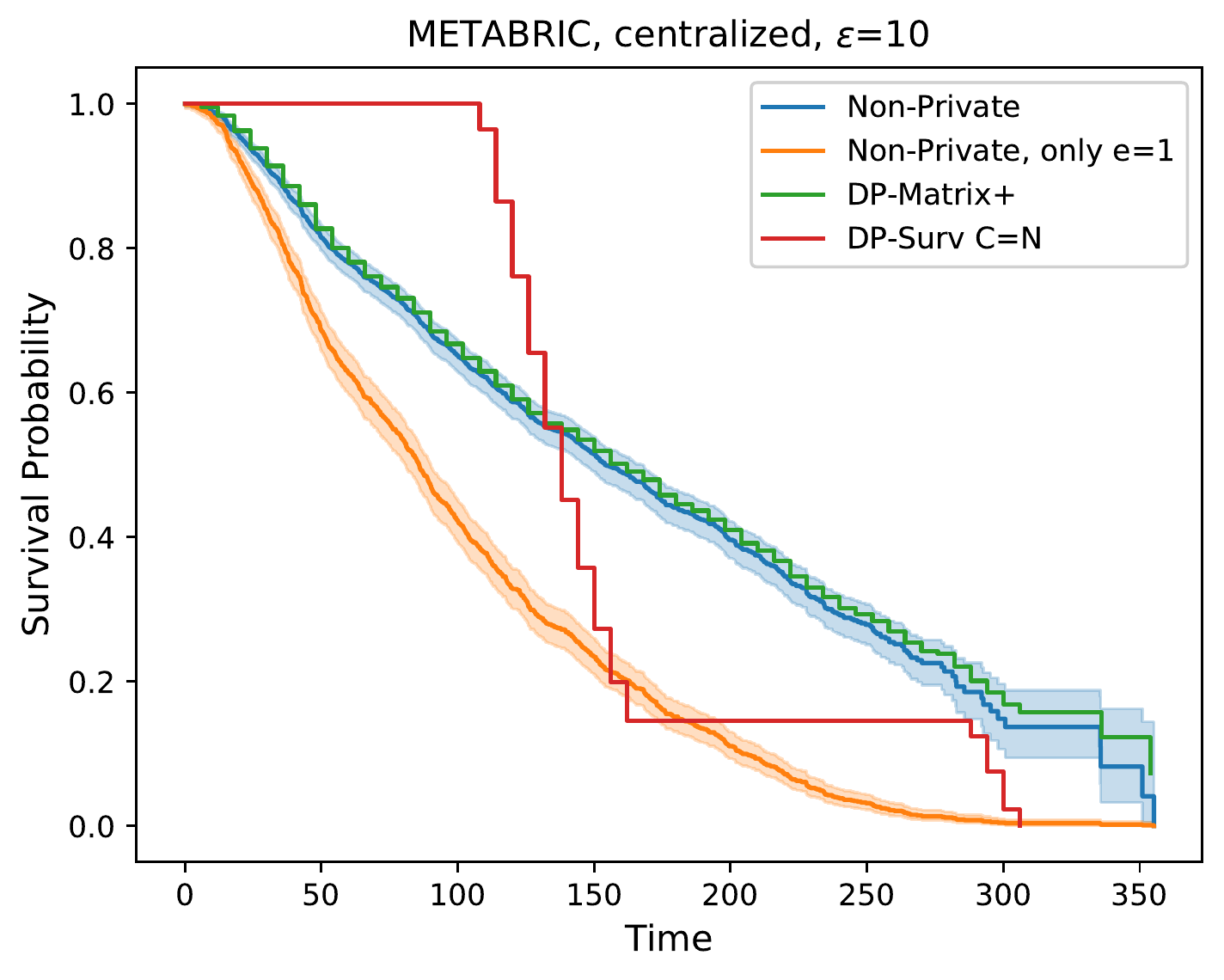}
\end{minipage}
\begin{minipage}[l]{0.68\columnwidth}
        \centering
        \includegraphics[width=\linewidth]{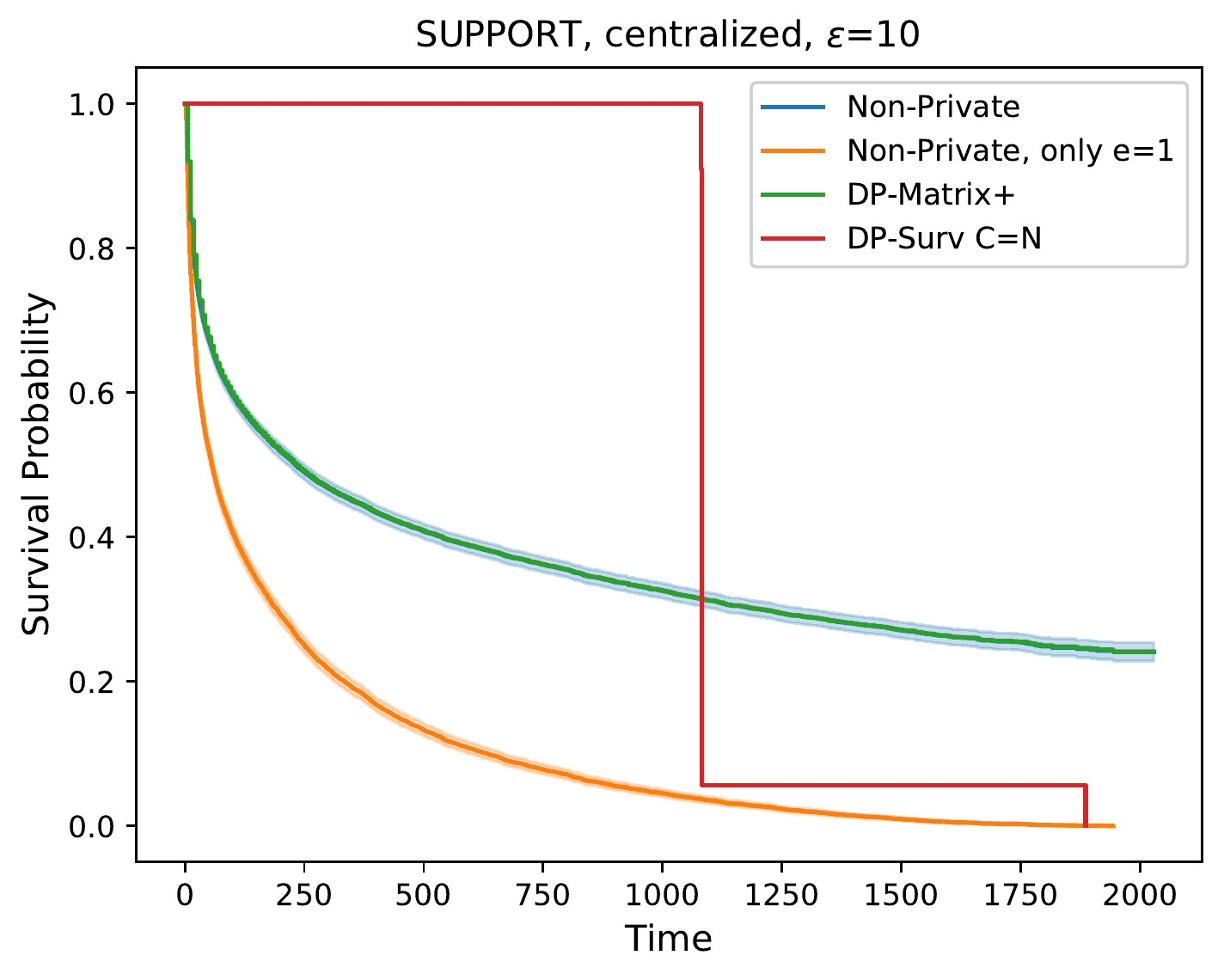}
\end{minipage}
\caption{Comparison of the complete dataset vs dataset only having the non-censored points with $e=1$ and also the centralized DP curves obtained by \dpm and \dps.}
\label{fig:centralized-smooth}
\end{figure*}

\end{document}